\newif\iflong\longtrue
\newif\ifwithappendix\withappendixtrue
\theoremstyle{plain}
\newtheorem{theorem}{Theorem}
\newtheorem{lemma}[theorem]{Lemma}
\newtheorem{proposition}[theorem]{Proposition}
\newtheorem*{proposition*}{Proposition}
\newtheorem*{claim*}{Claim}
\theoremstyle{definition}
\newtheorem{definition}[theorem]{Definition}
\newtheorem{example}[theorem]{Example}
\newtheorem*{problem*}{Problem}
\newtheorem*{question*}{Question}
\theoremstyle{remark}
\newtheorem{remark}[theorem]{Remark}
\newtheorem*{notation}{Notation}
\begin{document}

\title[Linear-Algebraic Models of Linear Logic as Categories of Modules over $\Sigma$-Semirings]{Linear-Algebraic Models of Linear Logic \\ as Categories of Modules over $\Sigma$-Semirings}

\author{Takeshi Tsukada}
\orcid{0000-0002-2824-8708}
\affiliation{
  \institution{Chiba University}
  \country{Japan}}
\email{tsukada@math.s.chiba-u.ac.jp}
\author{Kazuyuki Asada}
\orcid{0000-0001-8782-2119}
\affiliation{
  \institution{Tohoku University}
  \country{Japan}}
\email{asada@riec.tohoku.ac.jp}

\begin{abstract}
  A number of models of linear logic are based on or closely related to linear algebra, in the sense that morphisms are ``matrices'' over appropriate coefficient sets.
  Examples include models based on coherence spaces, finiteness spaces and probabilistic coherence spaces, as well as the relational and weighted relational models.
  This paper introduces a unified framework based on module theory, making the linear algebraic aspect of the above models more explicit.
  Specifically we consider modules over \emph{\( \Sigma \)-semirings} \( R \), which are ring-like structures with partially-defined countable sums, and show that morphisms in the above models are actually \( R \)-linear maps in the standard algebraic sense for appropriate \( R \).
  An advantage of our algebraic treatment is that the category of \( R \)-modules is locally presentable, from which it easily follows that this category becomes a model of intuitionistic linear logic with the cofree exponential.
  We then discuss constructions of classical models and show that the above-mentioned models are examples of our constructions.
\end{abstract}

\begin{CCSXML}
\end{CCSXML}

\keywords{linear logic, categorical semantics, Lafont category, locally presentable category, partial Horn theory, $\Sigma$-monoid}

\maketitle

\section{Introduction}\label{sec:intro}
A function \( f \) is \emph{linear} if
\begin{equation}
    f(x + y) ~=~ f(x) + f(y)
    \quad\mbox{and}\quad
    f(\alpha \cdot x) ~=~ \alpha \cdot f(x)
    \label{eq:intro:linearity}
\end{equation}
for every coefficient \( \alpha \) from an appropriate set.
Linearity is an important concept that appears frequently in the mathematical sciences.
In the context of mathematical logic, the linearity has been discovered by Girard in his influential paper~\cite{Girard1987}, which proposed \emph{linear logic}.

A number of important models of linear logic are closely related to linear algebra.
Examples include Girard's original model, known as the \emph{coherence space model}~\cite{Girard1987}, the \emph{relational model}, the \emph{finiteness space model}~\cite{Ehrhard2005a}, the \emph{probabilistic coherence space model}~\cite{Girard2004,Danos2011} and the \emph{weighted relational models}~\cite{Laird2013,Laird2016}.
In these models, an object has a specified ``basis'' and a morphism can be represented as a ``matrix''.

The goal of this paper is to provide a unified account of these models from the view point of \emph{abstract linear algebra}.
Given a field \( K \), an \emph{abstract} vector space \( V \) over \( K \) is an additive group together with an action \( k \cdot x \) of \( k \in K \) to \( x \in V \) and a \emph{\( K \)-linear map} \( f \colon V \longrightarrow W \) is a function that satisfies Equation~\ref{eq:intro:linearity}.
Many notions such as the dual space and the tensor product can be defined in this abstract setting and, when one chooses bases, the composition of linear maps can be related to the matrix calculus.
This paper realises the above-mentioned models of linear logic in some categories of abstract linear algebras, by introducing appropriate algebraic structures to their ``coefficients'', namely \emph{\( \Sigma \)-semirings}.

\subsection{Matrix models}
As mentioned above, in many models of linear logic, a morphism can be seen as a matrix.
The simplest example is the relational model \( \CRel \), in which an object is a set \( X \) and a morphism \( f \colon X \longrightarrow Y \) is a relation \( f \subseteq X \times Y \).
By identifying a relation \( f \subseteq X \times Y \) with its characteristic function \( X \times Y \longrightarrow \{0,1\} \), a morphism \( f \colon X \longrightarrow Y \) in \( \CRel \) can be seen as an \( (X \times Y) \)-matrix \( f = (f_{x,y})_{x \in X, y \in X} \) where \( f_{x,y} = 1 \) if \( (x,y) \in f \) and \( f_{x,y} = 0 \) if \( (x,y) \notin f \).
The composition of \( f \colon X \longrightarrow Y \) and \( g \colon Y \longrightarrow Z \) in \( \CRel \) coincides with the matrix composition in which the \( (x,z) \)-entry is given by \( \bigvee_{y \in Y} g_{y,z} \cdot f_{x,y} \).
Hence \( \CRel \) is the category of matrices over the boolean semiring \( B = \{ 0,1 \} \), where the sum is the join.

A characteristic feature of models of linear logic is that we need to deal with matrices of infinite dimensions, because the exponential space \( !X \) of a space \( X \) is often countably infinite dimensional even if \( X \) is finite dimensional.
Infinite dimensional matrices are hard to deal with since the composition involves infinite sums which do not necessarily converge.
This subtlety of infinite dimensional matrices is a motivation of Ehrhard's work on K\"othe sequence spaces~\cite{Ehrhard2002a} and finiteness spaces~\cite{Ehrhard2005a}.

A simple but powerful idea to overcome the problem of infinite sums is to use a ring with (totally-defined) infinite sums as a coefficient ring.
This idea leads to a family of \emph{weighted relational models}~\cite{Laird2013,Laird2016} parametrised by semirings \( \srig \) with infinite sums.\footnote{The requirements for the semiring \( \srig \) slightly varies in the papers.}
This is one of the most general frameworks to construct ``matrix models'' of linear logic.

However some important models, including the coherence space model~\cite{Girard1987}, the finiteness space model~\cite{Ehrhard2005a} and the probabilistic coherence space model~\cite{Girard2004,Danos2011}, are not covered by the theory of weighted relational models.
A standard approach to construct these models is based on \emph{gluing and orthogonality} studied by Hyland and Schalk~\cite{Hyland2003}.
Intuitively (a special case of) their construction considers as an object a triple \( (A, X, U) \) of a space \( A \), a subspace \( X \subseteq A \) and a subspace \( U \subseteq (X \multimap \bot) \) of the dual space of \( X \) that satisfies a certain condition; a morphism from \( (A, X, U) \) to \( (B, Y, V) \) is a morphism \( f \colon A \longrightarrow B \) that maps \( x \in X \) to \( f(x) \in Y \) and \( v \in V \) to \( v \circ f \in U \). 
Although this general construction explains many aspects of above mentioned models, there is still a subtlety (see Section~\ref{sec:gluing} and \cite[Example~66(3)]{Hyland2003}).

\subsection{Coherence spaces and vector spaces}
This paper develops an algebraic theory of models of linear logic based on Equation~\ref{eq:intro:linearity}.
To this end, we need to find an appropriate algebraic structure on objects of above models characterising morphisms of the models via Equation~\ref{eq:intro:linearity}.

Our starting point is the following classical characterisation of linear maps between coherence spaces (see Section~\ref{sec:monoid:example:coherence} for the definitions of notions and symbols).
\begin{proposition}[see, e.g., Definition~10 in \cite{Girard1995a}]\label{prop:intro:coh-linear}
    Let \( X \) and \( Y \) be coherence spaces.
    Then a set-theoretic function \( f \colon \Config(X) \longrightarrow \Config(Y) \) between cliques is a linear map if and only if
    it preserves the disjoint union, i.e., for every family \( (u_i)_{i \in I} \) of pairwise-disjoint cliques \( u_i \in \Config(X) \),
    \begin{equation}
        \textstyle
        f(\biguplus_{i \in I} u_i) \quad=\quad \biguplus_{i \in I} f(u_i).
        \nonumber
\end{equation}
    Here \( \biguplus \) is the disjoint union.
\qed
\end{proposition}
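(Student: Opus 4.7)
The plan is to characterise $f$ via its values on singletons and then reconstruct the standard trace description of a linear map between coherence spaces.

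For the forward direction ($\Rightarrow$), a linear map $f \colon X \multimap Y$ is classically specified by a trace relation $\phi \subseteq |X| \times |Y|$ and acts as $f(u) = \{ b \mid \exists a \in u.\ (a,b) \in \phi \}$. Given a pairwise-disjoint family $(u_i)_{i \in I}$ of cliques whose disjoint union $u$ is itself a clique, the inclusion $f(u) \supseteq \bigcup_i f(u_i)$ is immediate. Conversely, any $b \in f(u)$ arises from some $a \in u$, which lies in a unique $u_i$, putting $b \in f(u_i)$. Pairwise disjointness of the family $(f(u_i))_i$ is where the linearity condition on $\phi$ enters: if $b \in f(u_i) \cap f(u_j)$ came from $a \in u_i$ and $a' \in u_j$, then $a$ and $a'$ would be coherent (both lying in the clique $u$), and the strict-coherence condition on $\phi$ would force $a = a'$, contradicting $i \neq j$.

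For the converse ($\Leftarrow$), assume $f$ preserves disjoint unions. The empty family gives $f(\emptyset) = \emptyset$. Any clique $u \in \Config(X)$ is the disjoint union $\biguplus_{a \in u} \{a\}$ of its singleton tokens (each trivially a clique), so the hypothesis yields $f(u) = \biguplus_{a \in u} f(\{a\})$. Define
\begin{equation}
\phi \;:=\; \{ (a,b) \mid a \in |X|,\ b \in f(\{a\}) \} \;\subseteq\; |X| \times |Y|.
\nonumber
\end{equation}
The displayed identity shows that $\phi$ reproduces $f$ set-theoretically on every clique; it remains to verify that $\phi$ satisfies the two conditions required of a linear trace.

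The main obstacle is precisely this verification of the coherence conditions. For any coherent pair $a, a' \in |X|$, the set $\{a, a'\}$ is a clique and decomposes as $\{a\} \uplus \{a'\}$; applying the hypothesis gives $f(\{a, a'\}) = f(\{a\}) \uplus f(\{a'\})$, simultaneously a clique in $Y$ and a disjoint union. The clique property forces every $b \in f(\{a\})$ to be coherent with every $b' \in f(\{a'\})$, which is exactly the coherence-preservation condition for $\phi$. Disjointness of $f(\{a\})$ and $f(\{a'\})$ forces that if moreover $b = b'$ then $a = a'$, which is the strict-coherence condition distinguishing linear traces from merely stable ones. The degenerate case $a = a'$ reduces to $f(\{a\}) \in \Config(Y)$ being a clique, which holds by the codomain of $f$. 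Hence $\phi$ is a valid trace, and by the forward direction it induces precisely $f$.
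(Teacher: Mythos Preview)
Your proof is correct. The paper does not give its own proof of this proposition (it is cited as classical with a bare \qed), but essentially the same trace-from-singletons construction and case analysis on coherence reappear in Section~\ref{sec:monoid:example:coherence} when the paper shows that the embedding \( F \colon \CCoh \to \SMod[\CohRing] \) is fully faithful.
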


This proposition suggests that a coherence space \( X \) should be regarded as an algebra \( X = (\Config(X), \uplus) \) equipped with the disjoint union \( \uplus \).
This operation satisfies a number of desired lows (e.g.~the existence of the unit \( \emptyset \), the commutativity \( u \uplus u' = u' \uplus u \) and the associativity \( (u \uplus u') \uplus u'' = u \uplus (u' \uplus u'') \)), but it differs from operations of standard algebras in that the disjoint union \( \uplus \) is only partially defined even as a binary operation, i.e.~\( u \uplus u' \) may be undefined for some \( u, u' \in \Config(X) \).

Now the question is as follows: If a coherence space \( X = (\Config(X), \uplus) \) as an algebra equipped with disjoint union \( \uplus \) would be a ``vector space'', what should be the algebra of its coefficients?
The underlying set should be \( \{ 0,1 \} \) because a clique \( u \in \Config(X) \) is a subset \( u \subseteq |X| \) of a set \( |X| \), which can be identified with \( u \in \{0,1\}^{|X|} \).
The sum must be partial; otherwise the induced sum on \( \{0,1\}^{|X|} \) is total and differs from the disjoint union \( \uplus \).
Let us consider the sum on \( \{0,1\} \) defined by
\begin{gather}
    0 + 0 = 0
    \quad\;\;
    0 + 1 = 1 + 0 = 1
    \quad\;\;
    (1 + 1) \mbox{ is undefined}
    \label{eq:intro:coh-sum}
\end{gather}
and let \( \CohRing \) be the algebra with the underlying set \( \{0,1\} \) and the above sum.
Then the set \( \CohRing^{|X|} \) equipped with the coordinate-wise (or point-wise) sum is isomorphic to \( (\mathcal{P}(|X|), \uplus) \).
In particular, if \( u,u' \in \CohRing^{|X|} \) has a non-empty intersection, i.e.~\( u(x) = 1 = u'(x) \) for some \( x \in |X| \), then the sum \( u + u' \) is undefined since the \( x \)-coordinate \( u(x) + u'(x) \) is undefined.

This paper develops this idea in more details and shows that the category of coherence spaces is indeed the category of vector spaces over \( \CohRing \) in a certain sense.

\subsection{$\Sigma$-semirings, $\Sigma$-modules and linear maps}
We generalise the above observations, giving axioms for the above-mentioned algebraic structure and developing the module theory over it.

Recall that the algebraic structure that we seek should have the following features:
(1) It has infinite sums in order to deal with infinite dimensional matrices, and (2)
even finite sums may be undefined.
Several algebras with this kind of additive structure can be found in the literature~\cite{Haghverdi2001,Hines2007,Manes1986}, among which we use \emph{\( \Sigma \)-monoids}~\cite{Haghverdi2000,Hoshino2012b,Tsukada2018} in this paper after technical considerations.

Once the additive structure is established, an appropriate notion of corresponding ``commutative rings'' can be canonically defined as a commutative monoid in the category of the additive algebras.
We call this ring-like algebra a \emph{$\Sigma$-semiring}.
It has partial countable sums and \emph{total} binary products.

Examples of \( \Sigma \)-semirings contain the following.
\begin{itemize}
    \item \( \CohRing \) has the underlying set \( \{ 0,1 \} \), the sum defined in Equation~\eqref{eq:intro:coh-sum}, and the standard product of integers.
\item \( \FinRing \) has the underlying set \( \{ 0,1 \} \) and the standard product of integers.
    Every finite sum \( 1 + 1 + \cdots + 1 \) of \( 1 \) converges to \( 1 \) but the countable sum \( 1 + 1 + \cdots \) of \( 1 \) is undefined.
    \item \( [0,1] \) has the underlying set \( \{ x \in \mathbb{R} \mid 0 \le x \le 1 \} \).
    The sum and product are that of real numbers, except that the sum is undefined if \( \sum_i r_i = \infty \) or \( \sum_i r_i > 1 \).
\end{itemize}
\( \Sigma \)-semirings \( \CohRing \),
\( \FinRing \) and \( [0,1] \) are coefficient semirings for coherence spaces,
finiteness spaces~\cite{Ehrhard2005a} and probabilistic coherence spaces~\cite{Girard2004,Danos2011}, respectively.

Given a \( \Sigma \)-semiring \( \srig \), the category \( \SMod[\srig] \) of \( \srig \)-modules and \( \srig \)-linear maps can be defined in a straightforward way, using (a partial and infinitely variant of) Equation~\ref{eq:intro:linearity}.
We prove that \( \SMod[\srig] \) is a model of intuitionistic linear logic with the cofree exponential modality.
The proof of the existence of the cofree exponential is surprisingly simple, thanks to our algebraic treatment: we just syntactically examine the axiom for \( \srig \)-modules, ensuring that it belongs to the class known as \emph{partial Horn theory}~\cite{Palmgren2007}.
Then it follows that the category \( \SMod[\srig] \) is \emph{locally presentable}~(cf.~\cite{Adamek1994}), and every category that is symmetric monoidal closed and locally presentable is known to have the cofree exponetial~\cite{Barr1991,Porst2008}.

\subsection{Constructing classical models}
The category \( \SMod[\srig] \) of \( \srig \)-modules is a model of intuitionistic logic and not a model of classical logic, i.e.~the double negation \( \smod^{\bot\bot} \defe ((\smod \multimap \srig) \multimap \srig) \) of an \( \srig \)-module \( \smod \in \SMod[\srig] \) is not necessarily isomorphic to \( \smod \).\footnote{Actually, since \( \SMod[\srig] \) is locally presentable, it cannot be a model of classical logic: locally presentable categories are not self-dual except for trivial counterexamples~\cite{Adamek1994}.}

A standard approach to construct a model of classical logic is to restrict the dimension.
For example, the category \( \SVec[\mathbb{R}] \) of real vector spaces is not a classical model but its subcategory of finite dimensional vector spaces is.
The corresponding notion in our setting should be modules with countable bases since modules in our setting have countable sums.
We introduce two notions of basis of an \( \srig \)-module and show that the subcategories of countable dimensional \( \srig \)-modules are models of classical logic.
Many known models of classical linear logic, such as models based on countable coherence spaces, countable finiteness spaces and
countable probabilistic coherence spaces, are equivalent to the countable-dimensional \( \srig \)-module models for appropriate \( \srig \).

\subsection{Related work}
Some closely related work has already discussed above.

A pioneering work on the infinite algebra in the context of linear logic is Lamarche~\cite{Lamarche1992}.
Given a ring-like algebra \( R \) with totally-defined infinite sum, Lamarche studied the free \( R \)-algebra generated by a given set \( X \), which is the algebra of formal power series over \( X \), and the opposite of the category of free \( R \)-algebras, which is cartesian closed.

Algebras with partial infinite sums have studied in \cite{Haghverdi2000} in the context of \emph{geometry of interaction}~\cite{Girard1989a}, in an attempt to develop a mathematical background of the \emph{execution formula}.
Although both \cite{Haghverdi2000} and our paper studies linear logic using partial algebras, the approaches seem fairly different and the connection has not been understood well.

In the context of program semantics, a technique to approximate a program as a formal countable sum of simpler expressions, called the \emph{Taylor expansion}~\cite{Ehrhard2008}, has been an attractive topic of research.
The method of relating formal sums to semantic sums has proved to be useful in the analysis of programs~\cite{Laird2013}, in particular in the construction of a fully abstract model of probabilistic programs~\cite{Danos2011,Ehrhard2014}, and there is an attempt~\cite{Tsukada2017,Tsukada2018} to generalise this approach.
We expect that our model would have a sufficient structure to interpret Taylor expansions of programs, but much remains to be done.

As for the construction of exponentials, the approach used in this paper based on locally presentable categories~\cite{Barr1991,Porst2008} does not seem popular in the community.
A major approach is based on countable biproducts, used in \cite{Laird2013,Laird2016,Tsukada2018} for example.
In the absence of countable biproducts, a standard remedy was given by Melli{\`e}s et al.~\cite{Mellies2009c,Mellies2018a}.
It is used to prove that the exponential modality of probabilistic coherence spaces given in \cite{Danos2011} is actually free~\cite{Abou-Saleh2013}.
The relationship of our construction to \cite{Mellies2009c,Mellies2018a} is currently unclear.
We think that their approach cannot be applicable to general modules, since their approach requires the preservation of certain limits by the tensor product.
This condition can be seen as a kind of flatness, which general modules does not usually have.
It might be the case that modules with bases studied in Section~\ref{sec:classic} satisfies the criteria of \cite{Mellies2009c,Mellies2018a}.

 \section{Partial Algebra Models of Linear Logic}\label{sec:partial}
This section presents a general theory of partial algebras of infinite arity and their relevance to linear logic.
To completely understand the technical materials in this section, some familiarity with the theory of \emph{locally presentable categories} (see a standard textbook~\cite{Adamek1994}) would be required.
However, once the reader accepts some of the results (including Theorems~\ref{thm:pre:horn-presentable} and \ref{thm:pre:presentable-exponential}), the rest of the paper should be understandable without the knowledge of the theory.

Let \( \kappa \) be a regular cardinal; this paper mainly considers the case that \( \kappa = \aleph_1 \).

\begin{notation}
    For expressions \( e \) and \( e' \) possibly containing partial operations, \( e = e' \) means that \emph{both \( e \) and \( e' \) are defined} and their values coincide.
    This interpretation allows us to describe the following predicates using only the equality and logical symbols:
    \begin{itemize}
        \item \( \IsDefined{e} \defp (e = e) \) meaning that \( e \) is defined;
        \item \( (e \Kle e') \defp (\IsDefined{e} \Longrightarrow e = e') \) meaning that, if \( e \) is defined, so is \( e' \) and their values coincide; and
        \item \( (e \Keq e') \defp ((e \Kle e') \wedge (e' \Kle e)) \) meaning that, if one side is defined, so is the other side and their values coincide.
    \end{itemize}
    \( \IsUndef{e} \) is defined as \( \neg (\IsDef{e}) \), meaning that \( e \) is undefined.
    We sometimes use a special expression \( \Undef \), which is undefined.
    Then \( \IsUndef{e} \) is equivalent to \( e \Keq \Undef \).
\end{notation}

\subsection{Partial Algebras}
Let \( S \) be a set of \emph{basic sorts}.
An \emph{\( S \)-sorted signature of \( \kappa \)-ary algebra} is a set \( \Sigma \) of \emph{operation symbols} together with a function that assigns to each operation symbol \( \sigma \) a sort of the form \( \prod_{i < \alpha} s_i \longrightarrow s \) where \( \alpha < \kappa \) and \( s_i, s \in S \).
Let \( X \) be an \( S \)-sorted set \( (X_s)_{s \in S} \) of variables, which we usually write as \( \{ x \colon s \mid s \in S, x \in X_s \} \).
The \( S \)-sorted set \( (\Term^s_{\Sigma}(X))_{s \in S} \) of \emph{terms} over \( X \) is defined inductively as follows: if \( x \in X_s \), then \( x \in \Term^s_\Sigma(X) \); if \( \sigma \) is an operation symbol of sort \( \prod_{i < \alpha} s_i \longrightarrow s \) and \( (t_i)_{i < \alpha} \) is a family of terms such that \( t_i \in \Term^{s_i}_\Sigma(X) \), then \( \sigma(t_i)_{i < \alpha} \in \Term^{s}_\Sigma(X) \).
If \( \sigma \) is nullary, we shall often abbreviate \( \sigma() \) as \( \sigma \).

A \emph{partial algebra} \( A \) of the \( S \)-sorted signature \( \Sigma \) is an \( S \)-sorted set \( (A_s)_{s \in S} \) together with an assignment to each operation \( \sigma \) of sort \( \prod_{i < \alpha} s_i \longrightarrow s \) a partial function \( \sigma_A \colon \prod_{i < \alpha} A_{s_i} \rightharpoonup A_s \).
For an \( S \)-sorted set \( X = (X_s)_{s \in S} \) of variables, a \emph{valuation} over \( X \) is a family \( \varrho = (\varrho_s)_{s \in S} \) of functions \( \rho_s \colon X_s \longrightarrow A_s \).
A \emph{homomorphism} from a partial algebra \( A \) to a partial algebra \( B \) is a family \( h = (h_s)_{s \in S} \) of total functions \( h_s \colon A_s \longrightarrow B_s \) that preserves operations in the following sense: for every \( \sigma \) of sort \( \prod_{i < \alpha} s_i \longrightarrow s \),
\begin{equation*}
    h_s(\sigma_A(a_i)_{i < \alpha}) \qquad\Kle\qquad \sigma_B(h_{s_i}(a_i))_{i < \alpha}.
\end{equation*}
\( \PAlg(\Sigma) \) denotes the category of partial algebras of the signature \( \Sigma \) and homomorphisms.

\subsection{Horn Theories}
Let \( \Sigma \) be an \( S \)-sorted signature of \( \kappa \)-ary algebra.
We introduce a class of formulas describing properties of partial algebras of the signature \( \Sigma \).

A \emph{Horn formula} of arity \( \kappa \) is a formula of the form
\begin{equation*}
    \forall (x_i \colon s_i)_{i < \alpha}.
    \qquad
    \bigwedge_{j < \beta} t_j = t'_j
    \quad\Longrightarrow\quad
    t = t'
\end{equation*}
where \( \alpha,\beta < \kappa \), \( (x_i)_\alpha \) is a family of pairwise-distinct variable, \( s_i \in S \) and \( t_j, t_j', t, t' \in \bigcup_{s \in S} \Term^s_\Sigma(\{ x_i \colon s_i \mid i < \alpha \}) \).
A \emph{Horn theory} is a collection of Horn formulas.

The meaning of the formula should be obvious: let us recall that \( e = e' \) for partially defined expressions \( e, e' \) means that \emph{both \( e \) and \( e' \) are defined} and their values coincide.
Given a partial algebra \( A \) and a Horn theory \( \Th \), we write \( A \models \Th \) to mean that \( A \) satisfies all formulas in \( \Th \).
The formal definition of the judgement \( A \models \Th \), as well as a judgement \( A, \varrho \models \varphi \) for a formula with free variables, should be obvious and omitted.

The category of models of a Horn theory, i.e.~the full subcategory \( \PAlg(\Sigma,\Th) \) of \( \PAlg(\Sigma) \) consisting of algebras \( A \) such that \( A \models \Th \), is a locally presentable category.
This fact has been proved in \cite{Palmgren2007} for the finitely case, and this paper uses a slight generalisation, namely the result for \( \kappa = \aleph_1 \).\footnote{\cite{Palmgren2007} proves the finitery case, where \( \kappa = \aleph_0 \).  It is natural to expect that their argument generalises to arbitrary regular cardinal \( \kappa \), although we have not yet confirmed it.}
Here we give a different proof, which is syntactic and applicable to arbitrary regular cardinal \( \kappa \).
\begin{theorem}\label{thm:pre:horn-presentable}
    Let \( \Sigma \) be an \( S \)-sorted signature of \( \kappa \)-ary algebra and \( \Th \) be a \( \kappa \)-ary Horn theory.
    The category \( \PAlg(\Sigma, \Th) \) is locally \( \kappa \)-presentable.
\end{theorem}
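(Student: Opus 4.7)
The plan is to exhibit $\PAlg(\Sigma, \Th)$ as the category of $\mathrm{Set}$-valued models of a $\kappa$-ary essentially algebraic theory --- equivalently, the category of $\kappa$-continuous functors $\mathcal{C} \to \mathrm{Set}$ for a small $\kappa$-complete category $\mathcal{C}$. By the Gabriel--Ulmer-style characterisation of locally $\kappa$-presentable categories (cf.~\cite{Adamek1994}), such functor categories are exactly the locally $\kappa$-presentable ones, so it suffices to construct such a $\mathcal{C}$ and establish the equivalence.

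The syntactic category $\mathcal{C}_{\Sigma,\Th}$ is assembled as follows. For each $s \in S$ introduce a generating object $[s]$. For each $\alpha$-ary context $\Gamma = (x_i : s_i)_{i<\alpha}$ with $\alpha < \kappa$, form $[\Gamma] = \prod_{i<\alpha} [s_i]$ using the $\kappa$-small product. By recursion on terms $t \in \Term^s_\Sigma(\Gamma)$, define a subobject $D_t \hookrightarrow [\Gamma]$ (``the object of valuations at which $t$ is defined'') together with an interpretation morphism $[t] : D_t \to [s]$. The base case is $D_{x_i} = [\Gamma]$ with $[x_i] = \pi_i$; for $\sigma(t_i)_{i<\alpha'}$ with $\alpha' < \kappa$, let $D_{\sigma(t_i)_i}$ be the $\kappa$-small limit of the $D_{t_i}$ together with the pullback requirement that the tuple of values lies in $D_\sigma$, and let $[\sigma(t_i)_i]$ be the induced composite. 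For each Horn axiom $\bigwedge_{j<\beta} t_j = t'_j \Rightarrow t = t'$ with $\beta < \kappa$, form the $\kappa$-small limit $E \hookrightarrow [\Gamma]$ of the pullbacks $D_{t_j} \times_{[s_j]} D_{t'_j}$, and impose as a syntactic relation that $E$ factors through both $D_t$ and $D_{t'}$ with equal composites to $[s]$.

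The equivalence with $\PAlg(\Sigma, \Th)$ is then a translation exercise. A $\kappa$-continuous $F : \mathcal{C}_{\Sigma,\Th} \to \mathrm{Set}$ sends $[s]$ to a carrier $A_s := F([s])$, sends $D_\sigma$ to a subset of $\prod_i A_{s_i}$ (since limits in $\mathrm{Set}$ are concrete), and sends $[\sigma]$ to a total map, thereby defining the partial operation $\sigma_A$. The relations imposed in the previous paragraph translate exactly to $A \models \Th$. Homomorphisms in $\PAlg(\Sigma,\Th)$ correspond to natural transformations: the Kleene-inequality condition $h_s(\sigma_A(\vec{a})) \Kle \sigma_B(h_{s_i}(a_i))$ is encoded by naturality of $h$ at the monomorphism $D_\sigma \hookrightarrow \prod_i [s_i]$. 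Combining the equivalence with the characterisation of locally $\kappa$-presentable categories yields the result.

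The main obstacles are two: (i) verifying that each $D_t$ genuinely belongs to the small $\kappa$-complete classifying category, which uses that each term has $<\kappa$ subterms so that its construction is a well-founded tree of $\kappa$-small limits, together with regularity of $\kappa$ to keep the iterated limits $\kappa$-small; and (ii) correctly encoding Kleene equality, so that ``both sides defined and equal'' becomes factorisation through the pullback of the two definedness subobjects rather than merely an equalizer, ensuring that the Kleene-inequality condition for homomorphisms transports faithfully between the two descriptions of the category.
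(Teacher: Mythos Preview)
Your proposal is correct in outline, but it takes a genuinely different route from the paper's proof. The paper proceeds by a purely \emph{syntactic} reduction: it first normalises every Horn formula into a ``basic'' form (premises are equations of the shape $x=\sigma(\vec{y})$, conclusion is either a variable equation or a definedness assertion), and then translates the partial Horn theory $(\Sigma,\Th)$ into an essentially algebraic theory $(\Sigma',\Th')$ in the sense of \cite[Definition~3.34]{Adamek1994}. Concretely, for each operation $\sigma\colon\prod_{i<\alpha}s_i\to s$ it introduces a fresh sort $\mathrm{dom}_\sigma$, a total operation $\hat\sigma\colon\mathrm{dom}_\sigma\to s$, total projections $\pi^\sigma_i\colon\mathrm{dom}_\sigma\to s_i$, and, for each axiom $\psi$ whose conclusion is $\IsDef{\sigma(\vec{x})}$, a partial tupling $\tau_\psi\colon\prod_i s_i\to\mathrm{dom}_\sigma$ whose domain is characterised (using only total operations) by the translated premises of $\psi$. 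After checking that the two categories of models are equivalent, the result follows from the standard fact that models of a $\kappa$-ary essentially algebraic theory form a locally $\kappa$-presentable category.

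By contrast, you build the classifying $\kappa$-complete category directly and invoke Gabriel--Ulmer duality. This is conceptually cleaner and makes the role of $\kappa$-small limits explicit, but it leaves more to verify: you must confirm that the free-$\kappa$-completion-with-relations construction yields a \emph{small} category, and that imposing ``$E$ factors through $D_t$ and $D_{t'}$ with equal composites'' can be done coherently (this is effectively a quotient of a limit sketch, and one should check it stays within the $\kappa$-ary fragment). The paper's translation sidesteps these size and coherence issues by staying entirely within first-order syntax and deferring to the off-the-shelf result on essentially algebraic theories; your approach unfolds one layer of that result by hand. Both buy the same theorem, but the paper's route is shorter to make rigorous, while yours exposes more of the underlying categorical machinery.
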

\begin{proof}(Sketch)
    We reduce a partial Horn theory to an essentially algebraic theory, without changing the category of models (up to equivalence).
The idea is to introduce a new sort \( \mathit{dom}_{\sigma} \) for each operator \( \sigma \in \Sigma \) of sort \( \prod_{i < \alpha} s_i \longrightarrow s \) and to regard \( \sigma \) as a total operator of sort \( \mathit{dom}_{\sigma} \longrightarrow s \).
    The new signature \( \Sigma' \) has total operators \( \pi_{\sigma,i} \colon \mathit{dom}_{\sigma} \longrightarrow s_i \) and the new theory \( \Th' \) has \( \forall x,y \colon \mathit{dom}_\sigma. \bigwedge_{i < \alpha} \pi_{\sigma,i}(x) = \pi_{\sigma,i}(y) \Longrightarrow x = y \), which requires that \( A'_{\mathit{dom}_{\sigma}} \) is (identified with) a subset of \( \prod_{i < \alpha} A'_{s_i} \) for every model \( A' \) of \( (\Sigma',\Th') \).
    If the original theory \( \Th \) requires \( \IsDef{\sigma(\vec{x})} \) under a certain condition, say \( \varphi(\vec{x}) \), we introduce a new partial operator \( \tau_{\varphi,\sigma} \colon \prod_{i < \alpha} s_i \longrightarrow \mathit{dom}_{\sigma} \): its domain is characterised by \( \varphi(\vec{x}) \), and it must satisfy \( \forall \vec{x}. \IsDef{\tau_{\varphi,\sigma}(\vec{x})} \Longrightarrow \pi_i(\tau_{\varphi,\sigma}(\vec{x})) = x_i \), where \( x_i \) is the \( i \)-th variable in \( \vec{x} \).
\end{proof}

\subsection{Relevance to Lafont Models of Linear Logic}
We are interested in locally presentable categories because it gives us a way to construct a \emph{linear exponential comonad} almost for free.
Constructions of models of linear logic using locally presentability and/or accessibility can be date back to Barr's work~\cite{Barr1991}.
The following result can be found in \cite{Porst2008}.
\begin{theorem}\label{thm:pre:presentable-exponential}
    Let \( (\cat, \otimes) \) be a symmetric monoidal closed category and suppose that \( \cat \) is locally presentable.
    Then the forgetful functor \( \Comon(\cat, \otimes) \longrightarrow \cat \) from the category of cocommutative comonoids over \( (\cat, \otimes) \) has a right adjoint.
\end{theorem}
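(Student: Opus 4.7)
The plan is to invoke the adjoint functor theorem for locally presentable categories, which says that any functor between locally presentable categories preserving small colimits admits a right adjoint. The proof thus reduces to two claims: (a) the forgetful functor $U \colon \Comon(\cat, \otimes) \longrightarrow \cat$ preserves all small colimits, and (b) $\Comon(\cat, \otimes)$ is itself locally presentable.

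For (a) I would exploit the fact that symmetric monoidal closedness makes $(-) \otimes X$ a left adjoint and hence cocontinuous in each variable. Given a diagram $D \colon \mathcal{J} \longrightarrow \Comon(\cat, \otimes)$ with underlying objects $C_j := U D_j$, counits $\varepsilon_j$, and comultiplications $\delta_j$, write $C := \operatorname{colim}_j C_j$ with coprojections $\iota_j \colon C_j \longrightarrow C$. Since $\otimes$ preserves colimits separately in each argument, one has $C \otimes C \cong \operatorname{colim}_{i,j} (C_i \otimes C_j)$; the family $(\iota_j \otimes \iota_j) \circ \delta_j$ therefore forms a cocone over $D$ and induces a comultiplication $\delta \colon C \longrightarrow C \otimes C$, and the $\varepsilon_j$ likewise induce a counit $\varepsilon \colon C \longrightarrow I$. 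The coassociativity, counitality and cocommutativity axioms transfer from each $(C_j, \delta_j, \varepsilon_j)$ to $(C, \delta, \varepsilon)$ by the universal property of the colimit, and the $\iota_j$ become comonoid homomorphisms tautologically; this exhibits $C$ as the colimit in $\Comon(\cat, \otimes)$.

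For (b), the idea is to present $\Comon(\cat, \otimes)$ as the category of models of a limit sketch in $\cat$. A cocommutative comonoid is specified by an object $C$ together with morphisms $\delta \colon C \longrightarrow C \otimes C$ and $\varepsilon \colon C \longrightarrow I$ satisfying three equational diagrams; I would fix a regular cardinal $\mu$ large enough that $\cat$ is locally $\mu$-presentable, the unit $I$ is $\mu$-presentable, and $\otimes$ preserves $\mu$-filtered colimits in each variable. The comonoid axioms then cut out a full subcategory of a suitable product of arrow categories of $\cat$ that is closed under all limits and under $\mu$-filtered colimits, whence standard closure results for locally presentable categories (cf.~\cite{Adamek1994}) yield the desired local presentability.

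The principal obstacle lies in part (b), specifically in the rank-shifting step: in a locally $\lambda$-presentable symmetric monoidal closed category the tensor of two $\lambda$-presentable objects need not remain $\lambda$-presentable, so one must choose $\mu$ with care and verify that the sketch describing the comonoid axioms is genuinely $\mu$-accessible. Once this accessibility bookkeeping is settled, both preservation of colimits and local presentability combine to give a right adjoint to $U$ via the adjoint functor theorem.
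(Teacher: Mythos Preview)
The paper does not supply its own proof of this theorem: it is stated as a result ``found in \cite{Porst2008}'' and used as a black box. So there is no in-paper argument to compare against; what one can compare against is the approach in the cited reference.

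Your outline is essentially that approach. Part~(a) is correct as written: in a symmetric monoidal closed category each functor \( ({-}) \otimes X \) is a left adjoint, hence cocontinuous, and from this one checks directly (as you do) that the forgetful functor \( U \) creates all small colimits. Part~(b) is indeed the substantive step, and the strategy you sketch --- realise \( \Comon(\cat,\otimes) \) via inserter/equifier-type constructions from accessible functors on \( \cat \) and then invoke closure results from \cite{Adamek1994} --- is exactly the route taken in the literature. Your caveat about the rank shift is well placed: one must pass to a regular cardinal \( \mu \) at which both \( \cat \) is locally \( \mu \)-presentable and \( \otimes \) is \( \mu \)-accessible in each variable (such \( \mu \) exists because any left adjoint between locally presentable categories is accessible), and then the constructions building \( \Comon(\cat,\otimes) \) stay within \( \mu \)-accessible categories. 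Once that bookkeeping is done, cocompleteness from~(a) plus accessibility from~(b) give local presentability, and the adjoint functor theorem for locally presentable categories finishes the argument.

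In short: your proposal matches the standard proof that the paper is citing, and the gap you flag in~(b) is precisely the technical content of that reference rather than an error in your plan.
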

Hence a symmetric-monoidal-closed locally-presentable category is a Lafont model of intuitionistic linear logic (see, e.g., \cite{Mellies2003a} for categorical models of linear logic).

\subsection{Presenting Partial Algebras}
\label{sec:pre:presentation}
This subsection discusses a way to define a partial algebra satisfying a Horn theory \( \Th \) of an \( S \)-sorted signature \( \Sigma \).

A \emph{presentation} comprises of an \( S \)-sorted set \( G = (G_s)_{s \in S} \) of \emph{generators} and an \( S \)-sorted set \( R = (R_s \subseteq \Term^s_{\Sigma}(G) \times \Term^s_\Sigma(G))_{s \in S}\) of \emph{relations}.
A \emph{model} of \( (G,R) \) in \( \PAlg(\Sigma, \Th) \) is a partial algebra \( A \in \PAlg(\Sigma,\Th) \) together with a valuation \( \varrho = (\varrho_s \colon G_s \longrightarrow A_s)_{s \in S} \) such that \( A, \varrho \models t = t' \)
for every \( (t,t') \in \bigcup_{s \in S} R_s \).
Note that, for every model \( (A, (\varrho_s)_{s \in S}) \), a homomorphism \( h\colon A \longrightarrow B \) induces a model \( (B, (h_s \circ \varrho_s)_{s \in S}) \).

A model \( (A, \varrho) \) is \emph{presented by \( (G,R) \) in \( \PAlg(\Sigma,\Th) \)} if, for every model \( (B, \vartheta) \) in \( \PAlg(\Sigma,\Th) \), there is a unique homomorphism \( h \colon A \longrightarrow B \) such that \( \forall s \in S. \vartheta_s = h_s \circ \varrho_s \).
By identifying \( x \in G_s \) with \( \varrho_s(x) \in A_s \), this means that every assignment \( \vartheta \colon G \longrightarrow B \) satisfying \( R \) can be uniquely extended to a homomorphism \( h \colon A \longrightarrow B \).

Every pair \( (G,R) \) of generators and relations has a model that it presents.
Intuitively the model corresponds to the initial object in \( \PAlg(\Sigma \cup G, \Th \cup R) \).
\begin{lemma}\label{lem:partial:presentation}
    For every pair \( (G, R) \) of \( S \)-sorted sets of generators and relations,
there is a model \( (A, \varrho) \) presented by \( (G,R) \).
\end{lemma}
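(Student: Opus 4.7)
The plan is to reduce the construction of the presented model to the existence of an initial object in an extended category of partial algebras, and then appeal to Theorem~\ref{thm:pre:horn-presentable}. Specifically, I would extend the signature to $\Sigma' = \Sigma \uplus G$ by treating each generator $g \in G_s$ as a fresh nullary operation symbol of sort $s$, so that every term in $\Term_\Sigma^s(G)$ becomes a closed term of sort $s$ in $\Sigma'$. Correspondingly I would extend the theory to $\Th' = \Th \cup R'$, where $R'$ contains, for each $(t, t') \in R_s$, the closed Horn formula $t = t'$ (no universally quantified variables, empty conjunction of premises). Since each operation of $\Sigma'$ has arity less than $\kappa$ and each axiom of $\Th'$ uses fewer than $\kappa$ universal variables, $(\Sigma', \Th')$ is still a $\kappa$-ary Horn theory.

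Next I would check the evident bijection between objects of $\PAlg(\Sigma', \Th')$ and models of $(G, R)$ in $\PAlg(\Sigma, \Th)$: the interpretations of the new nullary symbols give exactly the valuation $\varrho$, the axioms of $\Th'$ enforce both $A \models \Th$ and $A, \varrho \models R$, and homomorphisms in $\PAlg(\Sigma', \Th')$ are precisely the $\Sigma$-homomorphisms $h\colon A \longrightarrow B$ satisfying $h_s \circ \varrho_s = \vartheta_s$. Under this bijection, the universal property of a model presented by $(G, R)$ corresponds exactly to the universal property of an initial object in $\PAlg(\Sigma', \Th')$. By Theorem~\ref{thm:pre:horn-presentable}, the latter category is locally presentable, hence cocomplete, and in particular has an initial object, which supplies the desired presented model.

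The only step of genuine content is this translation between the two signatures; the rest is routine bookkeeping, since the heavy lifting is already encapsulated in Theorem~\ref{thm:pre:horn-presentable}. A minor technical point worth confirming is that the extended signature and theory may have cardinality exceeding $\kappa$, but since $\kappa$-arity only constrains individual operations and formulas (not the overall size of the signature or theory), this poses no obstacle; at worst one passes to a larger regular cardinal $\kappa' \geq \max(\kappa, |G|, |R|)$ to apply Theorem~\ref{thm:pre:horn-presentable} and recover cocompleteness of $\PAlg(\Sigma', \Th')$.
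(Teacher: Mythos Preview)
Your approach is exactly the one taken in the paper, but there is one genuine gap. In the partial-algebra setting of this paper, nullary operations are \emph{partial} functions $1 \rightharpoonup A_s$, so the interpretation of a generator $g$ may be undefined. Your extended theory $\Th' = \Th \cup R'$ does not force the new nullary symbols to be defined; hence the claimed bijection between $\PAlg(\Sigma',\Th')$ and models of $(G,R)$ fails. Concretely, take $S=\{*\}$, $\Sigma=\Th=\emptyset$, $G=\{g\}$, $R=\emptyset$: the initial object in your $\PAlg(\Sigma',\Th')$ is the empty set with $g$ undefined, whereas the model presented by $(G,R)$ should be a one-point set.

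The fix is exactly what the paper does: add the axioms $\IsDefined{g}$ for every $g\in\bigcup_s G_s$ to $\Th'$. These are Horn formulas (with empty premise), so $\Th'$ remains a $\kappa$-ary Horn theory, and the rest of your argument goes through unchanged. Your remark on cardinality is correct and harmless.
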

Thus a pair \( (G,R) \) of generators and relations defines a partial algebra with a valuation over \( G \) uniquely up to isomorphism.
We write \( \langle G,R \rangle \) for the partial algebra, which contains \( G \) as elements (but note that different elements \( g, g' \in G_s \) do not necessarily indicate different elements of \( A_s \)). 

\begin{lemma}\label{lem:partial:generated}
    Assume \( A = \langle G,R \rangle \).
    Then \( A \) is generated by \( G \), i.e.~every element \( x \in A_s \) is the interpretation of a term \( t \in \Term^s_\Sigma(G) \).
\end{lemma}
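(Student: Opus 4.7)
The plan is to invoke the universal property of $A = \langle G,R\rangle$ in reverse. I would construct a sub-partial-algebra $A' \subseteq A$ whose carriers consist exactly of the term-interpretations over $G$, exhibit $(A',\varrho)$ as another model of $(G,R)$ in $\PAlg(\Sigma,\Th)$, and then use the uniqueness clause of the universal property to conclude that $A' = A$.

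Concretely, I first define
\[
    A'_s \;=\; \{\, \llbracket t\rrbracket_A^\varrho \,\mid\, t \in \Term^s_\Sigma(G)\,\} \;\subseteq\; A_s
\]
and check closure under each operation: if $\sigma$ has sort $\prod_{i<\alpha} s_i \to s$, each $a_i \in A'_{s_i}$ is realised by some $t_i$, and $\sigma_A(a_i)_{i<\alpha}$ is defined in $A$, then $\sigma_A(a_i)_{i<\alpha} = \llbracket\sigma(t_i)_{i<\alpha}\rrbracket_A^\varrho$ again lies in $A'_s$. Equipping $A'$ with the restricted operations makes the inclusion $\iota \colon A' \hookrightarrow A$ a homomorphism of partial algebras. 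A straightforward induction on term structure then shows that for any valuation taking values in $A'$, each term has the same (possibly undefined) interpretation in $A'$ as in $A$. From this it follows at once that $A' \models \Th$ and that $\varrho$, which already factors through $A'$ by construction, makes $(A',\varrho)$ a model of $(G,R)$ in $\PAlg(\Sigma,\Th)$.

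With $(A',\varrho)$ in hand, the defining property of $\langle G,R\rangle$ supplies a unique homomorphism $h \colon A \to A'$ satisfying $h \circ \varrho = \varrho$. Then $\iota \circ h \colon A \to A$ is an endomorphism that still extends $\varrho$, but so does $\mathrm{id}_A$; applying the universal property one more time, now with $(A,\varrho)$ itself as the target model, forces $\iota \circ h = \mathrm{id}_A$. Hence $\iota$ is surjective, i.e.\ $A = A'$, which is exactly the statement that every element of $A$ is the interpretation of some term over $G$.

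The main obstacle I anticipate is the step ``$A' \models \Th$''. It requires pinning down precisely what it means for a sub-partial-algebra to inherit a Horn formula of the form $\bigwedge_{j<\beta} t_j = t'_j \Rightarrow t = t'$, where in the paper's convention each equation ``$t_j = t'_j$'' asserts definedness on both sides. The term-interpretation induction behaves well only because of the closure property established in the first step, and one must check carefully that definedness of operations in $A'$ really coincides with their definedness in $A$ on valuations into $A'$. Once this bookkeeping is straight, the remainder of the argument uses only the bare universal property and makes no further appeal to the specific form of $\Th$.
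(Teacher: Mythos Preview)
Your proposal is correct and follows essentially the same approach as the paper: define the sub-partial-algebra of term-interpretations, verify it is a model of $(G,R)$ in $\PAlg(\Sigma,\Th)$, and use the universal property (twice) to show the inclusion is bijective. The paper's proof glosses over exactly the point you flag as the main obstacle, simply asserting that the restricted algebra lies in $\PAlg(\Sigma,\Th)$; your observation that definedness in $A'$ agrees with definedness in $A$ on valuations into $A'$ is precisely what makes that step go through.
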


 \section{$\Sigma$-Monoids, Semirings and Modules}
\label{sec:monoid}
This section defines the main object of study of this paper, namely \( \Sigma \)-monoids and corresponding rings and modules.
A \( \Sigma \)-monoid~\cite{Haghverdi2000,Hoshino2012b,Tsukada2018} is an algebra with partially-defined countable sums.
A \( \Sigma \)-semiring and a \( \Sigma \)-module are variants of (semi)ring and of modules,  obtained simply by replacing the category \( \mathbf{Ab} \) of abelian groups with the category \( \SMon \) of \( \Sigma \)-monoids.

After giving definitions, we discuss the relevance of these notions to models of linear logic by examples.

\subsection{$\Sigma$-monoids}

\begin{definition}[$\Sigma$-monoid]\label{def:sigma-monoid}
    A \emph{\( \Sigma \)-monoid} \( \smon \) is a triple \( (|\smon|, 0, \sum) \), where
    \( |\smon| \) is the underlying set,
    \( 0 \in |M| \) is the unit element, and
    \( \sum : |M|^\omega \rightharpoonup |M| \) is a partially-defined countable operation, called the \emph{sum},
subject to the following conditions:
    \begin{enumerate}
        \item for every \( j \in \omega \),
        \begin{equation*}
            \forall (x_i)_i.
            \qquad
            \bigwedge_{i \neq j} x_i = 0
            \quad\Longrightarrow\quad
            \sum_{i \in \omega} x_i = x_j,
        \end{equation*}
        \item for all subsets \( I, J \subseteq \omega \) and every bijection \( \sigma \colon (\omega \setminus I) \longrightarrow (\omega \setminus J) \),
        \begin{align*}
            \forall (x_i)_i, (y_j)_j.
            \bigwedge_{i \in I} x_i = 0
            \wedge
            \bigwedge_{j \in J} y_j = 0
            \wedge
            \bigwedge_{i \in \omega \setminus I} x_i = y_{\sigma(i)}
            \\
            \quad\Longrightarrow\quad
            \sum_{i \in \omega} x_i \Keq \sum_{j \in \omega} y_j,
        \end{align*}
        \item \( \sum_{i \in \omega} x_i \Keq \sum_{j \in \omega} \sum_{k \in \omega} x_{\sigma(j,k)} \) for every bijection \( \omega \times \omega \longrightarrow \omega \),
    \end{enumerate}
    where the sum \( \sum (x_i)_{i \in \omega} \) is written as \( \sum_{i \in \omega} x_i \).
A \( \Sigma \)-monoid is \emph{complete} if
\( \sum_{i \in \omega} x_i \) is defined for every
\( (x_i)_{i \in \omega} \in |\smon|^{\omega} \).
\qed
\end{definition}
Note that the axioms form an \( \aleph_1 \)-ary partial Horn theory.
The first and second axioms say that \( 0 \) is the unit element and that the sum is independent of the order of elements.
The third axiom is the infinite association law.
In particular, if \( \sum_i x_i \) is defined, so is \( \sum_k x_{\sigma(j,k)} \) for every \( k < \omega \).

For a countably infinite set \( I \) and a family \( (x_i)_{i \in I} \), we write \( \sum_{i \in I} x_i \) to mean \( \sum_{j \in \omega} x_{\sigma(j)} \) for some bijection \( \sigma \colon \omega \to I \), which is independent of the choice of \( \sigma \) by the second axiom.
The combination of the second and third axioms shows that, if \( \sum_{i < \omega} x_i \) is defined, so is \( \sum_{i \in I} x_i \) for every subset \( I \subseteq \omega \).

Finite sums are defined via padding: \( x + y + z \) is an abbreviation of the infinite sum \( x + y + z + 0 + 0 + 0 + \cdots \).

Note that even finite sums are not necessarily defined.
A \( \Sigma \)-monoid is \emph{finitely complete} if all finite sums are defined.

\begin{example}\label{ex:monoid:monoid}
    Here are basic examples of \( \Sigma \)-monoids.
    \begin{itemize}
    \item
    \( \CohRing \) has \( \{ 0, 1 \} \) as the underlying set and \( 1 + 1 \) is undefined.
\item
    \( \BoolRing \) has \( \{ 0, 1 \} \) as the underlying set.
    All sums are defined in \( \BoolRing \) and the sum is \( 1 \) if the family contains \( 1 \).
    This is the two-element Boolean algebra \( 0 < 1 \) with the join as the sum.
    This is an example of complete \( \Sigma \)-monoid.
\item
    \( \FinRing \) has \( \{ 0, 1 \} \) as the underlying set.
    It has all finite sums and \( 1 + 1 + \dots + 1 = 1 \).
    The infinite sum \( 1 + 1 + \cdots \) of \( 1 \) is undefined.
    This is a finitely complete \( \Sigma \)-monoid.
\item
    \( \{ 0,1,\infty \} \) with the sum defined as follows, which is complete.
    The sum is \( \infty \) if the family contains \( \infty \) or infinitely many \( 1 \)'s.
    The sum of finitely many \( 1 \)'s is \( 1 \).
\item
    The unit interval \( [0,1] \) with the standard sum of real numbers.
    The sum is undefined if it exceeds \( 1 \).
    This induces the probabilistic coherence space model.
\item
    The sets \( \mathbb{N} \) of natural numbers and \( \mathbb{R}_{\ge 0} \) of non-negative real numbers with the standard sums.
    The sum is undefined if it diverges.
    \item
    \( \mathbb{N} \cup \{ \infty \} \) with the standard sum, where \( \infty \) means the divergence to \( +\infty \).
    Similarly \( \mathbb{R}_{\ge 0} \cup \{ \infty \} \) with the standard sum is a \( \Sigma \)-monoid.
    They are complete.
    \item
    A coherence space~\cite{Girard1987} induces a \( \Sigma \)-monoid.
    A coherence space \( A \) is equipped with a subset \( \Config(A) \subseteq \PowerSet{|A|} \) of cliques, and it can be seen as a \( \Sigma \)-monoid with the disjoint union as the sum.
    \item
    A probabilistic coherence space~\cite{Girard2004,Danos2011}, which is a subset \( A \subseteq (|A| \to \mathbb{R}_{\ge 0}) \) that satisfies certain conditions, is a \( \Sigma \)-monoid with the coordinate-wise sum; it is undefined if the sum does not belong to \( A \).
    \item 
    A poset \( A = (|A|, \le) \) with the join as the sum satisfies the axioms of \( \Sigma \)-monoid if and only if it has all bounded countable joins, i.e.~every countable subset \( X \subseteq |A| \) that has an upper bound in \( |A| \) has the least upper bound.
    \end{itemize}
\end{example}

\begin{example}
    Let \( \mathbb{Q}_{\ge 0} \) be the set of non-negative rational numbers.
    We define \( \sum_{i<\omega} r_i = r \) if the sum converges to \( r \).
    The triple \( (\mathbb{Q}_{\ge 0}, 0, \sum) \) is not a \( \Sigma \)-monoid because it violates the definedness of partial sums: there exists an infinite sequence \( (r_i)_{i < \omega} \) such that \( \sum_i r_i = 2 \) and \( \sum_i r_{2i} = \sqrt{2} \), of which the latter is a partial sum of the former but undefined.
\end{example}

\begin{example}\label{ex:monoid:Kothe}
    Let \( \mathbb{R} \) be the set of reals equipped with the sum defined by the absolute convergence: \( r = \sum_{i < \omega} r_i \) if the right-hand-side absolutely converges to \( r \).
    This is related to the K\"othe sequence space model~\cite{Ehrhard2002a} of linear logic.
    Unfortunately \( \mathbb{R} \) is not a \( \Sigma \)-monoid because the right-to-left direction of the condition (3) in the definition does not hold: \( (1 - 1) = 0 \) and \( (1 - 1) + (1 - 1) + \cdots = 0 + 0 + \cdots \) absolutely converges to \( 0 \) but \( 1 - 1 + 1 - 1 + \cdots \) does not.
\end{example}

Given a \( \Sigma \)-monoid \( \smon \) and \( x,y \in |\smon| \), we write \( x \le y \) to mean \( \exists z. x + z = y \).
This is a preorder, which we call the \emph{associated preorder}.
If \( x \le 0 \), i.e.~\( x + y = 0 \) for some \( y \), then \( x = 0 \) since \( x = x + 0 + 0 + \cdots = x + (y + x) + (y + x) + \cdots = (x + y) + (x + y) + \cdots = 0 + 0 + \cdots =0\).
In this sense, every element of a \( \Sigma \)-monoid except for \( 0 \) is positive.

\begin{definition}[Linear map]
    A \emph{linear map} from a \( \Sigma \)-monoid \( \smon \) to a \( \Sigma \)-monoid \( \smonb \) is a function \( f \colon |\smon| \to |\smonb| \) between the underlying sets such that \( f(0) = 0 \) and \( f(\sum_i x_i) \Kle \sum_i f(x_i) \).
    We write \( \SMon \) for the category of \( \Sigma \)-monoids and linear maps.
\end{definition}
\( \SMon \) is locally presentable, simply because the set of the axioms for \( \Sigma \)-monoids (Definition~\ref{def:sigma-monoid}) is an \( \aleph_1 \)-ary partial Horn theory and linear maps are homomorphisms of models.

The set of linear maps from \( \smon \) to \( \smonb \) is again a \( \Sigma \)-monoid: if \( \sum_i f_i(x) \) converges for every \( x \in |\smon| \), the sum \( \sum_i f_i \) is defined and \( (\sum_i f_i)(x) = \sum_i f_i(x) \).
The zero is the constant function to \( 0 \) of \( \smonb \).
We write \( \smon \multimap \smonb \) for this \( \Sigma \)-monoid.
The category \( \SMon \) is a symmetric monoidal closed category with \( \multimap \) as the linear function space (see \cite{Hoshino2012b}).
The tensor product is the representing object of bilinear maps (see Section~\ref{sec:module:tensor}).
The tensor unit is \( \CohRing \).
\begin{theorem}\label{thm:module:sigma-monoid-Lafont}
    The category \( \SMon \) is a Lafont model of intuitionistic linear logic.
\end{theorem}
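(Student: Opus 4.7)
The plan is to assemble the theorem from the two general results already recorded in Section~\ref{sec:partial}, namely Theorem~\ref{thm:pre:horn-presentable} (models of a $\kappa$-ary partial Horn theory form a locally $\kappa$-presentable category) and Theorem~\ref{thm:pre:presentable-exponential} (a symmetric monoidal closed, locally presentable category has a cofree cocommutative comonoid comonad). Being a Lafont model amounts exactly to having (i) symmetric monoidal closed structure and (ii) cofree cocommutative comonoids; so once both inputs are in place, the conclusion is immediate.

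First I would verify (i). The text has already introduced the internal hom $\smon \multimap \smonb$, observed that its underlying set carries a pointwise $\Sigma$-monoid structure with zero map as unit, and identified the tensor $\smon \otimes \smonb$ as the representing object of bilinear maps, with unit $\CohRing$. I would quote Hoshino~\cite{Hoshino2012b} for the concrete verification that $({-}) \otimes \smonb \dashv \smonb \multimap ({-})$, that the associator, unitors and symmetry are linear and satisfy the coherence diagrams. The only point that needs a sentence of care is that the tensor actually exists in $\SMon$: this follows from the presentation machinery of Section~\ref{sec:pre:presentation}, since one can write down a presentation of $\smon \otimes \smonb$ by generators $x \otimes y$ and the usual bilinearity relations (which form an $\aleph_1$-ary set of relations over an $\aleph_1$-ary partial Horn theory), and then invoke Lemma~\ref{lem:partial:presentation}.

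Next I would verify (ii). Inspecting Definition~\ref{def:sigma-monoid}, the signature has a single sort, a constant $0$ and a countably-ary operation $\sum$; the three axioms are universally quantified implications between conjunctions of equations and a single equation, with all quantifier blocks of size $< \aleph_1$. This is precisely the shape of a Horn formula of arity $\aleph_1$ as defined in Section~\ref{sec:partial}, and linear maps are exactly the homomorphisms of these partial algebras. Hence Theorem~\ref{thm:pre:horn-presentable} applies (with $\kappa = \aleph_1$) and yields that $\SMon$ is locally $\aleph_1$-presentable.

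Combining (i) and (ii), Theorem~\ref{thm:pre:presentable-exponential} gives a right adjoint to $\Comon(\SMon, \otimes) \to \SMon$, whose induced comonad on $\SMon$ is by construction the cofree cocommutative comonoid comonad; this is a linear exponential comonad in the Lafont sense, so $\SMon$ is a Lafont model of intuitionistic linear logic as required (see, e.g., \cite{Mellies2003a}). I do not foresee a significant obstacle: both key theorems have already been established and all that remains is a light bookkeeping check that the $\Sigma$-monoid axioms have the required Horn shape and that the monoidal closed structure from \cite{Hoshino2012b} fits the hypotheses of Theorem~\ref{thm:pre:presentable-exponential}. The mildest care point is ensuring that the tensor product presentation indeed lies within the $\aleph_1$-ary Horn fragment, so that it exists and is computed in $\SMon$ rather than in a larger category of partial algebras.
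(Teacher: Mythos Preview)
Your proposal is correct and follows essentially the same route as the paper: verify that \( \SMon \) is symmetric monoidal closed (via the tensor/linear hom recalled from \cite{Hoshino2012b}) and locally \( \aleph_1 \)-presentable (via Theorem~\ref{thm:pre:horn-presentable}), then apply Theorem~\ref{thm:pre:presentable-exponential}. The paper's own proof is the one-liner ``Similar to the proof of Theorem~\ref{thm:module:model}'', which unpacks to exactly this argument (indeed \( \SMon \cong \SMod[\CohRing] \), so Theorem~\ref{thm:module:model} literally specialises to the statement at hand).
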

\begin{proof}
    Similar to the proof of Theorem~\ref{thm:module:model}.
\end{proof}

\subsection{$\Sigma$-semirings}
A \emph{\( \Sigma \)-semiring} is a ring-like algebra of which the sum is a \( \Sigma \)-monoid.
It is a commutative monoid object in the monoidal category \( (\SMon, \otimes) \); it can also be seen as a single-object version of \( \SMon \)-enriched SMCC studied in \cite{Tsukada2018}.
Here we give a direct definition of \( \Sigma \)-semiring.
\begin{definition}[$\Sigma$-semiring]
    A \emph{\( \Sigma \)-semiring} (or \emph{\( \Sigma \)-rig}) is a tuple \( \srig = (|\srig|, 0, \sum, 1, \cdot) \) where
    \begin{itemize}
        \item \( (|\srig|, 0, \sum) \) is a \( \Sigma \)-monoid,
        \item \( ({-} \cdot {-}) \colon |\srig| \times |\srig| \longrightarrow |\srig| \) is a total function, and
        \item \( 1 \in |\srig| \) is the unit element of \( \cdot \), i.e.~\( 1 \cdot x = x \cdot 1 = x \),
    \end{itemize}
    subject to associativity (\( (x \cdot y) \cdot z = x \cdot (y \cdot z) \)), commutativity (\( x \cdot y = y \cdot x \)), and distributivity, i.e.,
    \begin{equation*}
        \textstyle
        (\sum_i x_i) \cdot (\sum_j y_j) \quad\Kle\quad \sum_i \sum_j (x_i \cdot y_j)
        \quad\mbox{and}\quad
        0 \cdot x = 0.
    \end{equation*}
    Note that the multiplication \( x \cdot y \) is a total operation, whereas \( \sum \) is partial.
    We often omit \( \cdot \) and simply write \( x \cdot y \) as \( xy \).
    A \( \Sigma \)-semiring is \emph{(finitely) complete} if its sum is (finitely) complete.
\end{definition}

\begin{example}\label{ex:module:semiring}
    Here are examples of \( \Sigma \)-semirings.
    \begin{itemize}
    \item
    \( \CohRing \) is a \( \Sigma \)-semiring with the product defined by \( 1 \cdot 1 = 1 \) and \( 0 \cdot 0 = 0 \cdot 1 = 1 \cdot 0 = 0 \).
    (This is actually the canonical monoid structure arising from the canonical isomorphism \( \CohRing \otimes \CohRing \cong \CohRing \) on the tensor unit \( \CohRing \) in \( \SMon \).) 
    \item
    \( \BoolRing \) and \( \FinRing \), which has \( \{0,1\} \) as the underlying set, can be seen as \( \Sigma \)-semirings by the same definition of the product as above.
    \item
    \( \Nat \cup \{ \infty \} \) is a \( \Sigma \)-semiring.
    The product of natural numbers is standard, and \( \infty \cdot 0 = 0 \) and \( \infty \cdot x = \infty \) if \( x \neq 0 \).
    \item
    Continuous semirings, studied in \cite{Laird2013,Laird2016}, are examples of complete \( \Sigma \)-semirings.
    \item
    The unit interval \( [0,1] \) with the standard sum and product of real numbers is a \( \Sigma \)-semiring.
    \item
    The sets \( \mathbb{N} \) of natural numbers and \( \mathbb{R}_{\ge 0} \) of non-negative real numbers with the standard sums and products are \( \Sigma \)-semirings.
\end{itemize}
\end{example}

\iflong
\begin{remark}
    \( \Sigma \)-semirings can be organised into a category, although we shall not use it in this paper.
    A \emph{\( \Sigma \)-semiring homomorphism} \( f \colon \srig \longrightarrow \srigb \) is a \( \Sigma \)-monoid homomorphism between the underlying \( \Sigma \)-monoids that preserves the product and the unit: \( f(1) = 1 \) and \( f(x \cdot y) = f(x) \cdot f(y) \).
    Let \( \SRig \) be the category of \( \Sigma \)-semirings and \( \Sigma \)-semiring homomorphisms.
    It has a symmetric monoidal structure in which the linear function space \( \srig \multimap \srigb \) consists of \( \Sigma \)-semiring homomorphisms with point-wise sum and product.
It is locally \( \aleph_1 \)-presentable since the set of axioms is an \( \aleph_1 \)-ary partial Horn theory.
    Hence \( \SRig \) is a Lafont model of intuitionistic linear logic.    
\end{remark}
\fi

\subsection{Modules over $\Sigma$-semirings}
A (\( \Sigma \)-)module over a \( \Sigma \)-semiring \( \srig \) is a \( \Sigma \)-monoid \( \smod \) with an action \( \srig \otimes \smod \longrightarrow \smod \) satisfying certain conditions.
\begin{definition}[$\srig$-module and $\srig$-linear map]
    Let \( \srig \) be a \( \Sigma \)-semiring.
    A \emph{module over \( \srig \)} (or an \emph{\( \srig \)-module}) is a \( \Sigma \)-monoid \( \smod \) equipped with a bilinear \emph{action} \( ({-} \cdot {-}) \colon |\srig| \times |\smod| \longrightarrow |\smod| \) such that \( (r r') \cdot x = r \cdot (r' \cdot x) \) and \( 1 \cdot x = x \) for every \( r, r' \in |\srig| \) and \( x \in |\smod| \).
    Note that the action is total.
    The bilinearity means \( 0 \cdot x = r \cdot 0 = 0 \) and \( (\sum_i r_i) \cdot (\sum_j x_i) \Kle \sum_i \sum_j (r_i \cdot x_j) \) for all families \( (r_i)_i \) and \( (x_j)_j \) of \( |\srig| \) and \( |\smod| \), respectively.
    An \( \srig \)-module is \emph{(finitely) complete} if so is its sum.
For \( \srig \)-modules \( \smod \) and \( \smodb \), an \emph{\( \srig \)-linear map} \( \smod \longrightarrow \smodb \) is a \( \Sigma \)-monoid homomorphism \( f \) that preserves the action, i.e.~\( f(r \cdot x) = r \cdot f(x) \) for every \( r \in |\srig| \) and \( x \in |\smod| \).
We write \( \SMod[\srig] \) for the category of \( \srig \)-modules and \( R \)-linear maps.
\end{definition}

\begin{example}\label{ex:module:module}
    Here are examples of modules.
    \begin{itemize}
        \item
        For every \( \Sigma \)-semiring \( \srig \), \( \srig \) is an \( \srig \)-module whose action is the product.
        \item 
        Every \( \Sigma \)-monoid is canonically an \( \CohRing \)-module.
        The action is completely determined by the axioms: \( 0 \cdot x = 0 \) and \( 1 \cdot x = x \).
        Since \( 1 + 1 \) is undefined in \( I \), the distributive law \( (\sum_i r_i) \cdot (\sum_j x_j) \Kle \sum_i \sum_j (r_i \cdot x_j) \) requires nothing.
        Hence \( \SMod[\CohRing] \) is isomorphic to \( \SMon \).
        \item
        A countable coherence space can be seen as an \( \CohRing \)-module; see Section~\ref{sec:monoid:example:coherence}.
\item
        A countable finiteness space can be seen as an \( \FinRing \)-module; see Section~\ref{sec:monoid:example:finiteness}.
        \item
        A probabilistic coherence space can be seen as a \( [0,1] \)-module; see Section~\ref{sec:monoid:example:probabilistic}.
\item
        Let \( \srig \) be a complete \( \Sigma \)-semiring.
        For a countable set \( X \), let \( V_X \) be the set \( X \to |\srig| \).
        Then \( V_X \) is an \( \srig \)-module with the point-wise sum and action: \( (\sum_i v_i)(x) \defe \sum_i v_i(x) \) and \( (r \cdot v)(x) \defe r \cdot v(x) \) for \( v_i, v \in V_X \).
For countable sets \( X \) and \( Y \), an \( \srig \)-linear map \( h \colon V_X \longrightarrow V_Y \) is completely determined by the ``matrix'' \( (h_{x,y})_{x \in X, y \in Y} \): \( h(v)(y) = \sum_{x \in X} h_{x,y} v(x) \).
        Conversely every ``matrix'' \( (h_{x,y})_{x \in X, y \in Y} \), \( h_{x,y} \in |\srig| \), defines an \( \srig \)-linear map \( V_X \longrightarrow V_Y \) since the sum \( \sum_{x \in X} h_{x,y} v(x) \) in \( \srig \) is always defined.
        The full subcategory of \( \SMod[\srig] \) consisting of \( V_X \) coincides with the category of weighted relations~\cite{Laird2013,Laird2016}.
        \item
        The \emph{zero module} consisting only of \( 0 \) is an \( \srig \)-module for every \( \srig \).
        It is a zero object of \( \SMod[\srig] \), i.e.~an object that is both initial and terminal.
\end{itemize}    
\end{example}

\begin{example}
    A \( \Sigma \)-monoid \( \smon \) is not necessarily a \( \BoolRing \)-module.
    Since \( 1 + 1 = 1 \) in \( \BoolRing \), every \( \BoolRing \)-module \( \smon \) must satisfy \( x = 1 \cdot x = (1 + 1) \cdot x = 1 \cdot x + 1 \cdot x = x + x \) in \( \smon \) (in particular, \( x + x \) is defined), but this equation does not hold for general \( \Sigma \)-modules.
    Actually \( \BoolRing \)-modules coincide with posets that have all bounded countable joins.
\end{example}

\subsection{Example: Coherence space}\label{sec:monoid:example:coherence}
Let us examine the motivating example, coherence spaces.
\begin{definition}[Coherence space]
    A \emph{coherence space} is a pair \( A = (|A|, \coh_A) \) of a set \( |A| \) of \emph{atoms} and a reflexive relation \( ({\coh}) \subseteq |X| \times |X| \).
    The set \( \Config(A) \) of \emph{cliques} of \( A \) is \( \{ x \subseteq |A| \mid \forall a, a' \in x. a \coh_A a' \} \).
    The relation \( \incoh_A \) is defined by \( a \incoh_A a' \defp a = a' \vee \neg (a \coh_A a') \).
    Given coherence spaces \( A \) and \( B \), the \emph{linear function space} \( A \multimap B \) is defined by \( |A \multimap B| \defe |A| \times |B| \) and \( (a,b) \coh_{A \multimap B} (a', b') \defp \big(a \coh_A a' \Rightarrow b \coh_B b'\big) \wedge \big( b \incoh_B b' \Rightarrow a \incoh_A a' \big) \).
    A \emph{linear map} \( f \colon A \longrightarrow B \) between coherence spaces is a relation \( f \subseteq |A| \times |B| \) such that \( f \in \Config(A \multimap B) \).
    Linear maps are composed as relations.
\end{definition}

Let \( \CCoh \) be the category of \emph{countable} coherence spaces and linear maps.
Here a coherence space \( A \) is countable if \( |A| \) is countable.
We need this restriction because a \( \Sigma \)-monoid has at most countable sums.\footnote{If you would like to consider coherence spaces of larger cardinality, you can choose a sufficiently large regular cardinal \( \kappa \), replace the category of \( \Sigma \)-monoid with the category of partial algebars with \( \kappa \)-ary sums \( \sum_{i < \kappa} x_i \), and adapt the definitions of semirings and modules accordingly.}

We show that a coherence space is an \( \CohRing \)-module.
Mathematically we give a fully faithful functor \( F \colon \CCoh \longrightarrow \SMod[\CohRing] \).
For a coherence space \( A \), \( FA \) is the set \( \Config(A) \) of cliques with the disjoint union as the sum.
The action of \( \CohRing \) is the canonical one: \( 1 \cdot x = x \) and \( 0 \cdot x = \emptyset \).
For a linear map \( f \colon A \longrightarrow B \) in \( \CCoh \), \( Ff \colon FA \longrightarrow FB \) is defined by \( (Ff)(x) \defe \{ b \in |B| \mid \exists a \in x. (a,b) \in f \} \).

\begin{remark}
    The incoherence relation \( \incoh_A \) has a clear interpretation via this embedding: \( a \incoh_A a' \) if and only if \( \IsUndef{(\{a\}\uplus\{a'\})} \) in \( FA \).
\end{remark}
\begin{proposition}
    \( F \) is a fully faithful functor.
\end{proposition}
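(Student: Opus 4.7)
The plan is to verify the three ingredients of being a fully faithful functor: that \( F \) is a well-defined functor, faithful, and full. The driving observation throughout will be that each clique \( x \in \Config(A) \) admits a canonical decomposition \( x = \biguplus_{a \in x} \{a\} \) in \( FA \), well-defined because distinct singletons are disjoint as sets and the union \( x \) is by assumption a clique. Consequently, any \( \CohRing \)-linear map out of \( FA \) is determined by its values on singletons, which will yield both faithfulness and the recipe for fullness.

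For functoriality, preservation of the \( \CohRing \)-action is automatic since \( \CohRing = \{0,1\} \) and the action reduces to \( 0 \cdot u = \emptyset \) and \( 1 \cdot u = u \). To check that \( Ff \) preserves countable sums, assume \( \biguplus_i x_i \) is defined in \( FA \), i.e.\ the \( x_i \) are pairwise disjoint and their union is a clique. Setwise \( Ff(\biguplus_i x_i) = \bigcup_i Ff(x_i) \), and this equals the sum in \( FB \) once the images are shown to be pairwise disjoint: if some \( b \in Ff(x_i) \cap Ff(x_j) \) with \( i \ne j \), then distinct coherent atoms \( a \in x_i \), \( a' \in x_j \) both relate to \( b \) through \( f \), contradicting the clause \( a \ne a' \wedge a \coh_A a' \Rightarrow b \ne b' \) built into coherence on \( A \multimap B \). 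Preservation of identity and composition are standard for relational composition.

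Faithfulness is immediate: evaluating \( Ff = Fg \) at the singleton clique \( \{a\} \) yields \( \{b \mid (a,b) \in f\} = \{b \mid (a,b) \in g\} \) for every \( a \in |A| \), hence \( f = g \) as relations.

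For fullness, given an \( \CohRing \)-linear map \( \phi \colon FA \longrightarrow FB \), I set \( f_\phi \defe \{(a,b) \mid b \in \phi(\{a\})\} \). Once \( f_\phi \) is known to lie in \( \Config(A \multimap B) \), the equation \( F f_\phi = \phi \) is forced by the singleton decomposition together with \( \Sigma \)-linearity of \( \phi \): linearity guarantees \( \phi(x) = \biguplus_{a \in x} \phi(\{a\}) = (F f_\phi)(x) \). The main obstacle is therefore verifying \( f_\phi \in \Config(A \multimap B) \). Taking \( (a,b),(a',b') \in f_\phi \), the cases \( \neg(a \coh_A a') \) and \( a = a' \) are routine: the first is vacuous in the definition of coherence on \( A \multimap B \), and the second uses only that \( \phi(\{a\}) \) is a clique. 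The crucial case is \( a \ne a' \) with \( a \coh_A a' \): then \( \{a\} \uplus \{a'\} \) is defined in \( FA \), so by \( \Sigma \)-linearity \( \phi(\{a\}) \uplus \phi(\{a'\}) \) must be defined in \( FB \), meaning the two images are set-theoretically disjoint and their union is a clique in \( B \). This yields simultaneously \( b \ne b' \) and \( b \coh_B b' \), establishing both clauses defining coherence on \( A \multimap B \).
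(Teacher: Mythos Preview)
Your proof is correct and follows essentially the same approach as the paper: both arguments hinge on the singleton decomposition \( x = \biguplus_{a \in x}\{a\} \), use it to prove faithfulness and to recover \( f_\phi \) from \( \phi \), and verify \( f_\phi \in \Config(A \multimap B) \) by case analysis on the relationship between \( a \) and \( a' \). The only difference is organisational: the paper treats the two clauses of \( \coh_{A \multimap B} \) separately (first clause via \( a = a' \) and \( a \neq a' \wedge a \coh_A a' \), second clause by contraposition), whereas you group by the three mutually exclusive cases \( \neg(a \coh_A a') \), \( a = a' \), and \( a \neq a' \wedge a \coh_A a' \) and dispatch both clauses within each.
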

\begin{proof}
    Let \( f \colon A \longrightarrow B \) be a linear map in \( \CCoh \).
    It is easy to see that \( (Ff)(x) \in \Config(B) \) for every \( x \in \Config(A) \).
    We show that \( Ff \colon FA \longrightarrow FB \) is an \( \CohRing \)-linear map.
    Let \( (x_i)_{i \in I} \) be a countable family in \( \Config(A) \) and suppose that \( \IsDef{(\biguplus_{i \in I} x_i)} \).
    Then, by definition of \( Ff \), \( (Ff)(\biguplus_{i \in I} x_i) = (Ff)(\bigcup_{i \in I} x_i) = \bigcup_{i \in I} (Ff)(x_i) \).
    So it suffices to show that \( (Ff)(x_i) \cap (Ff)(x_{i'}) = \emptyset \) if \( i \neq i' \).
    Assume for contradiction that \( b \in (Ff)(x_i) \cap (Ff)(x_{i'}) \) for some \( i \neq i' \).
    Hence there exist \( a \in x_i \) and \( a' \in x_{i'} \) such that \( (a,b), (a',b) \in f \).
    Since \( (a,b) \coh_{A \multimap B} (a',b) \) and \( b \incoh_B b \), we have \( a \incoh_A a' \).
    Since \( i \neq i' \), we have \( x_i \cap x_{i'} = \emptyset \) and thus \( a \neq a' \).
    Hence \( \neg (a \coh_A a') \) but this contradicts \( x_i \uplus x_{i'} \in \Config(A) \).

    \( F \) is obviously faithful.
    We prove that \( F \) is full.
    Let \( g \colon FA \longrightarrow FB \) be an \( \CohRing \)-linear map.
    We define \( f \subseteq |A| \times |B| \) by \( (a,b) \in f \defp b \in g(\{a\}) \).
    We show that \( f \in \Config(A \multimap B) \).
    Suppose that \( (a,b), (a',b') \in f \).
    \begin{itemize}
        \item If \( a = a' \), then \( b,b' \in g(\{a\}) \in \Config(B) \) and thus \( b \coh_B b' \).
        \item If \( a \neq a' \) and \( a \coh_A a' \), then \( \{ a \} \uplus \{ a' \} = \{ a,a' \} \in \Config(A) \).  So \( \{ b,b' \} \subseteq g(\{a\}) \uplus g(\{a'\}) = g(\{a,a'\}) \in \Config(B) \), i.e.~\( b \coh_B b' \).
        \item We prove the contraposition.  Assume \( \neg(a \incoh_A a') \), i.e.~\( \IsDef{(\{a\}\uplus\{a'\})} \).
        Then
        \( \Config(B) \ni g(\{a\}\uplus\{a'\}) = g(\{a\})\uplus g(\{a'\}) \supseteq \{ b \} \uplus \{ b' \} \).
        So \( \IsDef{(\{b\}\uplus\{b'\})} \), i.e.~\( \neg (b \incoh_B b') \).
    \end{itemize}

    Te prove \( Ff = g \), let \( x \in \Config(A) \).
    Then \( x = \biguplus_{a \in x} \{ a \} \) and thus \( g(x) = \biguplus_{a \in x} g(\{a\}) = \bigcup_{a \in x} (Ff)(\{a\}) = (Ff)(x) \).
\end{proof}

\subsection{Example: Finiteness spaces}\label{sec:monoid:example:finiteness}
\begin{definition}[Finiteness space~\cite{Ehrhard2005a}]
    For a set \( I \) and a subset \( X \subseteq \PowerSet{I} \), its \emph{dual} \( X^{\bot} \) is \( \{ u \in \PowerSet{I} \mid \forall x \in X. \#(x \cap u) < \infty \} \), where \( \# A < \infty \) means \( A \) is a finite set.
    A \emph{finiteness space} \( A \) is a pair \( A = (|A|, \FConfig(A)) \) of a set \( |A| \) and \( \FConfig(A) \subseteq \PowerSet{|A|} \) such that \( \FConfig(A)^{\bot\bot} = \FConfig(A) \).
    A morphism \( f \colon A \longrightarrow B \) between finiteness spaces, called a \emph{finitary relation}, is a relation \( f \subseteq |A| \times |B| \) such that \( \forall x \in \FConfig(A). \{ b \in |B| \mid \exists a \in x. (a,b) \in f \} \in \FConfig(B) \) and \( \forall v \in \FConfig(B)^{\bot}. \{ a \in |A| \mid \exists b \in v. (a,b) \in f \} \in \FConfig(A)^{\bot} \).
    Morphisms are composed as relations.
\end{definition}

Let \( \CFin \) be the category of \emph{countable} finiteness spaces and finitary relations.
Here a finiteness space \( A \) is countable if \( |A| \) is a countable set.

We give a fully faithful functor \( G \colon \CFin \longrightarrow \SMod[\FinRing] \).
Given a finiteness space \( A \), \( GA \) is an \( \FinRing \)-module with the underlying set \( \FConfig(A) \).
For a countable family \( (x_i)_{i \in I} \) on \( \FConfig(A) \), the sum \( \sum_{i \in I} x_i \) is defined if and only if (1) \( \# \{ i \in I \mid a \in x_i \} < \infty \) for every \( a \in |A| \) and (2) \( \big(\bigcup_{i \in I} x_i\big) \in \FConfig(A) \); then \( \sum_{i \in I} x_i = \bigcup_{i \in I} x_i \).
Note that a finite sum \( x_1 + \dots + x_n \) is always defined.
Given a finitary relation \( f \colon A \longrightarrow B \) in \( \CFin \), \( (Gf)(x) \defe \{ b \in |B| \mid \exists a \in x. (a,b) \in f \} \).

\begin{lemma}
    Let \( A \) be a finiteness space.
    Then \( v \in \FConfig(A)^{\bot} \) bijectively corresponds to an \( \FinRing \)-linear map \( v^{\dagger} \colon GA \longrightarrow \FinRing \), defined by \( v^{\dagger}(x) = 1 \) iff \( x \cap v \neq \emptyset \).
\end{lemma}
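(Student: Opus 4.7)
The plan is to verify the correspondence in three stages: (i) check that the prescription $v \mapsto v^\dagger$ actually lands in $\SMod[\FinRing](GA, \FinRing)$, (ii) show injectivity of the assignment, and (iii) construct an inverse by recovering $v$ from a given $\FinRing$-linear map $\phi$.

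For stage (i), the action-preservation $v^\dagger(r\cdot x) = r\cdot v^\dagger(x)$ is immediate since the $\FinRing$-action is only $0\cdot x = 0$ and $1\cdot x = x$, and $v^\dagger(\emptyset)=0$ is trivial. The substantive point is the sum law. Suppose $\sum_{i\in I} x_i$ is defined in $GA$; by the definition of sums in $GA$ this means $\#\{i : a\in x_i\}<\infty$ for each $a\in|A|$ and $w := \bigcup_i x_i \in \FConfig(A)$. I need the sum $\sum_i v^\dagger(x_i)$ to be defined in $\FinRing$, which amounts to showing that $J := \{i : x_i\cap v\neq\emptyset\}$ is finite. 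Because $v\in\FConfig(A)^\bot$ and $w\in\FConfig(A)$, the intersection $w\cap v$ is finite; and $J \subseteq \bigcup_{a\in w\cap v}\{i : a\in x_i\}$ is a finite union of finite sets. Once defined, $\sum_i v^\dagger(x_i) = 1$ iff some $x_i$ meets $v$ iff $w$ meets $v$ iff $v^\dagger(w) = 1$, giving the equality $v^\dagger(\sum_i x_i) = \sum_i v^\dagger(x_i)$.

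For stage (ii), I would first observe that every singleton $\{a\}\subseteq |A|$ lies in $\FConfig(A)$: any finite subset is automatically in $\FConfig(A)^{\bot\bot}=\FConfig(A)$ because finite sets meet every subset in a finite set. Then $v^\dagger(\{a\}) = 1$ iff $a\in v$, so $v^\dagger$ determines $v$ pointwise.

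For stage (iii), given $\phi\colon GA\longrightarrow\FinRing$ I define $v := \{a\in|A| : \phi(\{a\}) = 1\}$ and claim $\phi = v^\dagger$, which simultaneously forces $v\in\FConfig(A)^\bot$. The key trick is the decomposition $x = \sum_{a\in x}\{a\}$ in $GA$, which is legitimate whenever $x\in\FConfig(A)$: the family is countable (since $|A|$ is), each element of $|A|$ belongs to at most one $\{a\}$, and the union is $x\in\FConfig(A)$, so the sum is defined and equals $x$. Applying the $\FinRing$-linear $\phi$ gives $\phi(x) = \sum_{a\in x}\phi(\{a\})$, and the very fact that this sum is defined in $\FinRing$ forces $\{a\in x:\phi(\{a\})=1\} = x\cap v$ to be finite, so $v\in\FConfig(A)^\bot$; moreover the value is $1$ exactly when $x\cap v\neq\emptyset$, i.e.\ $\phi = v^\dagger$.

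The main obstacle I anticipate is the bookkeeping in stage (i): transferring the two definedness conditions for sums in $GA$ (local finiteness of the family and membership of the union in $\FConfig(A)$) into the definedness condition for sums in $\FinRing$ (finiteness of the index set of nonzero terms), which crucially uses the orthogonality $v\in\FConfig(A)^\bot$. Everything else is either formal or a direct use of the fact that singletons are finiteness configurations.
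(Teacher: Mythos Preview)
Your proof is correct and follows essentially the same three-stage approach as the paper: verify $\FinRing$-linearity of $v^\dagger$, observe injectivity via singletons, and invert by setting $v_\phi=\{a:\phi(\{a\})=1\}$ and using the decomposition $x=\sum_{a\in x}\{a\}$. The only cosmetic difference is that in stage~(i) the paper argues by contradiction (an infinite $J$ would force, by pigeonhole on the finite set $w\cap v$, some $a\in v$ to lie in infinitely many $x_i$), whereas you give the equivalent direct bound $J\subseteq\bigcup_{a\in w\cap v}\{i:a\in x_i\}$; both are the same pigeonhole observation.
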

\begin{proof}
    We see the \( \FinRing \)-linearity of \( v^{\dagger} \).
    Suppose for contradiction that \( v \in \FConfig(A)^{\bot} \) and \( \IsDef{(\sum_{i \in I} x_i)} \) in \( GA \) but \( \IsUndef{(\sum_{i \in I} v^{\dagger}(x_i))} \).
    Then \( v^{\dagger}(x_i) = 1 \), i.e.~\( x_i \cap v \neq \emptyset \), for infinitely many \( i \in I \).
    Since \( \bigcup_{i \in I} x_i \in \FConfig(A) \) from \( \IsDef{(\sum_{i \in I} x_i)} \), the intersection \( (\bigcup_{i \in I} x_i) \cap v \) is a finite set.
    Hence some \( a \in v \) appears infinitely many times in \( (x_i)_{i \in I} \), which contradicts \( \IsDef{(\sum_{i \in I} x_i)} \).
    It is easy to see that \( v^\dagger(\sum_{i \in I} x_i) = \sum_{i \in I} v^\dagger(x_i) \) once the definedness of the right-hand-side is established.

    The map \( ({-})^\dagger \) is obviously injective.
    We prove that it is surjective.
    Given \( h \colon GA \longrightarrow \FinRing \), let \( v_h \defe \{ a \in |A| \mid h(\{a\}) = 1 \} \).
    For every \( x \in \FConfig(A) \), we have \( h(x) = h(\sum_{a \in x} \{a\}) \Kle \sum_{a \in x} h(\{a\}) \).
    The definedness of the right most equation implies \( \forall x \in \FConfig(A). \#(x \cap v_h) < \infty \), and hence \( v_h \in \FConfig(A)^\bot \).
    It is easy to see that \( v_h^\dagger(x) = h(x) \). 
\end{proof}

\begin{proposition}
    \( G \) is a fully faithful functor.
\end{proposition}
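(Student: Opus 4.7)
The plan is to mirror the structure of the preceding proof for coherence spaces, with the key extra ingredient being the lemma that identifies $\FConfig(A)^\bot$ with $\SMod[\FinRing](GA, \FinRing)$. First I would verify that $G$ is a well-defined functor: the forward finitariness condition on a finitary relation $f$ ensures $(Gf)(x) = \bigcup_{a \in x}\{b : (a,b) \in f\} \in \FConfig(B)$, and $Gf$ is $\FinRing$-linear because scalars are only $0$ and $1$, so $\FinRing$-linearity reduces to preservation of $\emptyset$ and of defined countable sums, both of which are immediate from the definition of sums in $GB$ and from $Gf$ being given by image under a relation. Functoriality (identity and composition) then goes through as for ordinary relations.

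Faithfulness is immediate: $(a,b) \in f$ iff $b \in (Gf)(\{a\})$, so $f$ is recovered from $Gf$.

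For fullness, given an $\FinRing$-linear map $h \colon GA \longrightarrow GB$, I would define $f \subseteq |A| \times |B|$ by $(a,b) \in f \defp b \in h(\{a\})$ and verify two things: that $f$ is a finitary relation and that $Gf = h$. For the forward finitariness condition and the equality $Gf = h$ simultaneously, fix $x \in \FConfig(A)$; then $x = \sum_{a \in x}\{a\}$ in $GA$ (the sum is defined, since each $a \in |A|$ occurs at most once and the union $x$ lies in $\FConfig(A)$). By $\FinRing$-linearity, $h(x) = \sum_{a \in x}h(\{a\})$, and in particular this sum is defined, so $h(x) = \bigcup_{a \in x}h(\{a\}) = \{b \in |B| : \exists a \in x, (a,b) \in f\} = (Gf)(x)$. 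This both shows $Gf = h$ and gives $(Gf)(x) \in \FConfig(B)$.

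The main obstacle is the dual finitariness condition, namely that $\{a \in |A| : \exists b \in v, (a,b) \in f\} \in \FConfig(A)^\bot$ for every $v \in \FConfig(B)^\bot$. This is exactly where the preceding lemma is essential. Given $v \in \FConfig(B)^\bot$, the composite $v^\dagger \circ h \colon GA \longrightarrow \FinRing$ is $\FinRing$-linear, so by the lemma it corresponds to some element of $\FConfig(A)^\bot$, which by unwinding the bijection equals $\{a \in |A| : h(\{a\}) \cap v \neq \emptyset\} = \{a \in |A| : \exists b \in v,\, (a,b) \in f\}$. This is the required set, completing the verification that $f$ is finitary and hence that $G$ is full.
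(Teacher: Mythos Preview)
Your treatment of faithfulness and fullness is essentially the paper's argument and is correct. The gap is in the first step, where you assert that $Gf$ preserves defined countable sums ``immediate[ly] from the definition of sums in $GB$ and from $Gf$ being given by image under a relation.'' This is not immediate, and it is precisely where the \emph{dual} finitariness of $f$ enters. For $\sum_{i\in I}(Gf)(x_i)$ to be defined in $GB$ one needs not only $\bigcup_i (Gf)(x_i)\in\FConfig(B)$ (which does follow from forward finitariness) but also $\#\{\,i\in I\mid b\in (Gf)(x_i)\,\}<\infty$ for every $b\in|B|$. Forward finitariness alone does not yield this: take $|A|=|B|=\omega$ with $\FConfig(A)=\FConfig(B)$ equal to all subsets of $\omega$, and $f=\omega\times\{0\}$. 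Then $f$ is forward finitary but not dual finitary; the sum $\sum_{n<\omega}\{n\}$ is defined in $GA$, yet $(Gf)(\{n\})=\{0\}$ for every $n$ and $\sum_{n<\omega}\{0\}$ is undefined in $GB$.

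The paper devotes the bulk of its proof to this point. Assuming some $b$ lies in $(Gf)(x_i)$ for infinitely many $i$, one chooses witnesses $a_i\in x_i$ with $(a_i,b)\in f$, shows that infinitely many of the $a_i$ are distinct (using that each element of $|A|$ occurs in only finitely many $x_i$), and then obtains a contradiction between $\{a_i\mid i\in I\}\in\FConfig(A)$ (it is contained in $\sum_i x_i$) and $\{\,a\in|A|\mid (a,b)\in f\,\}\in\FConfig(A)^\bot$ (by dual finitariness applied to $\{b\}\in\FConfig(B)^\bot$), since their intersection is infinite. Once you supply this argument, the remainder of your proof goes through as written.
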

\begin{proof}
    We first prove that \( Gf \) is \( \FinRing \)-linear for a finitary relation \( f \colon A \longrightarrow B \).
    Suppose that \( x = \sum_{i\in I} x_i \) in \( GA \) and assume for contradiction that \( \IsUndef{(\sum_{i \in I} (Gf)(x_i))} \).
    Note that \( \bigcup_i (Gf)(x_i) = (Gf)(\bigcup_i x_i) = (Gf)(x) \in \FConfig(B) \) by the definition of \( Gf \).
    Hence there exists \( b \in |B| \) such that \( \# \{ i \in I \mid b \in (Gf)(x_i) \} = \infty \).
    By removing \( x_j \) from the sum \( \sum_i x_i \) if \( b \notin (Gf)(x_j) \), we can assume without loss of generality that \( b \in (Gf)(x_i) \) for every \( i \in I \).
    Let us choose \( a_i \in x_i \) such that \( (a_i,b) \in f \) for each \( i \in I\) and \( \equiv \) be an equivalence relation on \( I \) defined by \( i \equiv j \defp a_i = a_j \).
    Since \( \sum_{i \in I} x_i \) is defined, so is \( \sum_{i \in I} \{ a_i \} \).
    Hence each equivalence class of \( \equiv \) is finite.
    Because \( I \) is infinite (otherwise \( \sum_{i \in I} (Gf)(x_i) \) is a finite sum and hence defined), the number of equivalence classes is infinite.
    Hence \( \sum_{i \in I} \{ a_i \} = \{ a_i \mid i \in I \} \) is infinite.
    Since \( \{ b \} \in \FConfig(B)^{\bot} \) (as every finite set belongs to \( \FConfig(B)^{\bot} \)), \( \{ a \in |A| \mid (a, b) \in f \} \in \FConfig(A)^{\bot} \) by the definition of finitely relation.
    However \( \{ a_i \mid i \in I \} \in \FConfig(A) \) and \( \{ a_i \mid i \in I \} \cap \{ a \in |A| \mid (a,b) \in f \} = \{ a_i \mid i \in I \} \) is an infinite set, a contradiction.

    Faithfulness is clear.
    We prove the fullness.
    Assume a \( \FinRing \)-linear map \( g \colon FA \longrightarrow FB \).
    Let \( f \subseteq |A| \times |B| \) be the relation defined by \( (a,b) \in f \defp b \in g(\{b\}) \) (note that \( \{ b \} \in \FConfig(B) \) as \( \FConfig(B) \) contains all finite subsets of \( |B| \)).
    For every \( x \in \FConfig(A) \), we have \( g(x) = g(\sum_{a \in x} \{a\}) \Kle \sum_{a \in x} g(\{a\}) \Kle \{ b \in |B| \mid \exists a \in x. (a,b) \in f \} \).
    So \( \{ b \in |B| \mid \exists a \in x. (a,b) \in f \} \in \FConfig(B) \) since \( g(x) \in \FConfig(A) \).
    Let \( v \in \FConfig(B)^{\bot} \).
    Then \( v^{\dagger} \) is a \( \FinRing \)-linear map \( GB \longrightarrow \FinRing \).
    Hence \( (v^{\dagger} \circ g) \colon GA \longrightarrow \FinRing \) is \( \FinRing \)-linear since \( \FinRing \)-linear maps compose.
    Hence \( \{ a \in |A| \mid \exists b \in v. (a,b) \in f \} = \{ a \in |A| \mid v^{\dagger}(g(\{a\})) = 1 \} \in \FConfig(A)^{\bot} \).
\end{proof}

\subsection{Example: Probabilistic coherence space}\label{sec:monoid:example:probabilistic}
The probabilistic coherence space model~\cite{Girard2004,Danos2011} is a probabilistic variant of the coherence space model.
It provides us with a fully-abstract model of a higher-order probabilistic programming language~\cite{Danos2011,Ehrhard2014}.
We show that probabilistic coherence spaces are \( [0,1] \)-modules.
\begin{definition}[Probabilistic coherence space]
    Let \( \mathbb{R}_{\ge 0} \) be the set of non-negative real numbers.
    For a countable set \( I \) and a subset \( P \subseteq (\mathbb{R}_{\ge 0})^I \), its \emph{dual} is another subset \( P^{\bot} \subseteq (\mathbb{R}_{\ge 0})^I \) defined by \( P^{\bot} \defe \{ u \in (\mathbb{R}_{\ge 0})^I \mid \forall x \in P. \sum_{i \in I} x(i) u(i) \le 1 \} \), where the sum is the standard sum of reals.
    A \emph{probabilistic coherence space} \( A \) is a pair \( A = (|A|, \PConfig(A)) \) of a countable set \( |A| \) and \( \PConfig(A) \subseteq (\mathbb{R}_{\ge 0})^{|A|} \) that satisfies (1) \( \PConfig(A)^{\bot\bot} = \PConfig(A) \), (2) \( \forall a \in |A|.\, \exists r \in \mathbb{R}.\, \forall x \in \PConfig(A).\: x(a) \le r \), and (3) \( \forall a \in |A|.\, \exists r \in \mathbb{R}.\: r e_a \in \PConfig(A) \), where \( e_a(a) = 1 \) and \( e_a(a') = 0 \) if \( a \neq a' \).
    An \( (|A|\times|B|) \)-matrix \( f = (f_{a,b})_{a \in |A|, b \in |B|} \), \( f_{a,b} \in \mathbb{R}_{\ge 0} \), induces a partial function \( \hat{f} \) from \( (\mathbb{R}_{\ge 0})^{|A|} \) to \( (\mathbb{R}_{\ge 0})^{|B|} \): given \( x \in (\mathbb{R}_{\ge 0})^{|A|} \), if \( \sum_{a \in |A|} f_{a,b} x(a) \) converges in \( \mathbb{R}_{\ge 0} \) for every \( b \in |B| \), then \( \hat{f}(x) \in (\mathbb{R}_{\ge 0})^{|B|} \) is defined by \( \hat{f}(x)(b) \defe \sum_{a \in |A|} f_{a,b} \, x(a) \); otherwise \( f(x) \) is undefined.
    A morphism \( f \colon A \longrightarrow B \) between probabilistic coherence spaces is an \( (|A| \times |B|) \)-matrix \( f = (f_{a,b})_{a \in |A|,b\in |B|} \) over \( \mathbb{R}_{\ge 0} \) such that \( \hat{f}(x) \) is defined and \( \hat{f}(x) \in \PConfig(B) \) for every \( x \in \PConfig(A) \).
    The composite of \( f \colon A \longrightarrow B \) and \( g \colon B \longrightarrow C \) is given by the matrix composition, i.e.~\( (g \circ f)_{a,c} \defe \sum_{b \in |B|} g_{b,c} \, f_{a,b} \).
    We write \( \CPCoh \) for the category of probabilistic coherence spaces.
\end{definition}
\begin{remark}
    A remark at the beginning of \cite[Section~2.1.2]{Danos2011} discusses why not \( \PConfig(A) \subseteq [0,1]^{|A|} \), despite that it ``might seem a desirable (or at least intuitively appealing) condition''.
    This paper would provide a new perspective to this interesting problem: actually we can define \( \CPCoh \) using only \( [0,1] \) in the definition.
    See Example~\ref{eg:classical:proj}.
\end{remark}

We give a fully faithful functor \( H \colon \CPCoh \longrightarrow \SMod[{[0,1]}] \).
Given a probabilistic coherence space \( A \), \( HA \) is \( \PConfig(A) \) with the point-wise sum \( (\sum_{i \in I} x_i)(a) = \sum_{i \in I} x_i(a) \); the sum is undefined if \( \sum_{i \in I} x_i(a) \) diverges for some \( a \) or \( (\sum_i x_i) \notin \PConfig(A) \).
The action of \( r \in [0,1] \) is \( (rx)(a) = r(x(a)) \), which is a total operation on \( \PConfig(A) \) since \( \forall a \in |A|. (rx)(a) \le x(a) \).
For a morphism \( f \colon A \longrightarrow B \) in \( \CPCoh \), \( Hf = \hat{f}{\upharpoonright_{\PConfig(A)}} \), the restriction of \( \hat{f} \) to \( \PConfig(A) \).
\begin{proposition}
    \( H \) is a fully faithful functor.
\end{proposition}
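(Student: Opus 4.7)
My plan is to verify three properties of $H$: that $Hf$ is $[0,1]$-linear (functoriality), that $H$ is faithful, and that $H$ is full. The workhorse will be the following structural observation about probabilistic coherence spaces: since $\PConfig(A) = \PConfig(A)^{\bot\bot}$ and every set of the form $P^{\bot}$ is pointwise downward closed in $(\mathbb{R}_{\ge 0})^{|A|}$, so is $\PConfig(A)$. Consequently, for any $x \in \PConfig(A)$, each truncation $x(a) e_a$ lies in $\PConfig(A)$, and the countable family $(x(a) e_a)_{a \in |A|}$ sums pointwise to $x$, yielding the decomposition $x = \sum_{a \in |A|} x(a) e_a$ in $HA$.

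For functoriality, preservation of $0$ and of the $[0,1]$-action is immediate from the matrix formula $\hat{f}(x)(b) = \sum_a f_{a,b}\, x(a)$. The substantive content is sum preservation: given $x = \sum_i x_i$ in $HA$, I need to swap two countable sums $\sum_a f_{a,b} \sum_i x_i(a) = \sum_i \sum_a f_{a,b}\, x_i(a)$ to obtain $\hat{f}(x)(b) = \sum_i \hat{f}(x_i)(b)$. This is Tonelli's theorem for non-negative reals, and the finiteness of $\hat{f}(x)(b)$ (guaranteed by $\hat{f}(x) \in \PConfig(B)$) ensures that all iterated sums converge and agree. Faithfulness will then follow from condition~(3) in the definition of a probabilistic coherence space: it supplies $r_a > 0$ with $r_a e_a \in \PConfig(A)$, so $\hat{f}(r_a e_a)(b) = r_a f_{a,b}$ recovers every matrix entry of $f$ from $Hf$.

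For fullness, given a $[0,1]$-linear map $g \colon HA \longrightarrow HB$, I would fix such $r_a$ and define $f_{a,b} := g(r_a e_a)(b)/r_a$. Independence of the choice of $r_a$ follows from scalar linearity: if $0 < s \le r_a$ with $s e_a \in \PConfig(A)$, then $s e_a = (s/r_a) \cdot (r_a e_a)$ gives $g(s e_a)(b)/s = g(r_a e_a)(b)/r_a$. To show $Hf = g$, I would use the decomposition above together with $[0,1]$-linearity to obtain $g(x) = \sum_a g(x(a) e_a)$, and then deduce $g(x(a) e_a)(b) = x(a) f_{a,b}$ by scaling between $r_a e_a$ and $x(a) e_a$; summing in $a$ gives $g(x)(b) = \sum_a f_{a,b}\, x(a)$, so $\hat{f}(x)$ is defined, coincides with $g(x) \in \PConfig(B)$, and $f$ is a morphism in $\CPCoh$ with $Hf = g$. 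The main obstacle is precisely this last scaling argument: because the $[0,1]$-action only scales $\PConfig(A)$ downward, there is no single element $g(e_a)$ to work with, so I must split into the cases $x(a) \le r_a$ (scale $r_a e_a$ down to $x(a) e_a$) and $x(a) \ge r_a$ (scale $x(a) e_a$, which belongs to $\PConfig(A)$ by downward closure, down to $r_a e_a$), and check that both cases yield the same formula $x(a) f_{a,b}$.
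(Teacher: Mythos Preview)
Your argument is correct and follows the same overall strategy as the paper: recover the matrix entries of $f$ from $g$ evaluated on scaled basis vectors, then verify $Hf = g$ by decomposing an arbitrary $x$ as a sum of such vectors. The one point of divergence is your choice of $r_a$. The paper takes the canonical value $\gamma_a \defe \sup\{x(a) \mid x \in \PConfig(A)\}$, which is finite by condition~(2) and satisfies $\gamma_a e_a \in \PConfig(A)$ because $\PConfig(A) = \PConfig(A)^{\bot\bot}$. This choice guarantees $x(a)/\gamma_a \in [0,1]$ for every $x \in \PConfig(A)$, so the decomposition $x = \sum_a (x(a)/\gamma_a) \cdot [\gamma_a e_a]$ uses only the $[0,1]$-action and your case split disappears entirely. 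Your route via an arbitrary $r_a$ and the two-case scaling argument is perfectly valid, just slightly longer; the paper's $\gamma_a$ trick is worth remembering since the same supremum also furnishes the orthogonal dual basis of $HA$ used later in the paper.
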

\begin{proof}
    The functoriality is clear.

    We prove the faithfulness.
    Let \( f,g \colon A \longrightarrow B \) in \( \CPCoh \) be different morphisms.
    Then \( f_{a,b} \neq g_{a,b} \) for some \( a \in |A| \) and \( b \in |B| \).
    Since \( A \) is a probabilistic coherence space \( re_a \in \PConfig(A) \) for some \( 0 < r \in \mathbb{R} \), where \( re_a(a) = r \) and \( re_a(a') = 0 \) if \( a \neq a' \).
    Then \( (Hf)(re_a)(b) = f_{a,b} r \neq g_{a,b} r = (Hg)(r e_a)(b) \) since \( r \neq 0 \) and \( f_{a,b} \neq g_{a,b} \).

    We prove the fullness.
    Let \( h \colon HA \longrightarrow HB \) be a \( [0,1] \)-linear map.
    For each \( a \in |A| \), let \( \gamma_a \in (0,\infty) \) be the value defined by \( \gamma_a = \sup \{ x(a) \mid x \in \PConfig(A) \} \).
    It follows from \( \PConfig(A)^{\bot\bot} = \PConfig(A) \) that \( \gamma_a e_a \in \PConfig(A) \).
    Note that \( x = \sum_{a \in |A|} r_a \cdot [\gamma_a e_a] \) holds\footnote{The square bracket in the expression \( [\gamma_a e_a] \) emphasises the fact that \( \gamma_a e_a \) isn \emph{not} the action of \( \gamma_a \) to \( e_a \).
        Actually \( e_a \) may not belong to \( \PConfig(A) \) and \( \gamma_a \) may not belong to \( [0,1] \).
        Hence one \emph{cannot} decompose \( [\gamma_a e_a] \) into \( \gamma_a \cdot e_a \) by using the action in \( HA \) and the equation \( h(\gamma_a e_a) = \gamma_a \cdot h(e_a) \) fails (or even ill-defined).}
    in \( HA \) for every \( x \in \PConfig(A) \), where \( r_a = x(a)/\gamma_a \in [0,1] \).
    Let \( f_{a,b} = (1/\gamma_a) h([\gamma_a e_a])(b) \).
    For every \( x \in \PConfig(A) \) and \( b \in |B| \), since \( h(x) = h(\sum_{a \in |A|} r_a \cdot [\gamma_a e_a]) \Kle \sum_{a \in |A|} r_a \cdot h([\gamma_a e_a]) \),
    \begin{align*}
        & \textstyle
        h(x)(b)
        = \sum_{a \in |A|} r_a h([\gamma_a e_a])(b)
        \\
        & \textstyle \quad
        = \sum_{a \in |A|} x(a) (1/\gamma_a) h([\gamma_a e_a])(b)
        = \sum_{a \in |A|} x(a) f_{a,b}.
    \end{align*}
    Since \( b \) is arbitrary, this means \( \hat{f}(x) \in (\mathbb{R}_{\ge 0})^{|B|} \) is defined.
    Furthermore \( \hat{f}(x) = h(x) \in \PConfig(B) \).
    Since \( x \in \PConfig(A) \) is arbitrary, \( f \colon A \longrightarrow B \) in \( \CPCoh \).
    Trivially \( Hf = h \) as desired.
\end{proof}

 \section{Structures of Categories of Modules}\label{sec:module}
This section studies the structures and constructions about \( \srig \)-modules.
The main theorem of this section shows that \( \SMod[\srig] \) is a model of intuitionistic linear logic with the cofree exponential.

We also give some technical results, used in later sections.

\subsection{Local presentability and Lafont structure}
\label{sec:module:tensor}
Let \( \srig \) be a \( \Sigma \)-semiring.
The category \( \SMod[\srig] \) is locally \( \aleph_1 \)-presentable since it is the category of models of a partial Horn theory (over the signature having \( 0 \), \( \sum \) and \( r \cdot ({-}) \) for each \( r \in |\srig| \) as operations). 
\begin{theorem}
    For every \( \Sigma \)-semiring \( \srig \), the category \( \SMod[\srig] \) of \( \srig \)-modules and \( \srig \)-linear maps is locally \( \aleph_1 \)-presentable.
\end{theorem}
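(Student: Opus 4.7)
The plan is to apply Theorem~\ref{thm:pre:horn-presentable} by exhibiting $\SMod[\srig]$ as the category of models of a suitable $\aleph_1$-ary partial Horn theory. Since the defining properties of an $\srig$-module and of an $\srig$-linear map are expressed entirely in terms of the constant $0$, the countable sum $\sum$, and the action of each individual scalar $r \in |\srig|$, the translation into the partial-algebra framework is essentially bookkeeping.

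Concretely, I would take a single-sorted $\aleph_1$-ary signature with a nullary symbol $0$, an $\omega$-ary symbol $\sum$, and one unary symbol $\lambda_r$ (intended reading: $r \cdot ({-})$) for each $r \in |\srig|$. The theory consists of the three $\Sigma$-monoid axioms of Definition~\ref{def:sigma-monoid}; the unit and associativity axioms $\lambda_1(x) = x$ and $\lambda_{r r'}(x) = \lambda_r(\lambda_{r'}(x))$ for every pair $r, r' \in |\srig|$; the zero axioms $\lambda_0(x) = 0$ and $\lambda_r(0) = 0$; the left-linearity scheme $\lambda_r(\sum_i x_i) \Kle \sum_i \lambda_r(x_i)$ for each $r \in |\srig|$; and the scalar-sum scheme $\forall x.\; \lambda_r(x) \Keq \sum_i \lambda_{r_i}(x)$, one instance per countable family $(r_i)_{i<\omega}$ in $|\srig|$ whose sum converges to $r$ in $\srig$. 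Each axiom has at most countably many universally quantified variables and a countable conjunctive premise (after expanding $\Kle$ and $\Keq$ as in the notation block), so the theory is $\aleph_1$-ary partial Horn. A direct check shows that the category of its models is isomorphic to $\SMod[\srig]$: the interpretations $\lambda_r^A$ assemble into a total action $|\srig| \times A \to A$; the unit, associativity and zero clauses make it an action with $0 \cdot x = r \cdot 0 = 0$; and combining the left-linearity and scalar-sum schemes delivers $(\sum_i r_i) \cdot (\sum_j x_j) \Kle \sum_{i,j} r_i \cdot x_j$. On morphisms, the partial-algebra notion of homomorphism is by definition a total map preserving $0$ and $\sum$ in the Kleene sense and each $\lambda_r$ strictly, which coincides exactly with an $\srig$-linear map.

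The only step that requires genuine care is the scalar-sum scheme: because $\sum_i r_i$ is evaluated in the ambient $\Sigma$-semiring $\srig$ and cannot be written as a syntactic term, it has to be hard-wired into the theory as a set-indexed family of axioms, one per convergent countable family in $|\srig|$. Since $|\srig|$ is a set, this is still a legitimate collection of $\aleph_1$-ary Horn formulas, and no further difficulty arises. The theorem then follows by a direct appeal to Theorem~\ref{thm:pre:horn-presentable} with $\kappa = \aleph_1$.
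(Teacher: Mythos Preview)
Your proposal is correct and follows essentially the same approach as the paper: the paper's proof is the single sentence preceding the theorem, which observes that \( \SMod[\srig] \) is the category of models of a partial Horn theory over the signature with \( 0 \), \( \sum \), and \( r \cdot ({-}) \) for each \( r \in |\srig| \), and then appeals to Theorem~\ref{thm:pre:horn-presentable}. Your write-up simply spells out this signature and its axioms in full detail, correctly noting in particular that the scalar-sum behaviour must be hard-coded as one axiom instance per convergent family in \( |\srig| \); the totality of each \( \lambda_r \) is already forced by your associativity clause instantiated at \( r' = 1 \) together with the unit axiom.
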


Now it suffices to give a symmetric monoidal closed structure on \( \SMod[\srig] \), which automatically becomes a Lafont model.

The tensor product that we study in this paper is a representing object of \( \srig \)-bilinear maps.
Given \( \srig \)-modules \( \smod \), \( \smodb \) and \( \smodc \), an \emph{\( \srig \)-bilinear map} from \( \smod \) and \( \smodb \) to \( \smodc \) is a function \( f \colon |\smod| \times |\smodb| \longrightarrow |\smodc| \) such that \( f(\sum_i r_i \cdot x_i, \sum_j s_j \cdot y_j) \Kle \sum_{i,j} r_i s_j \cdot f(x_i, y_j) \).
Let \( \Bilin_{\srig}(\smod, \smodb; \smodc) \) be the set of \( \srig \)-bilinear maps.
It defines a functor \( \Bilin_{\srig}(\smod, \smodb; {-}) \colon \SMod[\srig] \longrightarrow \Set \), where \( \Bilin_{\srig}(\smod, \smodb; g)(f) \defe g \circ f \).
The representability of \( \Bilin_{\srig}(\smod, \smodb; {-}) \) is proved in \cite{Hoshino2012b}; here we give a more explicit construction
using presentation (see Section~\ref{sec:pre:presentation}).
\begin{lemma}\label{lem:module:tensor-exists}
    \( \Bilin_{\srig}(\smod, \smodb; {-}) \) has a representing object.
\end{lemma}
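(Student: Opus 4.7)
The plan is to construct the tensor product $\smod \otimes \smodb$ as a presentation inside $\SMod[\srig]$, using the machinery of Section~\ref{sec:pre:presentation}. Conceptually this mirrors the classical construction of the tensor product of modules over a ring as a quotient of the free module on $|\smod| \times |\smodb|$ by the bilinearity relations; the only novelty is that the ambient partial Horn theory handles definedness for us, so we do not have to implement partial countable sums by hand.

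I would take the (single-sorted) set of generators $G \defe \{\, g_{x,y} \mid x \in |\smod|,\, y \in |\smodb| \,\}$, with set of relations $R$ consisting of one instance of
\[
    g_{\sum_i r_i \cdot x_i,\; \sum_j s_j \cdot y_j}
    \;=\;
    \sum_{i,j \in \omega} (r_i \cdot s_j) \cdot g_{x_i,\,y_j}
\]
for every countable data $(r_i)_i, (s_j)_j \subseteq |\srig|$, $(x_i)_i \subseteq |\smod|$, $(y_j)_j \subseteq |\smodb|$ such that $\sum_i r_i \cdot x_i$ is defined in $\smod$ and $\sum_j s_j \cdot y_j$ is defined in $\smodb$. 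I would then set $\smod \otimes \smodb \defe \langle G, R \rangle$ in $\SMod[\srig]$, which exists by Lemma~\ref{lem:partial:presentation}, and let $\iota \colon |\smod| \times |\smodb| \to |\smod \otimes \smodb|$ send $(x,y)$ to the interpretation of $g_{x,y}$. The relations in $R$ then translate directly into the statement that $\iota$ is $\srig$-bilinear.

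For the universal property, given any $\srig$-module $\smodc$ and any $f \in \Bilin_{\srig}(\smod, \smodb; \smodc)$, the assignment $\vartheta(g_{x,y}) \defe f(x,y)$ satisfies every instance of $R$: in each instance the two defining sums on the left are assumed defined in $\smod$ and $\smodb$, so the Kleene-style bilinearity of $f$ forces the strict equation on the right. The presenting property (Lemma~\ref{lem:partial:presentation}) then yields a unique $\srig$-linear $\tilde f \colon \smod \otimes \smodb \to \smodc$ with $\tilde f \circ \iota = f$. Conversely any $\srig$-linear $h \colon \smod \otimes \smodb \to \smodc$ restricts to $h \circ \iota \in \Bilin_{\srig}(\smod, \smodb; \smodc)$, and Lemma~\ref{lem:partial:generated} (which says $\smod \otimes \smodb$ is generated by $G$) makes these two passages mutually inverse. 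The resulting natural bijection $\SMod[\srig](\smod \otimes \smodb, \smodc) \cong \Bilin_{\srig}(\smod, \smodb; \smodc)$ is exactly the sought representability.

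The only delicate point, and the main obstacle to present cleanly, is the interface between the strict equality of the partial Horn framework and the Kleene equality appearing in the definition of $\srig$-bilinearity. Restricting the relations in $R$ to families for which the inner sums $\sum_i r_i \cdot x_i$ and $\sum_j s_j \cdot y_j$ are already defined in the source modules is the correct translation: on the one hand, each such instance then corresponds to a genuine strict equation that every bilinear map satisfies, so $R$ is precisely the set of equations forced on $\iota$; on the other hand, these are enough to uniquely extend every bilinear map, since a bilinear map is by definition determined by its values on pairs subject to exactly these constraints. Once this translation is pinned down, the rest reduces to bookkeeping on top of Lemmas~\ref{lem:partial:presentation} and~\ref{lem:partial:generated}.
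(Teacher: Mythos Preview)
Your proposal is correct and follows essentially the same approach as the paper: both construct the tensor product as the object presented in $\SMod[\srig]$ by generators $\{x\otimes y\}$ subject to the bilinearity relations, and then read off the representing bijection from the universal property of presentations (Lemma~\ref{lem:partial:presentation}).

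The only cosmetic difference is in how the relations are phrased. You impose the joint relation $g_{\sum_i r_i x_i,\,\sum_j s_j y_j}=\sum_{i,j}(r_i s_j)\cdot g_{x_i,y_j}$, while the paper imposes linearity in each variable separately, i.e.\ $x\otimes y=\sum_i r_i\cdot(x_i\otimes y)$ whenever $x=\sum_i r_i x_i$ in $\smod$, and symmetrically in the second variable. These generate the same relations: your joint form specialises to the paper's by padding one family trivially, and conversely the paper's two one-sided relations combine (using the $\Sigma$-monoid associativity law and bilinearity of the action) to give your joint form. A minor remark: the mutual-inverse step follows already from the uniqueness clause of Lemma~\ref{lem:partial:presentation}; invoking Lemma~\ref{lem:partial:generated} is not wrong but slightly roundabout.
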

\begin{proof}
    We give a presentation of a representing object.
    The set \( G \) of generators is \( \{ x \otimes y \mid (x,y) \in |\smod| \times |\smodb| \} \).
    The relation \( R \) has \( x \otimes y = \sum_i r_i \cdot (x_i \otimes y) \) if \( x = \sum_i r_i \cdot x_i \) in \( \smod \) and \( x \otimes y = \sum_j r_j \cdot (x \otimes y_j) \) if \( y = \sum_j r_j \cdot y_j \) in \( \smodb \).
\end{proof}

Let \( \smod \otimes \smodb \) be the representing object of \( \Bilin_{\srig}(\smod, \smodb; {-}) \).
By construction, each element \( z \in |\smod \otimes \smodb| \) is generated by \( \{ x \otimes y \mid (x,y) \in |\smod| \times |\smodb| \} \), i.e.~\( z = \sum_{i} x_i \otimes y_i \) for some countable family \( (x_i \otimes y_i)_{i} \).

Note that \( \sum_i (x_i \otimes y_i) \) is not necessarily defined for every countable family \( (x_i \otimes y_i)_{i} \).
This fact gives rise to an obstacle when defining an element of \( \smod \otimes \smodb \) or a function to \( \smod \otimes \smodb \): one has to ensure that the sum in the definition actually converges.
The following lemma is useful for ensuring the definedness.
\begin{lemma}\label{lem:module:tensor-peak}
    Let \( \smod \) and \( \smodb \) be \( \srig \)-modules.
    For every \( z \in |\smod \otimes \smodb| \), there is \( (x,y) \in |\smod| \times |\smodb| \) such that \( z \le x \otimes y \) w.r.t.~the associated preorder.
\end{lemma}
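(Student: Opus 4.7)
The plan is to apply Lemma~\ref{lem:partial:generated} to express $z$ as the interpretation of some term $t$ over the generators $\{u \otimes v \mid (u, v) \in |\smod| \times |\smodb|\}$ from the presentation in Lemma~\ref{lem:module:tensor-exists}, and then to induct on the structure of $t$.

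The base case $t = u \otimes v$ is immediate, with witness $(u, v)$. For the scalar case $t = r \cdot t'$, the inductive hypothesis yields $(x', y')$ and $w'$ with $z' + w' = x' \otimes y'$, where $z'$ is the interpretation of $t'$. Applying the (total) action $r \cdot ({-})$ and combining distributivity with the bilinearity identity $r \cdot (x' \otimes y') = (r x') \otimes y'$ (an immediate consequence of the first relation in the presentation, taking the trivial singleton decomposition $r x' = r \cdot x'$), one obtains $r z' + r w' = (r x') \otimes y'$, so $(r x', y')$ is a witness for $z = r z'$.

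The sum case $t = \sum_i t_i$ is the heart of the argument and the main obstacle. The inductive hypothesis supplies witnesses $(x_i, y_i)$ for each $t_i$'s interpretation $z_i$, but the naive candidate $(X, Y) = (\sum_i x_i, \sum_i y_i)$ need not exist: these sums can diverge in $\smod$ and $\smodb$ even when $\sum_i z_i$ converges in $\smod \otimes \smodb$. To handle this, I would first normalise $t$, via distributivity and the double-sum axiom for $\Sigma$-monoids, into a flat pure-tensor form $\sum_k (u_k \otimes v_k)$, and then trace how the definedness of this sum is certified in the presented algebra. Since new defined sums can be introduced only through the two bilinearity relations of the presentation together with the $\Sigma$-module axioms, the certifying derivation should exhibit a refactorisation of the family $(u_k, v_k)_k$ into blocks in which one component is held fixed and the other varies over a sum that genuinely converges in $\smod$ or $\smodb$; the pure tensor bound $X \otimes Y$ is then assembled blockwise by running the bilinearity relations in reverse. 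Making this refactorisation step precise, and verifying that the resulting $(X, Y)$ really is a witness in $\smod \otimes \smodb$, is the technical crux I expect to require the most care.
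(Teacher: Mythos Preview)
Your diagnosis of the sum case as the crux is right, but the proposed resolution---``trace how the definedness of this sum is certified'' and extract a blockwise refactorisation---is not yet a proof. The presented algebra $\smod \otimes \smodb = \langle G, R \rangle$ is characterised only by a universal property, not by a term model with a canonical derivation to induct on; making your trace precise would force you first to build such a syntactic model and then to argue by induction on derivations in the generated congruence. That is substantial extra work you have not carried out, and your sketch does not indicate how the refactorisation is actually produced.

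The paper avoids term induction entirely. Let $\smodc \subseteq |\smod \otimes \smodb|$ be the set of elements bounded above by some pure tensor, with the inherited $\srig$-module structure (a sum is defined in $\smodc$ when it is defined in $\smod \otimes \smodb$ and the value lands in $\smodc$; this yields a $\Sigma$-monoid because $\smodc$ is downward closed, so partial sums stay in $\smodc$). Each generator $x \otimes y$ lies in $\smodc$, and each relation of the presentation holds in $\smodc$: for instance if $x = \sum_i r_i x_i$ in $\smod$, then $x \otimes y = \sum_i r_i\,(x_i \otimes y)$ already holds in $\smod \otimes \smodb$, every summand and the value are in $\smodc$, so the equation holds in $\smodc$ as well. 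The universal property of the presentation then yields an $\srig$-linear map $\smod \otimes \smodb \to \smodc$ whose composite with the inclusion is the identity on generators, hence the identity; thus $\smodc$ is all of $\smod \otimes \smodb$. This is precisely the technique behind the proof of Lemma~\ref{lem:partial:generated} that you invoke, only applied to the subset of pure-tensor-bounded elements rather than the subset of term-denotable ones. The missing move in your attempt is to use the universal property globally instead of chasing individual terms.
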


The right adjoint of \( ({-}) \otimes \smod \) is given by the module of linear maps.
Given a \( \srig \)-modules \( \smod \) and \( \smodb \), \( \smod \multimap \smodb \) is a \( \srig \)-module consisting of \( \srig \)-linear maps \( \smod \longrightarrow \smodb \).
The sum of a family \( (f_i)_{i \in \omega} \) is defined when \( \sum_i f_i(x) \) is defined for every \( x \in |\smon| \); if the sum is defined, \( (\sum_i f_i)(x) \defe \sum_i f_i(x) \).
The action is defined by \( (r \cdot f)(x) \defe r \cdot f(x) \).
Given an \( \srig \)-linear map \( g \colon \smodb \longrightarrow \smodc \), \( \smod \multimap g \colon (\smod \multimap \smodb) \longrightarrow (\smod \multimap \smodc) \) is the post-composition of \( g \).

\begin{lemma}
    \( ({-}) \otimes \smod \dashv \smod \multimap ({-}) \).
\end{lemma}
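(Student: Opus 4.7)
The plan is to establish the adjunction by a double application of universal properties, factoring the hom-set bijection through $\srig$-bilinear maps:
\[
  \SMod[\srig](\smodb \otimes \smod,\, \smodc)
  \;\cong\;
  \Bilin_{\srig}(\smodb, \smod;\, \smodc)
  \;\cong\;
  \SMod[\srig](\smodb,\, \smod \multimap \smodc).
\]
The first bijection is delivered immediately by Lemma~\ref{lem:module:tensor-exists}, since $\smodb \otimes \smod$ is by definition a representing object of $\Bilin_{\srig}(\smodb, \smod;{-})$. So the real work is to construct and verify the second bijection, and then to check naturality.

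For the second bijection I would use currying. Given a bilinear $f \colon |\smodb| \times |\smod| \to |\smodc|$, define $\hat f \colon |\smodb| \to |\smod \multimap \smodc|$ by $\hat f(y)(x) \defe f(y,x)$; conversely, given an $\srig$-linear $g \colon \smodb \to \smod \multimap \smodc$, define $\check g(y,x) \defe g(y)(x)$. The verification proceeds in three steps: (i) for each fixed $y \in |\smodb|$, the function $\hat f(y) \colon \smod \to \smodc$ is $\srig$-linear, obtained from the bilinearity condition of $f$ by specialising the first argument to the singleton sum $1 \cdot y$; (ii) the map $\hat f$ is itself an $\srig$-linear map $\smodb \to \smod \multimap \smodc$, which is the subtle step (see below); (iii) the operations $f \mapsto \hat f$ and $g \mapsto \check g$ are mutually inverse, which follows directly from the pointwise definitions.

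The main obstacle lies in step (ii), and stems from partiality. Recall that a sum $\sum_i h_i$ in $\smod \multimap \smodc$ is defined precisely when $\sum_i h_i(x)$ is defined in $\smodc$ for every $x \in |\smod|$, and then equals the pointwise sum. Thus to prove $\hat f\bigl(\sum_i r_i \cdot y_i\bigr) \Kle \sum_i r_i \cdot \hat f(y_i)$ I must show: if $\sum_i r_i \cdot y_i$ is defined in $\smodb$, then (a) $\sum_i r_i \cdot f(y_i, x)$ is defined in $\smodc$ for \emph{every} $x$, and (b) its value is $f\bigl(\sum_i r_i \cdot y_i,\, x\bigr)$. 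Both follow from the bilinearity Kleene inequality $f\bigl(\sum_i r_i \cdot y_i,\, \sum_j s_j \cdot x_j\bigr) \Kle \sum_{i,j} r_i s_j \cdot f(y_i, x_j)$ applied with $x_j$ specialised to a singleton, but one must be careful to orient the $\Kle$ in the right direction: bilinearity says the right-hand side is defined whenever the left-hand side is, which is exactly what we need for the pointwise-defindness of $\sum_i r_i \cdot \hat f(y_i)$ as an element of $\smod \multimap \smodc$.

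Once the bijection is established, naturality in $\smodb$ and $\smodc$ reduces to routine equations that hold on the nose at the level of set-theoretic functions: precomposition with $p \colon \smodb' \to \smodb$ and postcomposition with $q \colon \smodc \to \smodc'$ commute with currying because they act on separate arguments of the underlying bilinear map. No further subtlety about partial sums arises, since $p$ and $q$ are total maps. The remaining verification that the candidate adjunction structure is natural is therefore mechanical.
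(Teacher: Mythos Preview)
Your proposal is correct and follows exactly the paper's approach: the paper's proof is the single line ``Because \( \Bilin(\smod, \smodb; \smodc) \cong \SMod[\srig](\smod, \smodb \multimap \smodc) \)'', and you have carefully unpacked precisely this currying isomorphism, including the orientation of the \( \Kle \) needed for pointwise definedness in \( \smod \multimap \smodc \). Nothing is missing.
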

\begin{proof}
    Because \( \Bilin(\smod, \smodb; \smodc) \cong \SMod[\srig](\smod, \smodb \multimap \smodc) \).
\end{proof}

\begin{theorem}\label{thm:module:model}
    For every \( \Sigma \)-semiring \( \srig \), the category \( \SMod[\srig] \) is a locally \( \aleph_1 \)-presentable symmetric monoidal closed category.
    Hence it is a Lafont model of intuitionistic linear logic.
\end{theorem}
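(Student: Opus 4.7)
The plan is to combine the pieces already established in the section with a bit of additional categorical bookkeeping. Local $\aleph_1$-presentability of $\SMod[\srig]$ has just been proved, the tensor product $\otimes$ exists by Lemma~\ref{lem:module:tensor-exists}, and the closed structure is provided by the adjunction $({-}) \otimes \smod \dashv \smod \multimap ({-})$. Hence the remaining tasks are: (i) equip $\otimes$ with a symmetric monoidal structure, (ii) verify coherence, and (iii) invoke Theorem~\ref{thm:pre:presentable-exponential} to obtain the cofree exponential.

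For (i), I would take the $\srig$-module $\srig$ itself (with multiplication as the action) as the tensor unit. The scalar action $\srig \times \smod \longrightarrow \smod$, $(r, x) \mapsto r \cdot x$, is $\srig$-bilinear directly from the module axioms, so it factors uniquely through a map $\lambda_\smod \colon \srig \otimes \smod \longrightarrow \smod$; an inverse is produced from $x \mapsto 1 \otimes x$, which is $\srig$-linear and satisfies $\lambda_\smod(1 \otimes x) = x$ as well as $1 \otimes (r \cdot x) = r \cdot (1 \otimes x) = r \otimes x$, so on generators the two composites agree with the identity. The braiding $\sigma_{\smod,\smodb}$ comes from the bilinear map $(x,y) \mapsto y \otimes x$. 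For the associator, observe that for fixed $\smodc$ the assignment $(x,y) \mapsto x \otimes (y \otimes z)$ is bilinear in $(x,y)$ for each $z$, and the resulting map is linear in $z$, so by two applications of the universal property we obtain $\alpha \colon (\smod \otimes \smodb) \otimes \smodc \longrightarrow \smod \otimes (\smodb \otimes \smodc)$; symmetrically for its inverse.

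For (ii), all coherence diagrams (pentagon, triangle, hexagons for the braiding, naturality squares) commute since by Lemma~\ref{lem:partial:generated} every element of a tensor product is an interpretation of a term built from pure tensors $x \otimes y$ using partial $\srig$-linear combinations, and on such pure tensors both sides of each coherence equation are patently equal. Uniqueness in the universal property then propagates the equality to all of the tensor product. The only technical subtlety is ensuring that each structural map is well defined as a function on the tensor product, i.e.~that it respects the relations of the presentation in Lemma~\ref{lem:module:tensor-exists}; this follows because the defining formulas on generators are themselves bilinear, and Lemma~\ref{lem:module:tensor-peak} lets us bound any countable sum of simple tensors that appears by a simple tensor, guaranteeing definedness in the image. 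I expect this bookkeeping — confirming that partially-defined sums interact correctly with the structural maps — to be the main (but routine) obstacle.

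Once $(\SMod[\srig], \otimes, \srig, \multimap)$ is established as a symmetric monoidal closed category that is moreover locally $\aleph_1$-presentable, Theorem~\ref{thm:pre:presentable-exponential} immediately yields a right adjoint to the forgetful functor $\Comon(\SMod[\srig], \otimes) \longrightarrow \SMod[\srig]$. The induced comonad is the cofree (hence linear exponential) comonad, so $\SMod[\srig]$ is a Lafont model of intuitionistic linear logic, completing the proof.
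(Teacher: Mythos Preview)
Your proposal is correct and matches the paper's (implicit) approach: combine the preceding lemmas into a symmetric monoidal closed structure on the locally $\aleph_1$-presentable $\SMod[\srig]$ and invoke Theorem~\ref{thm:pre:presentable-exponential}. The only superfluous step is the appeal to Lemma~\ref{lem:module:tensor-peak}: since the associator, unitors and braiding are obtained directly from the universal property of $\otimes$ as a representing object for bilinear maps, no definedness of partial sums needs to be checked by hand.
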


\begin{remark}
    Although we have directly constructed the SMCC structures on \( \SMod[\srig] \) for every \( \srig \), they actually come from the SMCC structure on \( \SMon \)
and a general result as follows.
    Recall that a \( \Sigma \)-semiring \( \srig \) is just a commutative monoid object in \( \SMon \).
    A commutative monoid structure \( \srig \) makes the functor \( \srig \otimes ({-}) \colon \SMon \longrightarrow \SMon \) a commutative monad, and the category \( \SMod[\srig] \) is then the Eilenberg-Moore category of the monad.
    Then \( \SMod[\srig] \) carries an SMCC structure~\cite{Jacobs1994}.
    Since the monad \( \srig \otimes ({-}) \) is \( \aleph_1 \)-accessible (as it has the right adjoint \( \srig \multimap ({-}) \)), \( \SMod[\srig] \) is locally \( \aleph_1 \)-presentable~\cite[Theorem and Remark~2.78]{Adamek1994}.
\end{remark}

As a locally presentable category,
\( \SMod[\srig] \) is complete and cocomplete.
We describe some important (co)limits.

The \emph{zero module} consisting only of \( 0 \) is the zero object.

Given a countable family \( (\smod_i)_{i \in I} \) of \( \srig \)-modules, their product \( \prod_{i \in I} \smod_i \) has \( \prod_{i \in I} |\smod_i| \) as the underlying set.
The action and sum are coordinate-wise, i.e.~\( r \cdot (x_1,x_2,\dots) = (r \cdot x_1, r \cdot x_2, \dots) \) and \( \sum_i (x_{1,i},x_{2,i},\dots) = (\sum_i x_{1,i}, \sum_i x_{2,i}, \dots) \). 
The sum is undefined if it is undefined on at least one coordinate. 

The coproduct \( \coprod_{i \in I} \smod_i \) is the disjoint union of \( |\smod_i| \) followed by identification of \( 0 \)'s in different components.
The sum of non-zero elements in different component is undefined.
Alternatively it can be seen as the submodule (see Definition~\ref{def:module:submodule}) of \( \prod_{i \in I} \smod_i \) consisting of elements \( (x_1,x_2,\dots) \) such that \( x_i = 0 \) for all but at most one \( i \in I \).

\subsection{Monomorphism}
\begin{lemma}\label{lem:module:injection}
    An \( \srig \)-linear map \( f \colon \smod \longrightarrow \smodb \) is a monomorphism if and only if the underlying function is injective.
\end{lemma}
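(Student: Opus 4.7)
The plan is to prove the two directions separately, where only the forward direction (monomorphism implies injective) requires a construction.

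The easy direction is that an injective $\srig$-linear map $f$ is a monomorphism: given $g_1, g_2 \colon \smodc \longrightarrow \smod$ with $f \circ g_1 = f \circ g_2$, for each $z \in |\smodc|$ the underlying function satisfies $f(g_1(z)) = f(g_2(z))$, and injectivity of the underlying function yields $g_1(z) = g_2(z)$, hence $g_1 = g_2$ as $\srig$-linear maps.

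For the forward direction, I would use $\srig$ itself as a test object. Recall from Example~\ref{ex:module:module} that $\srig$ is an $\srig$-module via its own multiplication. The plan is: for each $x \in |\smod|$, define $g_x \colon \srig \longrightarrow \smod$ by $g_x(r) \defe r \cdot x$, and verify $\srig$-linearity. This amounts to checking $g_x(0) = 0 \cdot x = 0$, that $g_x(\sum_i r_i) \Kle \sum_i g_x(r_i)$ which is exactly the bilinearity clause $(\sum_i r_i) \cdot x \Kle \sum_i (r_i \cdot x)$ of the action, and that $g_x(s \cdot r) = (sr) \cdot x = s \cdot (r \cdot x) = s \cdot g_x(r)$ from the associativity axiom of the action. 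Now suppose $f$ is a monomorphism and $f(x) = f(y)$; the composites $f \circ g_x$ and $f \circ g_y$ agree because $f(r \cdot x) = r \cdot f(x) = r \cdot f(y) = f(r \cdot y)$ by $\srig$-linearity of $f$. Hence $g_x = g_y$, and evaluating at $1 \in |\srig|$ gives $x = 1 \cdot x = g_x(1) = g_y(1) = 1 \cdot y = y$.

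There is no substantial obstacle here. The only point worth noting is that the argument crucially relies on the totality of the action $\srig \times \smod \longrightarrow \smod$ (so that $g_x$ is defined on all of $|\srig|$, not just part of it) and on the fact that $\srig$-linearity with respect to partial sums of $\srig$ is delivered by the Kleene-style bilinearity axiom, so no convergence issues arise in defining $g_x$. This is why the same proof strategy would work for any variety-like category whose morphisms are required to preserve a total unary action by a generator.
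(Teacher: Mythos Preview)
Your proof is correct and is essentially the same as the paper's: both directions use the same ideas, and for the nontrivial implication you use the test map \( g_x \colon \srig \to \smod,\ r \mapsto r \cdot x \), which is exactly the map the paper denotes \( x^{\dagger} \). The only cosmetic difference is that the paper phrases the forward direction as a contrapositive (non-injective implies non-monic), while you phrase it directly.
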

\begin{definition}\label{def:module:submodule}
    Let \( \smod \) be an \( \srig \)-module.
    An \( \srig \)-module \( \smodb \) is a \emph{submodule} of \( \smod \) if \( |\smodb| \subseteq |\smod| \) and the inclusion \( |\smodb| \hookrightarrow |\smod| \) is \( \srig \)-linear.
    There is a bijective correspondence between submodules of \( \smod \) and subobjects of \( \smod \) in \( \SMod[\srig] \).
    A submodule \( \iota \colon \smodb \hookrightarrow \smod \) is \emph{sum-reflecting} if for every \( x, y_i \) in \( |\smodb| \), if \( \iota(x) = \sum_i \iota(y_i) \) in \( \smod \), then \( \sum_i y_i \) is defined in \( \smodb \) (and then \( x = \sum_i y_i \) in \( \smodb \) as well).
    A submodule \( \smodb \) of \( \smod \) is \emph{downward closed} if \( |\smodb| \subseteq |\smod| \) is a downward-closed subset w.r.t.~the associate preorder of \( \smod \).
\end{definition}

\begin{example}
    We have \( |\CohRing| = |\FinRing| = |\BoolRing| = \{ 0,1 \} \) and the identity function defines monomorphisms \( \CohRing \hookrightarrow \FinRing \hookrightarrow \BoolRing \) in \( \SMod[\CohRing] \) (or equivalently in \( \SMon \)), which are epimorphisms but not isomorphisms.
    Neither \( \CohRing \hookrightarrow \FinRing \) nor \( \FinRing \hookrightarrow \BoolRing \) is sum-reflecting.
\end{example}
\begin{example}
    \( \CohRing \hookrightarrow \mathbb{N} \hookrightarrow \mathbb{N}_\infty \) in \( \SMod[\CohRing] \) and
    \( [0,1] \hookrightarrow \mathbb{R}_{\ge 0} \) in \( \SMod[{[0,1]}] \) are sum-reflecting.
\end{example}

The equaliser is easily described as a sum-reflecting submodule.
(The converse does not hold in general.)
\begin{lemma}\label{lem:module:equaliser}
    The equaliser of \(f,g \colon \smod \longrightarrow \smodb \) is a sum-reflecting submodule of \( \smod \) consisting of \( \{ x \in |\smod| \mid f(x) = g(x) \} \).
\end{lemma}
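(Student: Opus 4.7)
The plan is to verify directly that $E \defe \{x \in |\smod| \mid f(x)=g(x)\}$, equipped with the sum and action inherited from $\smod$, is an $\srig$-submodule, that it is sum-reflecting essentially by construction, and that it satisfies the universal property of the equaliser in $\SMod[\srig]$.

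First I would equip $E$ with structure. Set $0_E \defe 0$, define $r \cdot_E x \defe r \cdot x$, and declare $\sum_i x_i$ to be defined in $E$ precisely when it is defined in $\smod$ for the family $(x_i)_i$ taken in $\smod$, with the same value. To check well-definedness I need to know that when $\sum_i x_i$ is defined in $\smod$ and every $x_i \in E$, the value lies in $E$; this follows from $\srig$-linearity of $f$ and $g$, since $f(\sum_i x_i) \Keq \sum_i f(x_i) = \sum_i g(x_i) \Keq g(\sum_i x_i)$, and similarly $f(r\cdot x) = r\cdot f(x) = r\cdot g(x) = g(r\cdot x)$ and $f(0)=0=g(0)$. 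The axioms of a $\Sigma$-monoid and of an $\srig$-action on $E$ are then inherited from $\smod$ verbatim, so $E$ is an $\srig$-module and the inclusion $\iota \colon E \hookrightarrow \smod$ is $\srig$-linear.

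Next I would observe sum-reflection: by the very definition of the sum on $E$, if $\iota(x) \Keq \sum_i \iota(y_i)$ in $\smod$ with $x, y_i \in E$, then $\sum_i y_i$ is defined in $E$ and equals $x$, which is precisely the sum-reflecting condition of Definition~\ref{def:module:submodule}. Since $\iota$ is injective, Lemma~\ref{lem:module:injection} gives that $\iota$ is a monomorphism; and $f\circ\iota = g\circ\iota$ holds by the defining condition of $E$.

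For the universal property, suppose $h \colon \smodc \longrightarrow \smod$ satisfies $f\circ h = g\circ h$. Then for every $z \in |\smodc|$ we have $f(h(z)) = g(h(z))$, so $h(z) \in E$, and I can define $\bar h \colon \smodc \longrightarrow E$ by $\bar h(z) \defe h(z)$. Whenever $\sum_i z_i$ is defined in $\smodc$, linearity of $h$ yields $\sum_i h(z_i) \Keq h(\sum_i z_i)$ in $\smod$; each summand lies in $E$, so by construction the sum is defined in $E$ as well, giving $\bar h(\sum_i z_i) \Keq \sum_i \bar h(z_i)$. Preservation of the action and of $0$ is immediate, and uniqueness of $\bar h$ follows from injectivity of $\iota$. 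Hence $\iota \colon E \hookrightarrow \smod$ is the equaliser of $f$ and $g$.

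There is no real obstacle here; the only subtlety worth flagging is that the sum on $E$ must be defined as the restriction of the sum on $\smod$ (rather than some narrower notion), since otherwise the sum-reflecting condition would be vacuous but universality could fail. With that choice, everything reduces to applying $\srig$-linearity of $f$ and $g$ to transport defined sums between $E$ and $\smod$.
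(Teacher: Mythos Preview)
Your proof is correct and follows essentially the same route as the paper: define the candidate equaliser as the subset where $f$ and $g$ agree, equip it with the sum and action inherited from $\smod$ (which is what the paper calls the ``full submodule''), check closure under the operations via $\srig$-linearity of $f$ and $g$, observe that sum-reflection holds by construction, and verify the universal property by factoring any $h$ with $f\circ h = g\circ h$ through the inclusion. The only cosmetic difference is that you write $\Keq$ in a couple of places where $\Kle$ is what linearity literally provides; since in each case the left-hand side is already known to be defined, the stronger $\Keq$ does hold, so this is not an error.
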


\subsection{Completion}
The partiality of sums is the source of many difficulties in constructing and analysing \( \srig \)-modules.
Here we give a remedy to this.

\begin{lemma}\label{lem:module:completion}
    For every \( \srig \)-module \( \smod \), there is a complete \( \srig \)-module \( \CComp{\smod} \) with an \( \srig \)-linear map \( f \colon \smod \longrightarrow \CComp{\smod} \) that satisfies the following property: for every complete \( \srig \)-module \( \smodb \) and an \( \srig \)-linear map \( g \colon \smod \longrightarrow \smodb \), there is a unique \( \srig \)-linear map \( h \colon \CComp{\smod} \longrightarrow \smodb \) such that \( g = h \circ f \).
\end{lemma}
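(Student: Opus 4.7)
The statement asserts that the full subcategory of complete $\srig$-modules is reflective in $\SMod[\srig]$. My plan is to construct $\CComp{\smod}$ explicitly via a presentation, exploiting the partial Horn theory machinery developed in Section~\ref{sec:partial}.

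The first observation is that completeness is itself expressible as an $\aleph_1$-ary Horn axiom. Using the paper's convention that $e = e$ means $\IsDef{e}$, the single formula
\[
\forall (x_i)_{i<\omega}.\qquad \textstyle\sum_{i<\omega} x_i \;=\; \sum_{i<\omega} x_i
\]
(with empty antecedent) asserts totality of the countable sum. Augmenting the partial Horn theory of $\srig$-modules with this axiom yields a partial Horn theory $\Th_{\mathrm{comp}}$ whose models are exactly the complete $\srig$-modules, so that Lemma~\ref{lem:partial:presentation} gives us presentations in the enlarged category $\PAlg(\Sigma, \Th_{\mathrm{comp}})$.

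I would then define $\CComp{\smod} \;\defe\; \langle G, R \rangle$ in $\PAlg(\Sigma, \Th_{\mathrm{comp}})$ with generators $G = \{\,[x] \mid x \in |\smod|\,\}$ and relations $R$ consisting of: $[0_{\smod}] = 0$; $[r \cdot x] = r \cdot [x]$ for every $r \in |\srig|$ and $x \in |\smod|$; and $[x] = \sum_{i<\omega} [x_i]$ for every convergent sum $x = \sum_{i<\omega} x_i$ in $\smod$. The valuation $f \colon x \mapsto [x]$ is $\srig$-linear by construction, giving the required map $\smod \longrightarrow \CComp{\smod}$.

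The universal property is then almost immediate. Given a complete $\srig$-module $\smodb$ and an $\srig$-linear $g \colon \smod \longrightarrow \smodb$, the assignment $[x] \mapsto g(x)$ is a valuation on $G$ that satisfies every relation in $R$ (using $\srig$-linearity of $g$, together with completeness of $\smodb$ to guarantee that each right-hand side is defined). By the universal property of the presentation it extends uniquely to a homomorphism $h \colon \CComp{\smod} \longrightarrow \smodb$ with $h \circ f = g$, and $h$ is automatically $\srig$-linear because the scalar action and countable sum are among the operations of the signature that homomorphisms preserve. The only genuinely delicate step is the initial one of verifying that completeness is indeed captured by a partial Horn axiom in the sense of Section~\ref{sec:partial}; once this is accepted, the rest is routine manipulation of presentations.
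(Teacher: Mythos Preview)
Your proposal is correct and follows essentially the same route as the paper. The only cosmetic difference is where completeness is encoded: you add the totality axiom to the theory and present in the category of complete $\srig$-modules, whereas the paper presents in $\SMod[\srig]$ itself but adds the extra relations $\mathcal{R}_2 = \{\,\IsDefined{(\sum_i \underline{x_i})} \mid (x_i)_i \text{ a countable family in } |\smod|\,\}$ alongside your $\mathcal{R}_1$, which forces definedness of sums of generators and hence (since every element is such a sum) completeness of the presented object.
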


\begin{lemma}\label{lem:module:completion-monic-epic}\label{lem:module:completion-dcfull}
    \( \smod \longrightarrow \CComp{\smod} \) is monic and epic.
    Furthermore 
    \( \smod \hookrightarrow \CComp{\smod} \) is a sum-reflecting and downward-closed submodule.
\end{lemma}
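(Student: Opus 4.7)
My plan is to exploit an explicit construction of $\CComp{\smod}$ by presentation in the sense of Section~\ref{sec:pre:presentation}: generators $\{\iota(x) \mid x \in \smod\}$, and relations $[0]=0$, $[r \cdot x] = r \cdot [x]$, and $[x] = \sum_i [x_i]$ whenever $x = \sum_i x_i$ holds in $\smod$, considered inside the category of \emph{complete} $\srig$-modules. The universal property of this presentation matches that of Lemma~\ref{lem:module:completion}, so it models $\CComp{\smod}$, and the canonical map $\eta$ sends $x$ to $[x]$.

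\textbf{Epicity.} By Lemma~\ref{lem:partial:generated}, every $z \in \CComp{\smod}$ is the interpretation of a term over the generators. Using the relation $r \cdot \iota(x) = \iota(r \cdot x)$ to absorb the $\srig$-action and the infinite-associativity axiom (axiom~3 of Definition~\ref{def:sigma-monoid}) to flatten nested sums, every $z$ can be written as a countable sum $\sum_i \iota(x_i)$ with $x_i \in \smod$. Any two $\srig$-linear maps out of $\CComp{\smod}$ agreeing on $\iota(\smod)$ therefore agree on every such sum, hence everywhere, so $\eta$ is epic.

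\textbf{Monicity and sum-reflection.} These I would establish together by a direct analysis of the congruence defining $\CComp{\smod}$. Concretely, I would realise the above presentation as equivalence classes of countable sequences $(x_i) \in \smod^\omega$ modulo the congruence generated by (i) permutation and zero-padding and (ii) the rule that $(x,0,0,\dots) \sim (x_0,x_1,\dots)$ whenever $x = \sum_i x_i$ holds in $\smod$ (together with the corresponding rule for the $\srig$-action). Monicity amounts to showing: if $(x,0,\dots) \sim (y,0,\dots)$, then $x = y$ in $\smod$. Sum-reflection amounts to: if $(x,0,\dots) \sim (y_0,y_1,\dots)$, then $\sum_i y_i$ is defined in $\smod$ with value $x$. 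Both are proved by induction on a rewriting chain witnessing the equivalence, observing that each single-step rewrite corresponds, by construction, to a defined sum in $\smod$, and that the composition of such rewrites at ``depth zero'' yields a derivation of the desired equation in $\smod$ itself. I would reduce this to a normalisation statement: any such chain can be replaced by a derivation that never goes ``above'' the level of finitely many $\smod$-equations actually used.

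\textbf{Downward-closure.} Given $z \in \CComp{\smod}$ with $z + z' = \iota(x)$ for some $x \in \smod$, write $z = \sum_i \iota(y_i)$ and $z' = \sum_j \iota(y'_j)$ using the normal form from the epicity step. Then $\sum_i \iota(y_i) + \sum_j \iota(y'_j) = \iota(x)$, and sum-reflection supplies a defined sum $\sum_i y_i + \sum_j y'_j = x$ in $\smod$. The partial sum $\sum_i y_i$ is then defined in $\smod$ by axiom~(2) of Definition~\ref{def:sigma-monoid} (``partial sums of defined sums are defined''), so $z = \iota(\sum_i y_i) \in \iota(\smod)$. The main obstacle is the inductive argument in Step~2: the rewriting chain may traverse countable sums of arbitrary depth, and one must carefully reduce it to a chain using only ground-level equalities in $\smod$ while respecting the interaction between the $\Sigma$-monoid axioms and the $\srig$-action.
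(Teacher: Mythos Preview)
Your epicity and downward-closure arguments are essentially the paper's. The real divergence is in monicity and sum-reflection, where the paper takes a \emph{semantic} route rather than your syntactic one: it constructs an explicit complete $\srig$-module into which $\smod$ embeds, and reads everything off from the universal property of $\CComp{\smod}$.

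Concretely, the paper forms the $\Sigma$-monoid $\smod_\infty$ with underlying set $|\smod| \uplus \{\infty\}$, declaring every previously undefined sum to equal $\infty$. This is complete as a $\Sigma$-monoid but carries no obvious $\srig$-action, so the paper passes to $\srig \multimap \smod_\infty$ (hom of $\Sigma$-monoids) with action $(r \cdot f)(r') \defe f(r r')$; this \emph{is} a complete $\srig$-module, and $h \colon x \mapsto (r \mapsto r \cdot x)$ is an injective $\srig$-linear map $\smod \to (\srig \multimap \smod_\infty)$. Monicity of $\eta$ follows immediately, since $h$ factors through $\eta$. For sum-reflection, apply the induced map $\CComp{\smod} \to (\srig \multimap \smod_\infty)$ to an equation $\sum_i \eta(x_i) = \eta(y)$ and evaluate at $1$: one gets $\sum_i x_i = y$ in $\smod_\infty$, and since $y \neq \infty$ this sum was already defined in $\smod$.

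Your proposed ``induction on a rewriting chain'' has the gap you yourself flag, and it is a real one. The congruence presenting $\CComp{\smod}$ lives in a category of algebras with \emph{infinitary} operations: it is closed under countable sums, not just under finitary term contexts. Two classes can be equal without any finite zig-zag of single-step rewrites between representatives; the congruence is an intersection of all compatible equivalence relations, or equivalently a transfinite closure, and there is no evident well-founded parameter on which to induct. Your ``normalisation statement'' would have to be a genuine confluence/standardisation theorem for an infinitary rewriting system interacting with the $\Sigma$-monoid axioms and the $\srig$-action, and you have not supplied one. The paper's $\smod_\infty$ trick bypasses all of this: once you have \emph{any} complete $\srig$-module receiving $\smod$ injectively and in which $\sum_i x_i = y \neq \infty$ forces the sum to already converge in $\smod$, the universal property does the rest in two lines.
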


By this lemma, one can first argue under the assumption that sums in \( e \) is defined, and then discharge the assumption by showing that \( e \le x \) in \( \CComp{\smod} \) for some \( x \in |\smod| \).

 \section{Models of Classical Linear Logic}\label{sec:classic}
The category \( \SMod[\srig] \) of modules is a model of intuitionistic linear logic but not of classical linear logic, i.e.~the canonical map \( \smod \longrightarrow (\smod \multimap \srig) \multimap \srig \) is not necessarily isomorphic.
This section discusses constructions of models of classical linear logic.
The idea is to consider a subcategory consisting of ``countable dimensional'' modules.

There are several choices, corresponding to several well-behaved modules in the standard module theory over rings.
This section discusses two variants that correspond to \emph{finitely-generated free modules} and \emph{finitely-generated projective modules} in the standard module theory over rings.\footnote{Unfortunately the \( \Sigma \)-semiring counterparts of these notions are no longer free nor projective in the categorical sense.}

\subsection{Modules with orthogonal basis}
\emph{Finitely generated free modules} form a class of particularly well-behaved modules in the standard module theory over rings.
A finitely generated free module is the free module over a finite set; it coincides with the finite biproduct \( \bigoplus_{i=1}^n R \) of the coefficient ring \( R \).

So a natural candidate of well-behaved modules over an \( \Sigma \)-semiring \( \srig \) is a finitely or countably generated free modules.
Unfortunately this class does not have good closure properties; they are finite or countable coproducts \( \coprod_{i \in I} \srig \) of the coefficient semiring \( \srig \), which is not closed under many operations of linear logic, e.g.~negation \( ({-}) \multimap \srig \), tensor \( \otimes \) and linear implication \( \multimap \).
Using products instead of coproducts does not change the situation.

Finitely generated free modules have many different characterisations in the standard module theory, which are not necessarily equivalent in our \( \Sigma \)-semiring setting.
The next definition is an adaptation of one of such characterisations.
\begin{definition}[Orthogonal dual basis]
    Let \( \srig \) be a \( \Sigma \)-semiring and \( \smod \) be an \( \srig \)-module.
    An \emph{(countable) orthogonal dual basis} of \( \smod \) is a countable family \( (e_i, \varphi_i)_{i \in I} \) of pairs of \( e_i \in |\smod| \) and \( \varphi_i \colon \smod \longrightarrow \srig \) such that
    \begin{itemize}
        \item \( x = \sum_{i \in I} \varphi_i(x) \cdot e_i \) for every \( x \in |\smod| \) and
        \item \( \varphi_i(e_j) = \delta_{i,j} \), where \( \delta_{i,j} = 1 \) if \( i = j \) and \( \delta_{i,j} = 0 \) if \( i \neq j \).
    \end{itemize}
    We write \( \SVec[\srig] \) for the full subcategory of \( \SMod[\srig] \) consisting of modules that have countable orthogonal dual bases.
\end{definition}

\begin{lemma}
    An \( \srig \)-module \( \smod \) has a countable orthogonal dual basis if and only if there is a countable family \( (e_i)_{i \in I} \) of \( |\smod| \) that satisfies the following conditions:
    \begin{itemize}
        \item every \( x \in |\smod| \) can be uniquely written as \( x = \sum_{i \in I} r_i \cdot e_i \) for some family \( (r_i)_{i \in I} \) of \( |\srig| \), and
        \item the \( e_i \)-coordinate map \( \varphi_i \colon \smod \longrightarrow \srig \) defined by \( (\sum_i r_i \cdot e_i) \mapsto r_i \) is \( \srig \)-linear for every \( i \in I \).
    \end{itemize}
\end{lemma}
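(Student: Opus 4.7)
The plan is to prove both directions by carefully tracking the interplay between existence/uniqueness of coefficient expansions and the $\srig$-linearity of the coordinate maps. In both directions the main identity to exploit is that a sum indexed by $I$ in which all but one summand is $0$ is automatically defined and equals the single non-zero summand (the unit axiom of a $\Sigma$-monoid, applied to $\delta_{i,j}$-coefficients).

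For the forward direction, suppose $(e_i,\varphi_i)_{i \in I}$ is an orthogonal dual basis. Existence of an expansion $x = \sum_i r_i \cdot e_i$ is immediate by taking $r_i \defe \varphi_i(x)$. For uniqueness, suppose $x = \sum_i r_i \cdot e_i$ is any such expansion. I would apply $\varphi_j$ to both sides: since $\sum_i r_i \cdot e_i$ is defined (it equals $x$) and $\varphi_j$ is $\srig$-linear, we obtain $\varphi_j(x) \Keq \sum_i r_i \cdot \varphi_j(e_i) = \sum_i r_i \cdot \delta_{i,j}$ by orthogonality, and the unit axiom forces this last sum to equal $r_j$. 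Hence $r_j = \varphi_j(x)$ for every $j$, proving uniqueness. This also shows that the $e_i$-coordinate map, as defined by $(\sum_i r_i e_i)\mapsto r_i$, is literally the map $\varphi_i$, which is $\srig$-linear by hypothesis.

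For the backward direction, suppose $(e_i)_{i\in I}$ satisfies the two stated conditions, and let $\varphi_i \colon \smod \to \srig$ be the $e_i$-coordinate map, which is $\srig$-linear by assumption. The identity $x = \sum_i \varphi_i(x) \cdot e_i$ is the existence clause of the first condition together with the fact that the coordinate map extracts precisely the coefficient $r_i$. To establish orthogonality, I would observe that $e_j = \sum_i \delta_{i,j} \cdot e_i$: indeed, $\delta_{j,j}\cdot e_j = e_j$ and $\delta_{i,j}\cdot e_i = 0$ for $i \neq j$, so by the unit axiom the right-hand sum is defined and equal to $e_j$. By the uniqueness clause, the coefficient family representing $e_j$ must be $(\delta_{i,j})_{i \in I}$, and applying $\varphi_i$ yields $\varphi_i(e_j) = \delta_{i,j}$.

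The main obstacle, such as it is, is being scrupulous about definedness in the partial $\Sigma$-monoid setting: when invoking $\srig$-linearity to push $\varphi_j$ inside $\sum_i r_i \cdot e_i$ one must first know the inner sum is defined (which is free here since it equals $x$), and when rewriting $e_j$ as $\sum_i \delta_{i,j} \cdot e_i$ one must appeal to the $\Sigma$-monoid unit axiom rather than any general convergence argument. Once these points are handled, no further technology is required beyond the definitions of $\Sigma$-linearity and orthogonal dual basis.
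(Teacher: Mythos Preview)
Your proposal is correct. The paper states this lemma without proof, treating it as an immediate consequence of the definitions; your argument is exactly the natural one and handles the definedness bookkeeping carefully (the unit axiom for the $\delta_{i,j}$-sums, and using $\Kle$ from $\srig$-linearity only when the left-hand side is already known to be defined).
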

\begin{remark}
    The second condition is not redundant.
    For example,
    \( B \) is an \( I \)-module and every element \( x \in B \) can be uniquely written as \( x = r \cdot 1 \), \( r \in I \), but the coordinate map \( \varphi \) is not \( I \)-linear: \( \varphi(1+1) = \varphi(1) = 1 \) but \( \IsUndef{(\varphi(1) + \varphi(1))} \).
\end{remark}

For an \( \srig \)-module \( \smod \) with a countable orthogonal dual basis \( (e_i, \varphi_i)_{i \in I} \), it is not necessarily the case that \( \sum_i r_i \cdot e_i \) is defined for every family \( (r_i)_i \) of \( |\srig| \).
If it is the case, then \( \smod \) is isomorphic to the product of \( \#I \) copies of \( \srig \)'s.
An example of a module that has a countable orthogonal dual basis but is not the product of copies of \( \srig \) is a coproduct of copies of \( \srig \): every element \( x \) of \( \coprod_{i \in I} \srig \) can be uniquely written as \( x = \sum_{i \in I} r_i \cdot e_i \) but the sum \( \sum_{i \in I} r_i \cdot e_i \) is defined only if \( r_i = 0 \) for all but one \( i \in I \).

\begin{example}\label{eg:module:vectors}
    Recall functors \( F \colon \CCoh \hookrightarrow \SMod[\CohRing] \), \( G \colon \CFin \hookrightarrow \SMod[\FinRing] \) and \( H \colon \CPCoh \hookrightarrow \SMod[{[0,1]}] \) defined in Sections~\ref{sec:monoid:example:coherence}--\ref{sec:monoid:example:probabilistic}.
    \begin{enumerate}
        \item \( F A \) has a countable orthogonal dual basis for every countable coherence space \( A \).
        A basis is \( (\{a\}, \varphi_a)_{a \in |A|} \), where \( \varphi_a(x) = 1 \) if \( a \in x \) and \( \varphi_a(x) = 0 \) if \( a \notin x \).
        \item \( G A \) has a countable orthogonal dual basis for every countable finiteness space \( A \).
        A basis is \( (\{a\}, \varphi_a)_{a \in |A|} \), where \( \varphi_a(x) = 1 \) if \( a \in x \) and \( \varphi_a(x) = 0 \) if \( a \notin x \).
        \item Perhaps surprisingly \( H A \) has a countable orthogonal dual basis for every probabilistic coherence space \( A \).
        The point is that, for each \( a \in |A| \), \( \gamma_a \defe \sup \{ x(a) \mid x \in \PConfig(A) \} \) converges in \( \mathbb{R} \) and \( e_a \colon |A| \to \mathbb{R}_{\ge 0} \) defined by \( e_a(a) = \gamma_a \) and \( e_a(a') = 0 \) (\( a' \neq a \)) belongs to \( \PConfig(A) \).
        Then \( x \in \PConfig(A) \) can be canonically expressed as \( \sum_{a \in|A|} r_a \cdot e_a \), where \( r_a = x(a)/\gamma_a \).
        Hence \( (e_a, \varphi_a)_{a \in |A|} \) is a basis of \( HA \), where \( \varphi_a(x) = x(a)/\gamma_a \).
    \end{enumerate}
\end{example}

\begin{lemma}\label{lem:classic:vector-closure}
    \( \SVec[\srig] \) contains the zero module and the tensor unit \( \srig \) and is closed under \( \otimes \), \( \multimap \) and countable (co)products.
\end{lemma}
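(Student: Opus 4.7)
The plan is to exhibit an explicit countable orthogonal dual basis for each construction; the algebraic identities then reduce to routine rearrangements, and the interesting question is the propagation of convergence. The zero module is covered by the empty basis, and \( \srig \) by the singleton basis \( (1, \mathrm{id}_\srig) \).

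For a countable family \( (\smod_k)_{k \in K} \) with bases \( (e^k_i, \varphi^k_i)_{i \in I_k} \), I would take for both \( \prod_k \smod_k \) and \( \coprod_k \smod_k \) the candidate basis indexed by the countable disjoint union \( \coprod_k I_k \): basis vectors \( \iota_k(e^k_i) \) (the image of \( e^k_i \) in the \( k \)-th component, zero elsewhere) and dual forms \( \varphi^k_i \circ \pi_k \). For the product, the expansion \( x = \sum_{(k,i)} \varphi^k_i(\pi_k x)\cdot \iota_k(e^k_i) \) is verified coordinate-wise, each coordinate collapsing to the expansion in the respective \( \smod_k \). For the coproduct, using that every element has support in at most one component, the sum has all nonzero terms in that single component and likewise collapses.

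For \( \otimes \) and \( \multimap \), given bases \( (e_i, \varphi_i)_{i \in I} \) on \( \smod \) and \( (f_j, \psi_j)_{j \in J} \) on \( \smodb \), I would index both candidate bases by the countable product \( I \times J \). For \( \smod \otimes \smodb \): basis vectors \( e_i \otimes f_j \) and dual forms \( \chi_{i,j} \) induced, via the universal property of the tensor product (Lemma~\ref{lem:module:tensor-exists}), by the \( \srig \)-bilinear map \( (x,y) \mapsto \varphi_i(x)\psi_j(y) \). For \( \smod \multimap \smodb \): basis vectors \( E_{i,j} \) acting as \( x \mapsto \varphi_i(x)\cdot f_j \), and dual forms \( \Phi_{i,j}(h) \defe \psi_j(h(e_i)) \). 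In each case orthogonality is a one-line computation from the orthogonality of the two given bases.

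The main obstacle will be verifying the expansion identity for \( \otimes \), since a general element is not a pure tensor. For \( \multimap \) the argument is clean and pointwise in \( x \): the chain
\[
h(x) \;=\; h\bigl(\textstyle\sum_i \varphi_i(x)\,e_i\bigr) \;=\; \textstyle\sum_i \varphi_i(x)\, h(e_i) \;=\; \textstyle\sum_{i,j} \varphi_i(x)\,\psi_j(h(e_i))\, f_j
\]
holds by \( \srig \)-linearity of \( h \) and the expansion of each \( h(e_i) \), with convergence propagated through the \( \Sigma \)-monoid associativity axiom; since sums in \( \smod \multimap \smodb \) are pointwise, this yields \( h = \sum_{i,j} \Phi_{i,j}(h)\, E_{i,j} \). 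For \( \otimes \), I would use Lemma~\ref{lem:partial:generated} applied to the presentation from Lemma~\ref{lem:module:tensor-exists} to write every \( z \in \smod \otimes \smodb \) as a countable sum \( z = \sum_n r_n (x_n \otimes y_n) \) of scalar multiples of pure tensors. For a single pure tensor the expansion
\[
x \otimes y \;=\; \textstyle\sum_i \varphi_i(x)\,(e_i \otimes y) \;=\; \textstyle\sum_{i,j} \varphi_i(x)\,\psi_j(y)\,(e_i \otimes f_j)
\]
follows from bilinearity of \( \otimes \). Applying the linear \( \chi_{i,j} \) gives \( \chi_{i,j}(z) = \sum_n r_n \varphi_i(x_n)\psi_j(y_n) \), and a rearrangement of the triple sum \( \sum_{n,i,j} r_n \varphi_i(x_n) \psi_j(y_n)\,(e_i \otimes f_j) \) via the associativity axiom of Definition~\ref{def:sigma-monoid} (all index sets remain countable) delivers \( z = \sum_{i,j} \chi_{i,j}(z)(e_i \otimes f_j) \), with convergence propagated from that of \( \sum_n r_n (x_n \otimes y_n) \).
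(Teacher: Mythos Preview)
Your proposal is correct and follows the same approach as the paper: you construct exactly the same explicit orthogonal dual bases for each operation (pure tensors with bilinear pairings for \( \otimes \), the rank-one maps \( x \mapsto \varphi_i(x)\cdot f_j \) with evaluation-then-projection for \( \multimap \), and the evident injected bases for (co)products). The paper's proof is a bare sketch that only names these bases; your treatment of the tensor case---writing a general element as a countable sum of pure tensors (which the paper states immediately after Lemma~\ref{lem:module:tensor-exists}, so you could cite that directly rather than going through Lemma~\ref{lem:partial:generated}) and then reassociating the triple sum via axiom~(3) of Definition~\ref{def:sigma-monoid}---is precisely the detail needed to make the sketch rigorous.
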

\begin{proof}
    The former claim is trivial.
    Let \( V,W \in \SVec[\srig] \) and \( (e_i, \varphi_i)_{i \in I} \) and \( (e'_j, \varphi'_j)_{j \in J} \) be orthogonal dual bases.
    For \( V \otimes W \), a basis is \( (e_i \otimes e'_j, \varphi_i \varphi'_j)_{(i,j) \in I \times J} \), where \( \varphi_i \varphi'_j \colon V \otimes W \longrightarrow \srig \) is the \( \srig \)-linear map corresponding to the bilinear map \( (\varphi_i \varphi'_j)(x, y) \defe \varphi_i(x) \varphi'_j(y) \).
For \( V \multimap W \), a basis is \( (e'_j \varphi_i, \varphi'_j({-}(e_i))_{(i,j) \in I \times J} \), where \( (\varphi'_j({-}(e_i)) \) maps \( f \in (V \multimap W) \) to \( \varphi'_j(f(e_i)) \).
    If \( V_i \) has a basis \( (e_{i,j}, \varphi_{i,j})_{j \in J_i} \) for every \( i \in I \), a basis for the product \( \prod_{i \in I} V_i \) is \( (e_{i,j}, \varphi_{i,j})_{i \in I, j \in J_i} \).
    The coproduct \( \coprod_{i \in I} V_i \) has the same basis.
\end{proof}

Lemma~\ref{lem:classic:vector-closure} provides us with a way to express an \( \srig \)-linear map \( f \colon V \longrightarrow W \) between modules with orthogonal dual bases as a matrix.
Let \( (e_i, \varphi_i)_{i \in I} \) and \( (d_j, \psi_j)_{j \in J} \) be bases of \( V \) and \( W \).
Then \( (d_j \varphi_i, \psi_j({-}(e_i)))_{(i,j) \in I \times J} \) is a basis of \( V \multimap W \).
So
\begin{equation*}
    f = \sum_{(i,j) \in I \times J} (\psi_j(f(e_i))) \cdot (d_j \varphi_i).
\end{equation*}
Let us write \( f = (f_{i,j})_{i \in I, j \in J} \) where \( f_{i,j} = \psi_j(f(e_i)) \).
An easy calculation shows that the matrix representation of a composite \( g \circ f \) is obtained by the matrix composition of matrix representations, i.e.~\( (g \circ f)_{i,k} = \sum_{j} g_{j,k} f_{i,j} \) for \( f = (f_{i,j})_{i \in I, j \in J} \) and \( g = (g_{j,k})_{j \in J, k \in K} \).
We emphasise here that a matrix representation depends on the choice of a basis; different choices may result in different matrices.
\begin{example}
    We use bases and symbols in Example~\ref{eg:module:vectors}.
    \begin{itemize}
        \item Recall that a linear map \( f \colon A \longrightarrow B \) between coherence spaces is a relation \( f \subseteq |A| \times |B| \) satisfying a certain property.
        By using bases given in Example~\ref{eg:module:vectors}(1),
        \( (Ff)_{a,b} = 1 \Leftrightarrow (a,b) \in f \).
\item Recall that a linear map \( f \colon A \longrightarrow B \) between probabilistic coherence spaces is a matrix \( f = (f_{a,b})_{a \in |A|, b \in |B|} \).
        Let us use the bases and symbols in Example~\ref{eg:module:vectors}(3).
        Then \( (Hf)_{a,b} = \varphi_b(Hf(e_a)) = (\gamma_a/\gamma_b) f_{a,b} \).
        Let us calculate the composition \( (Hg) \circ (Hf) \) using the matrix representation: \( ((Hg) \circ (Hf))_{a,c} = \sum_{b \in |B|} (Hg)_{b,c} (Hf)_{a,b} = \sum_{b \in |B|} ((\gamma_b/\gamma_c) g_{b,c}) \cdot ((\gamma_a/\gamma_b) f_{a,b}) = (\gamma_a/\gamma_c) \sum_{b} g_{b,c} f_{a,b} = (H(g \circ f))_{a,c} \) as required.
    \end{itemize}
\end{example}

Given \( V \in \SVec[\srig] \), the previous lemma shows that \( V^{\bot\bot} = (V \multimap \srig) \multimap \srig \) also belongs to \( \SVec[\srig] \).
Hence the double-negation monad \( ({-})^{\bot\bot} \) on \( \SMod[\srig] \) restricts to a monad on \( \SVec[\srig] \).
We show that it is idempotent: the proof is a direct calculation using a basis.
\begin{lemma}\label{lem:classic:vector-idempotent}
    The restriction of the double-negation monad \( ({-})^{\bot\bot} \colon \SMod[\srig] \longrightarrow \SMod[\srig] \) to \( \SVec[\srig] \) is idempotent.
\end{lemma}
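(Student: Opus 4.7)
The plan is to verify the equivalent condition $T\eta_V = \eta_{TV}$ for the monad $T := (-)^{\bot\bot}$ restricted to $\SVec[\srig]$, which is a standard characterisation of idempotency (equivalent to $\mu$ being an isomorphism). The key preliminary is to compute the basis of $V^{\bot\bot}$ explicitly: by two applications of the construction in Lemma~\ref{lem:classic:vector-closure}, if $V$ has orthogonal dual basis $(e_i, \varphi_i)_{i \in I}$ then $V^{\bot}$ has basis $(\varphi_i, \mathrm{ev}_{e_i})_{i \in I}$ with $\mathrm{ev}_{e_i}(\psi) := \psi(e_i)$, and $V^{\bot\bot}$ has basis $(\mathrm{ev}_{e_i}, \mathrm{ev}_{\varphi_i})_{i \in I}$. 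Consequently the unit $\eta_V\colon V \to V^{\bot\bot}$, defined by $\eta_V(x)(\psi) := \psi(x)$, maps each basis element $e_i$ of $V$ bijectively to the basis element $\mathrm{ev}_{e_i}$ of $V^{\bot\bot}$.

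It now suffices to verify that the two $\srig$-linear maps $(\eta_V)^{\bot\bot}$ and $\eta_{V^{\bot\bot}}$ from $V^{\bot\bot}$ to $V^{\bot\bot\bot\bot}$ agree on each basis element $\mathrm{ev}_{e_i} = \eta_V(e_i)$; $\srig$-linearity together with the basis expansion $F = \sum_{i} F(\varphi_i) \cdot \mathrm{ev}_{e_i}$ in $V^{\bot\bot}$ then extends the agreement to all of $V^{\bot\bot}$. Unwinding the definitions, $(\eta_V)^{\bot}(\Psi) = \Psi \circ \eta_V$ and $(\eta_V)^{\bot\bot}(F) = F \circ (\eta_V)^{\bot}$, so for every $\Psi \in V^{\bot\bot\bot}$,
\[ (\eta_V)^{\bot\bot}(\eta_V(e_i))(\Psi) \;=\; \eta_V(e_i)(\Psi \circ \eta_V) \;=\; (\Psi \circ \eta_V)(e_i) \;=\; \Psi(\eta_V(e_i)) \;=\; \eta_{V^{\bot\bot}}(\eta_V(e_i))(\Psi). \]
This yields $T\eta_V = \eta_{TV}$ and hence idempotency of $T$.

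The only delicate step is correctly identifying the basis of $V^{\bot\bot}$ via two iterated applications of Lemma~\ref{lem:classic:vector-closure}; once one sees that $\eta_V$ acts as a ``change of basis'' between the chosen bases of $V$ and $V^{\bot\bot}$, the claim reduces to the one-line computation above, and definedness of the infinite sums used in the linearity step is automatic because the basis expansion of $F$ is defined in $V^{\bot\bot}$ by the basis property and $\srig$-linearity transports this definedness to $V^{\bot\bot\bot\bot}$.
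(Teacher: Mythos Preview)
Your proof is correct and takes a somewhat different route from the paper's. The paper establishes idempotency by showing directly that $\eta_{V^{\bot\bot}} \circ \mu_V = \ident_{V^{\bot\bot\bot\bot}}$; it does this by a coordinate computation in the orthogonal dual basis $(\hat{\hat{e}}_i,\hat{\hat{\varphi}}_i)_{i\in I}$ of $V^{\bot\bot\bot\bot}$, checking $\hat{\hat{\varphi}}_i(\eta_{V^{\bot\bot}}(\mu_V(\hat{\hat{e}}_j))) = \delta_{i,j}$ for all $i,j$. You instead prove the equivalent condition $T\eta_V = \eta_{TV}$, and your key observation is that the chosen basis vectors $\mathrm{ev}_{e_i}$ of $V^{\bot\bot}$ are precisely the images $\eta_V(e_i)$; naturality of $\eta$ (which is exactly the one-line computation you wrote out) then forces $T\eta_V$ and $\eta_{TV}$ to agree on these basis vectors, and linearity plus the basis expansion extends this to all of $V^{\bot\bot}$. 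Your argument is a bit more conceptual---it never needs to name a basis of the fourfold dual, and the only explicit calculation is the naturality square---whereas the paper's version is more hands-on but makes the role of $\mu$ visible. Both rely on the same ingredient from Lemma~\ref{lem:classic:vector-closure}, namely the explicit description of the induced basis on $V^{\bot}$ and $V^{\bot\bot}$.
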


Let \( \SVec[\srig]^{\bot\bot} \) be the full subcategory of \( \SVec[\srig] \) consisting of \( V \in \SVec[\srig] \) such that the canonical morphism \( \eta_V \colon V \longrightarrow V^{\bot\bot} \) is an isomorphism.
Since the double-negation monad \( ({-})^{\bot\bot} \) is idempotent, \( \SVec[\srig]^{\bot\bot} \) is a reflective subcategory of \( \SVec[\srig] \).
\begin{theorem}\label{thm:classic:vector}
    \( \SVec[\srig]^{\bot\bot} \) is a model of classical MALL (i.e.~a \( * \)-autonomous category with products), where the tensor product is given by \( (\smod \otimes \smodb)^{\bot\bot} \).
    If \( \SVec[\srig] \) is closed under the cofree cocommutative comonoid \( ! \) on \( \SMod[\srig] \), then \( \SVec[\srig]^{\bot\bot} \) has an linear exponential comonad whose underlying functor is \( (!\smod)^{\bot\bot} \). 
\end{theorem}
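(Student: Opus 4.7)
The argument proceeds in two phases. First I would establish the $*$-autonomous structure with products on $\SVec[\srig]^{\bot\bot}$ by the standard double-negation reflection construction. Second I would transport the linear exponential comonad $!$ from $\SVec[\srig]$ (where it restricts by hypothesis) to $\SVec[\srig]^{\bot\bot}$ along the reflector $L = (-)^{\bot\bot}$, whose strong monoidality is the essential input.

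For the $*$-autonomous structure, since $(-)^{\bot\bot}$ is idempotent on $\SVec[\srig]$ (Lemma~\ref{lem:classic:vector-idempotent}), $\SVec[\srig]^{\bot\bot}$ is a reflective subcategory with reflector $L$. The plan is to check the data piece by piece. First, $\srig \in \SVec[\srig]^{\bot\bot}$: because $\srig$ is the tensor unit of $\SMod[\srig]$ we have $\srig \multimap \srig \cong \srig$ and hence $\srig^{\bot\bot} \cong \srig$. Second, $V \multimap W \in \SVec[\srig]^{\bot\bot}$ whenever $W \in \SVec[\srig]^{\bot\bot}$, via the natural iso $V \multimap W \cong V \multimap (W^\bot \multimap \srig) \cong (V \otimes W^\bot) \multimap \srig = (V \otimes W^\bot)^\bot$ together with the general fact that any object of the form $X^\bot$ is $\bot\bot$-closed (a direct consequence of the triangle identity for the monad). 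Third, the definitions
\begin{equation*}
    V \otimes^{\bot\bot} W \;\defe\; (V \otimes W)^{\bot\bot}, \qquad V \multimap W \quad (\text{unchanged})
\end{equation*}
endow $\SVec[\srig]^{\bot\bot}$ with an SMCC structure, witnessed by the adjunction chain
$
    \SVec[\srig]^{\bot\bot}(V \otimes^{\bot\bot} W,\, Z)
    \cong \SVec[\srig](V \otimes W,\, Z)
    \cong \SVec[\srig](V,\, W \multimap Z)
$
(the first iso by $L \dashv \iota$ and $Z \in \SVec[\srig]^{\bot\bot}$, the second by the SMCC structure of $\SVec[\srig]$). $*$-autonomy then reduces to the iso $V \cong V^{\bot\bot}$, which holds definitionally on $\SVec[\srig]^{\bot\bot}$. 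Countable products are inherited from $\SVec[\srig]$, since reflective subcategories are closed under limits in the ambient category.

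For the exponential, the key observation is that $L$ is \emph{strong} symmetric monoidal as a functor $(\SVec[\srig], \otimes, \srig) \longrightarrow (\SVec[\srig]^{\bot\bot}, \otimes^{\bot\bot}, \srig)$, since by definition $L(V \otimes W) = (V \otimes W)^{\bot\bot} \cong LV \otimes^{\bot\bot} LW$ and $L\srig \cong \srig$. By hypothesis $!$ restricts to an endofunctor on $\SVec[\srig]$, and this restriction still carries the structure of a linear exponential comonad (the Seely isomorphisms and the comonad laws transfer from $\SMod[\srig]$ because the inclusion $\SVec[\srig] \hookrightarrow \SMod[\srig]$ preserves tensor and products). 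Setting $!^{\bot\bot} V \defe L(!V) = (!V)^{\bot\bot}$ on $V \in \SVec[\srig]^{\bot\bot}$, the comonad structure transports along $L$: the counit is $L(\epsilon_V) \colon L!V \to LV \cong V$, and the comultiplication is obtained from $L(\delta_V) \colon L!V \to L!!V$ composed with the strong-monoidal coherence identifying $L!!V$ with $L!L!V$ (using idempotence of $L$). The Seely isomorphism reads
\begin{equation*}
    !^{\bot\bot}(V \times W) \;=\; L!(V \times W) \;\cong\; L(!V \otimes !W) \;\cong\; L!V \otimes^{\bot\bot} L!W \;=\; !^{\bot\bot}V \otimes^{\bot\bot} !^{\bot\bot}W,
\end{equation*}
using, in order, the Seely iso for $!$ in $\SVec[\srig]$, the strong monoidality of $L$, and the fact that products in $\SVec[\srig]^{\bot\bot}$ coincide with those in $\SVec[\srig]$. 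The remaining comonad and exponential axioms follow by naturality together with the monoidal coherence of $L$.

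The main obstacle will be verifying that $L$ is genuinely strong monoidal, i.e.\ that the canonical map $L(V \otimes W) \to L(V^{\bot\bot} \otimes W^{\bot\bot})$ induced by $\eta_V \otimes \eta_W$ is an isomorphism. For continuation-style double-negation monads in a general SMCC this is a well-known consequence of commutativity of the monad, but in the present partial-algebra setting it must be reconciled with the concrete presentation of $\otimes$ from Lemma~\ref{lem:module:tensor-exists} and the control of elements of $\smod \otimes \smodb$ by elementary tensors afforded by Lemma~\ref{lem:module:tensor-peak}. Once this is secured, the remainder of the argument is essentially monoidal-coherence bookkeeping around the idempotent reflection $L \dashv \iota$.
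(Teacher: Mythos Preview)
Your treatment of the MALL part is essentially the same as the paper's (which is equally terse): the reflection along the idempotent monad, the bilinear-map description of $(V\otimes W)^{\bot\bot}$, and closure of $\SVec[\srig]^{\bot\bot}$ under $\multimap$ all match. One small gap: you claim that ``any object of the form $X^\bot$ is $\bot\bot$-closed'' as a ``direct consequence of the triangle identity''. The triangle identity only gives that $\eta_{X^\bot}\colon X^\bot \to X^{\bot\bot\bot}$ is a \emph{split mono}; that it is an isomorphism is not automatic in a general SMCC and here uses the orthogonal basis in the same way as Lemma~\ref{lem:classic:vector-idempotent}. This is easy to fix, but your stated justification is not correct as written.

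For the exponential the paper takes a different and cleaner route: it invokes the linear--non-linear model
\[
\Comon(\SVec[\srig]) \;\rightleftarrows\; \SVec[\srig] \;\rightleftarrows\; \SVec[\srig]^{\bot\bot},
\]
with both left adjoints strong monoidal, and reads off the induced comonad on $\SVec[\srig]^{\bot\bot}$. Your plan is instead to transport the comonad $!$ along $L$ by hand. The problem is your proposed comultiplication: you write ``$L(\delta_V)$ composed with the strong-monoidal coherence identifying $L!!V$ with $L!L!V$ (using idempotence of $L$)''. Strong monoidality of $L$ concerns $L(A\otimes B)\cong LA\otimes^{\bot\bot}LB$ and says nothing about $L(!A)$ versus $!(LA)$; nor does idempotence of $L$ give $L!!V\cong L!L!V$. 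If one unwinds the composite adjunction, the correct comultiplication is $L\big(!\eta_{!V}\circ\delta_V\big)\colon L!V\to L!L!V$, with no isomorphism required between $L!!V$ and $L!L!V$. One can verify the comonad and Seely laws for this directly, but it is exactly the bookkeeping that the LNL packaging absorbs. So your approach is salvageable, but the step you flag as ``monoidal-coherence bookkeeping'' hides a genuine construction that your outline gets wrong; the paper's LNL argument sidesteps it entirely.
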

\begin{proof}
    For the former, the point is that \( (\smod \otimes \smodb)^{\bot\bot} \) is the representing object of bilinear maps.
    For the latter, we have
    \begin{equation*}
        \xymatrix{
            \Comon(\SVec[\srig])
            \ar@{}[r]|(0.6){\bot}
            \ar@<5pt>[r]
            & \SVec[\srig] \ar@<5pt>[l] \ar@<5pt>[r] \ar@{}[r]|(0.47){\bot} & \SVec[\srig]^{\bot\bot} \ar@<5pt>[l]
        }
    \end{equation*}
    and left adjoints are strong monoidal.
\end{proof}

We give a sufficient condition for the closure of \( \SVec[\srig] \) by the cofree cocommutative comonoid \( ! \) of \( \SMod[\srig] \).
An \emph{ideal} of \( \srig \) is a submodule of \( \srig \in \SMod[\srig]\); a \emph{downward-closed sum-reflecting ideal} is a downward-closed sum-reflecting submodule of \( \srig \).
\begin{theorem}\label{thm:classical:vector:exponential}
    Suppose that every downward-closed and sum-reflecting ideal of \( \srig \) belongs to \( \SVec[\srig] \).
    Then \( \SVec[\srig] \) is closed under the cofree cocommutative comonoid \( ! \) of \( \SMod[\srig] \).
\end{theorem}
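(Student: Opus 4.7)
The plan is to exhibit an explicit countable orthogonal dual basis for $!M$ under the assumption $M \in \SVec[\srig]$. Fix a basis $(e_i, \varphi_i)_{i \in I}$ of $M$. The intended basis of $!M$ is indexed by the countable set $\mathcal{M}_{\mathrm{fin}}(I)$ of finite multisets over $I$, mirroring how a symmetric algebra over a finite-dimensional vector space has a monomial basis. For $\mu = [i_1, \dots, i_n] \in \mathcal{M}_{\mathrm{fin}}(I)$, the candidate coordinate functional $\varphi_\mu : !M \to \srig$ is built from the cocommutative comonoid structure of $!M$: iterate the comultiplication to obtain $\delta^{(n)} : !M \to (!M)^{\otimes n}$, apply the $n$-fold dereliction $\mathrm{der}^{\otimes n} : (!M)^{\otimes n} \to M^{\otimes n}$, then $\varphi_{i_1} \otimes \cdots \otimes \varphi_{i_n} : M^{\otimes n} \to \srig^{\otimes n} \cong \srig$, and symmetrize appropriately over orderings compatible with $\mu$. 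Dual candidate basis vectors $e_\mu \in !M$ are produced via the universal property of $!M$ from a symmetric-power comonoid built out of $M$ and the chosen $e_i$'s.

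The subsequent steps are, in order: (i) verify orthogonality $\varphi_\mu(e_\nu) = \delta_{\mu,\nu}$ by direct calculation using $\varphi_i(e_j) = \delta_{i,j}$ and the coalgebra axioms; (ii) establish the expansion $x = \sum_{\mu} \varphi_\mu(x) \cdot e_\mu$ for each $x \in !M$; and (iii) confirm each $\varphi_\mu$ is $\srig$-linear, in particular that it preserves partial countable sums. Closure of $\SVec[\srig]$ under $\otimes$ and countable (co)products (Lemma~\ref{lem:classic:vector-closure}) handles the building blocks $M^{\otimes n}$ and their symmetric quotients, but $!M$ arises as a limit-like subcomonoid of these, and inheriting a basis on that subobject needs more care. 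Uniqueness in (ii) is immediate from (i); the substance is existence and definedness of the sum.

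The main obstacle is step (ii), namely matching the formal basis expansion to an actual sum inside $!M$ and ensuring the coordinate maps land in $\srig$ rather than in a strictly larger completion $\CComp{\srig}$. Concretely, the image of each $\varphi_\mu$ is an $\srig$-submodule of $\srig$, i.e.\ an ideal, and the structural properties inherited from the comonoid structure of $!M$ and Lemma~\ref{lem:module:completion-dcfull} make it downward-closed and sum-reflecting. This is precisely where the hypothesis enters: it grants that ideal a countable orthogonal dual basis, which I expect can be used to decompose coefficients from a slightly larger ``completed'' coordinate back into $\srig$ and to recognise the formal sum $\sum_\mu \varphi_\mu(x)\cdot e_\mu$ as genuinely converging in $!M$. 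Once this is established, identifying $!M$ with the module carrying the basis $(e_\mu,\varphi_\mu)_{\mu \in \mathcal{M}_{\mathrm{fin}}(I)}$ follows from the universal property of the cofree cocommutative comonoid, yielding $!M \in \SVec[\srig]$ as required.
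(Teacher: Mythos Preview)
Your intuition is right in two places: the basis of $!M$ is morally indexed by finite multisets over $I$, and the hypothesis on ideals is there to repair a definedness problem with the symmetrised monomials. But the plan has two genuine gaps.

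First, the index set $\mathcal{M}_{\mathrm{fin}}(I)$ is not enough. For a multiset $\xi = [i_1,\dots,i_n]$, the symmetrised element $\sum_{s \lhd \xi} e_{s_1}\otimes\cdots\otimes e_{s_n}$ need not exist in $V^{\otimes n}$: the sum is partial. What does exist is the map
\[
e_\xi \colon R_\xi \longrightarrow \SymTensor{V}{n},\qquad r \mapsto \sum_{s \lhd \xi} r\cdot e_s,
\]
where $R_\xi = \{\, r \in |\srig| \mid \text{the sum converges}\,\}$ is a downward-closed sum-reflecting ideal. The hypothesis then gives $R_\xi$ an orthogonal basis $(e_{\xi,j},\varphi_{\xi,j})_{j\in J_\xi}$, and the actual orthogonal basis of $\SymTensor{V}{n}$ is $(e_\xi\circ e_{\xi,j},\ \varphi_{\xi,j}\circ\varphi_\xi)_{\xi,j}$, indexed by pairs $(\xi,j)$. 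A second refinement of the same kind is needed when one passes from $\SymTensor{V}{n}$ to the relevant submodule sitting inside $!V$. So the hypothesis is not used to ``push coefficients back from a completion'' as you suggest; it supplies the extra index layers that turn the $e_\xi$ from partial maps into honest basis vectors.

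Second, your construction of $e_\mu \in {!M}$ ``via the universal property of $!M$'' cannot work as stated: the cofree-comonoid universal property produces comonoid morphisms $C \to {!M}$ from comonoids $C$ over $M$, not individual elements, and your step~(ii) has no leverage without a concrete description of the elements of $!M$. The paper proceeds in the opposite order. It first builds an explicit candidate module $\S V$, namely the downward-closed sum-reflecting submodule of $\prod_n \SymTensor{V}{n}$ consisting of elements bounded by some $\hat d(y)$ with $y\in{!V}$; it proves $\S V$ has an orthogonal basis by the ideal argument above; it then equips $\S V$ with a comultiplication (the delicate step, requiring that $\SymTensor{V}{n}\otimes\SymTensor{V}{m}\hookrightarrow\SymTensor{V}{n,m}$ is downward closed and sum reflecting); and only at the end does it invoke the universal property, in both directions, to obtain $\S V \cong {!V}$. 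Without an explicit model like $\S V$, there is no way to verify your expansion $x=\sum_\mu\varphi_\mu(x)\cdot e_\mu$.
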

\begin{proof}
    The idea is to realise \( !V \) as a submodule \( \S V \) of \( \prod_n \SymTensor{V}{n} \), where \( \SymTensor{V}{n} \) is the \( n \)-th symmetric tensor power, i.e.~the equaliser of the \( n! \) permutations of \( V^{\otimes n} \longrightarrow V^{\otimes n} \).
    Intuitively the restriction of a map \( \prod_{n} V^{\otimes n} \longrightarrow (\prod_n V^{\otimes n}) \boxtimes (\prod_n V^{\otimes n}) \) (where \( \boxtimes \) is the tensor product \( \otimes \); we use symbol different from \( \otimes \) in order to avoid confusion) given by
    \begin{equation*}
        x_1 \otimes \dots \otimes x_n \mapsto \sum_{0 \le i \le n} (x_1 \otimes \dots \otimes x_i) \boxtimes (x_{i+1} \otimes \dots x_n)
    \end{equation*}
    to \( \S V \longrightarrow \S V \otimes \S V \)
should be the comultiplication.
    \ifwithappendix
    The main concern is the definedness: see Appendix~\ref{sec:appx:classic:exponential-proof} for details.
    \else
    The main concern is the definedness.
    \fi
\end{proof}

\begin{example}
    Here we just list some results.
    The equivalence below means the equivalence as \( * \)-autonomous categories.
    \begin{itemize}
        \item \( \SVec[\CohRing]^{\bot\bot} \) is equivalent to \( \CCoh \), the category of countable coherence spaces.  Since downward-closed sum-reflecting ideals of \( \CohRing \) are \( \{ 0,1 \} \) and \( \{ 0 \} \), both of which belong to \( \SVec[\CohRing] \), \( \SVec[\CohRing] \) is closed under \( ! \).
        The induced exponential comonad on \( \SVec[\CohRing]^{\bot\bot} \) is the mulitset exponential.
        \item \( \SVec[\FinRing]^{\bot\bot} \) is equivalent to \( \CFin \), the category of countable finiteness spaces.  
        By the same reason as above, \( \SVec[\FinRing] \) is closed under \( ! \), so it is a model of classical linear logic.
\item \( \SVec[\BoolRing]^{\bot\bot} \) is equivalent to the category of countable sets and relations.  \( \SVec[\BoolRing] \) is closed under \( ! \) and the induced exponential comonad on \( \SVec[\BoolRing]^{\bot\bot} \) is the finite multiset comonad.
        \item \( \SVec[{[0,1]}]^{\bot\bot} \) is equivalent to \( \CPCoh \).  Unfortunately \( \SVec[{[0,1]}] \) is not closed under \( ! \) of \( \SMod[{[0,1]}] \).  See the next subsection for a way to reconstruct \( ! \) on \( \CPCoh \).
        \item Let \( \srig \) be a complete \( \Sigma \)-semiring.  Then \( \SVec[\srig]^{\bot\bot} \) is equivalent to the weighted relational model \( \CWRel_\srig \) \cite{Laird2013,Laird2016} over \( \srig \).
        We can show \( \SVec[\srig] \) is closed under \( ! \) and hence \( \SVec[\srig]^{\bot\bot} \) has a linear exponential comonad, although the assumption of Theorem~\ref{thm:classical:vector:exponential} does not hold for complete \( \Sigma \)-semiring \( \srig \) in general.
\end{itemize}
\end{example}

\subsection{Modules with countable dual basis}
The following definition gives another well-behaved class of modules.
The definition comes from a characterisation of finitely generated projective module, although in our \( \Sigma \)-semiring setting these modules are no longer projective in the categorical sense.
\begin{definition}[Module with countable dual basis]
    Let \( \srig \) be a \( \Sigma \)-semiring and \( \smod \) be an \( \srig \)-module.
    A family \( (e_i, \varphi_i)_{i \in I} \) of pairs of \( e_i \in |\smod| \) and \( \varphi_i \colon \smod \longrightarrow \srig \) is a \emph{dual basis} if \( x = \sum_{i \in I} \varphi_i(x) \cdot e_i \) for every \( x \in |\smod| \).
    A dual basis is \emph{countable} if \( I \) is countable.
    We write \( \DBMod[\srig] \) for the full subcategory of \( \SMod[\srig] \) consisting of those having countable dual bases.
\end{definition}
Thus the only difference between an orthogonal dual basis and a dual basis is whether \( \varphi_i(e_j) = \delta_{i,j} \).

\begin{proposition}
    Let \( \smod \) be an \( \srig \)-module.
    It has a countable dual basis, i.e.~\( \smod \in \DBMod[\srig] \), if and only if there is a retraction \( \smod \lhd V \) in \( \SMod[\srig] \) for some module \( V \in \SVec[\srig] \) having a countable orthogonal dual basis.
    In other words, \( \DBMod[\srig] \) is equivalent to the idempotent completion (or Karoubi envelop) of \( \SVec[\srig] \).
\end{proposition}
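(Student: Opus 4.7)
The plan is to prove the retract characterisation by handling each direction of the biconditional separately; the idempotent-completion statement then drops out formally. For the easy (backward) direction, suppose $\smod \lhd V$ via $s \colon \smod \to V$ and $r \colon V \to \smod$ with $r \circ s = \mathrm{id}_{\smod}$, and let $(d_i, \psi_i)_{i \in I}$ be a countable orthogonal dual basis of $V$. I claim $(r(d_i),\, \psi_i \circ s)_{i \in I}$ is a countable dual basis of $\smod$: each $\psi_i \circ s$ is $\srig$-linear as a composite, and for every $x \in \smod$ the sum $\sum_i \psi_i(s(x)) \cdot d_i = s(x)$ converges in $V$, so applying $r$ and using its linearity gives $x = r(s(x)) = \sum_i \psi_i(s(x)) \cdot r(d_i)$. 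The orthogonality condition $\psi_i(d_j) = \delta_{i,j}$ is not preserved by this transport, which is precisely why $\DBMod[\srig]$ drops it.

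For the harder (forward) direction, given a countable dual basis $(e_i, \varphi_i)_{i \in I}$ of $\smod$, I will take
\begin{equation*}
V \;\defe\; \Bigl\{\, v = (v_i)_{i \in I} \in \textstyle\prod_{i \in I} \srig \;\Bigm|\; \sum_{i \in I} v_i \cdot e_i \text{ converges in } \smod \,\Bigr\},
\end{equation*}
equipped with the $\srig$-module structure inherited from $\prod_i \srig$: a sum $\sum_j v^{(j)}$ is defined in $V$ precisely when it is defined componentwise in $\prod_i \srig$ and its limit lies in $V$. Closure under the action and the $\Sigma$-monoid axioms descend from those of $\prod_i \srig$, the action case requiring distributivity. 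I then set $s(x) \defe (\varphi_i(x))_i$ and $r(v) \defe \sum_i v_i \cdot e_i$; by construction $s$ lands in $V$ and $r \circ s = \mathrm{id}_{\smod}$ is exactly the dual basis identity, while linearity of $r$ amounts to swapping two nested countable sums, which the exchange axiom~(3) for $\Sigma$-monoids together with distributivity permits whenever the outer sum is known to converge---a condition built into the very definition of $V$.

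The main obstacle, and the technical heart of the argument, is verifying that $V \in \SVec[\srig]$. The naive alternatives fail: $\coprod_i \srig$ is too small to contain $s(x)$ when the support $\{i \mid \varphi_i(x) \neq 0\}$ is infinite, while $\prod_i \srig$ is too large for $r$ to be defined on all of it, so $V$ must be cut out exactly where the constraint forcing $r$ to be total holds. As a candidate orthogonal dual basis of $V$, I take $e_i^* \in V$ with $1$ in position $i$ and $0$ elsewhere (which lies in $V$ because $\sum_j \delta_{i,j} \cdot e_j = e_i$ trivially converges) and $\pi_i \colon V \to \srig$ the restriction of the $i$-th projection. Orthogonality $\pi_i(e_j^*) = \delta_{i,j}$ is immediate, and for any $v = (v_i)_i \in V$ the sum $\sum_i v_i \cdot e_i^*$ converges to $v$ componentwise in $\prod_i \srig$, with its value lying in $V$ by assumption, so the sum converges in $V$. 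Hence $(e_i^*, \pi_i)_{i \in I}$ is a countable orthogonal dual basis of $V$.

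The idempotent-completion statement then follows formally: the inclusion $\SVec[\srig] \hookrightarrow \DBMod[\srig]$ is fully faithful since both are full subcategories of $\SMod[\srig]$ and every orthogonal dual basis is a dual basis; every idempotent $e \colon V \to V$ with $V \in \SVec[\srig]$ splits in $\SMod[\srig]$ as the equaliser of $e$ and $\mathrm{id}_V$ (Lemma~\ref{lem:module:equaliser}), and by the backward direction the splitting lies in $\DBMod[\srig]$; conversely, by the forward direction every $\smod \in \DBMod[\srig]$ arises as the splitting of $s \circ r \colon V \to V$ for some $V \in \SVec[\srig]$.
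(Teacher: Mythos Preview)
The paper states this proposition without proof, so your argument fills a genuine gap. Your construction is the natural one and both directions of the biconditional, as well as the idempotent-completion statement, are handled correctly in outline.

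There is one point that deserves more care. You assert that ``the $\Sigma$-monoid axioms descend from those of $\prod_i \srig$'' for your submodule $V$, but this is not automatic: axiom~(3) requires that whenever $\sum_n v^{(n)}$ is defined in $V$, every regrouping $\sum_j \sum_k v^{(\sigma(j,k))}$ is also defined in $V$, which in turn requires each inner partial sum to land in $V$ --- not merely to converge componentwise. What makes this work is that $V$ is \emph{downward closed} in $\prod_i \srig$ with respect to the associated preorder: if $v' \le v$ componentwise and $v \in V$, then writing $v_i = v'_i + u_i$ gives $v'_i \cdot e_i + u_i \cdot e_i = v_i \cdot e_i$ by distributivity, and since $\sum_i v_i \cdot e_i$ converges in $\smod$, axiom~(3) there forces the partial sum $\sum_i v'_i \cdot e_i$ to converge, so $v' \in V$. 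Once downward-closedness is established, every partial sum of a $V$-convergent family lies in $V$ and axiom~(3) for $V$ follows. You invoke exactly these ingredients (distributivity and the exchange axiom in $\smod$) later when checking linearity of $r$, so the repair is simply to make this observation earlier and apply it to the verification that $V$ is a $\Sigma$-monoid.
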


By an argument similar to the case of \( \SVec[\srig] \), one can prove that \( \DBMod[\srig] \) is closed under \( \otimes \), \( \multimap \) and countable (co)products.
\begin{lemma}\label{lem:classic:proj-closure}
    The zero module and \( \srig \) have countable dual bases. 
    \( \srig \)-modules with countable dual bases are closed under \( \otimes \), \( \multimap \) and countable (co)products.
\end{lemma}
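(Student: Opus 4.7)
The plan is to deduce the lemma from the preceding proposition, which identifies $\DBMod[\srig]$ with the idempotent completion of $\SVec[\srig]$, together with the closure properties of $\SVec[\srig]$ already established in Lemma~\ref{lem:classic:vector-closure}. The first claim is immediate: both the zero module and $\srig$ lie in $\SVec[\srig]$, and since an orthogonal dual basis is in particular a dual basis, they lie in $\DBMod[\srig]$.

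For the closure properties, I would fix $\smod, \smodb \in \DBMod[\srig]$, use the previous proposition to pick $V, W \in \SVec[\srig]$ together with section/retraction pairs $(s_\smod \colon \smod \to V,\ r_\smod \colon V \to \smod)$ and $(s_\smodb, r_\smodb)$ satisfying $r_\smod \circ s_\smod = \id_\smod$ and $r_\smodb \circ s_\smodb = \id_\smodb$, and then lift these to retractions of each construction:
\begin{itemize}
\item $\smod \otimes \smodb$ is a retract of $V \otimes W$ via $s_\smod \otimes s_\smodb$ and $r_\smod \otimes r_\smodb$, using functoriality of $\otimes$;
\item $\smod \multimap \smodb$ is a retract of $V \multimap W$ via $f \mapsto s_\smodb \circ f \circ r_\smod$ and $g \mapsto r_\smodb \circ g \circ s_\smod$, whose composite sends $f$ to $r_\smodb \circ s_\smodb \circ f \circ r_\smod \circ s_\smod = f$;
\item for a countable family $(\smod_i)_{i \in I}$ with each $\smod_i$ a retract of $V_i \in \SVec[\srig]$, the (co)product $\prod_i \smod_i$ (resp.\ $\coprod_i \smod_i$) is a retract of $\prod_i V_i$ (resp.\ $\coprod_i V_i$) via the (co)pairing of the $s_{\smod_i}$'s and $r_{\smod_i}$'s.
\end{itemize}
In each case, Lemma~\ref{lem:classic:vector-closure} guarantees that the enveloping object remains in $\SVec[\srig]$, and then the previous proposition places the retract in $\DBMod[\srig]$.

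I do not anticipate any serious obstacle. The only point worth a brief check is that the retraction identities really do compose correctly on nose—this is just functoriality of each construction and so is routine. An alternative would be to mirror the proof of Lemma~\ref{lem:classic:vector-closure} and directly concoct a countable dual basis of $\smod \otimes \smodb$, $\smod \multimap \smodb$, etc.\ from bases of the constituents; this also works, but it forces a separate argument for each connective, whereas the retraction argument lets us reuse Lemma~\ref{lem:classic:vector-closure} as a black box.
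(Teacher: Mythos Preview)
Your argument is correct. The retraction identities follow from functoriality of $\otimes$, $\multimap$, $\prod$, $\coprod$ exactly as you say, and together with the Karoubi-envelope characterisation of $\DBMod[\srig]$ and Lemma~\ref{lem:classic:vector-closure} this yields the closure properties cleanly.

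The paper takes the route you flag as an alternative: it simply repeats the proof of Lemma~\ref{lem:classic:vector-closure}, observing that the explicit basis formulas given there (e.g.\ $(e_i \otimes e'_j,\ \varphi_i\varphi'_j)$ for $V \otimes W$) already satisfy the dual-basis identity $x = \sum \varphi(x)\cdot e$ without ever using orthogonality, so the same formulas serve as countable dual bases in the non-orthogonal setting. Your approach is more conceptual and reuses Lemma~\ref{lem:classic:vector-closure} as a black box, at the cost of invoking the preceding proposition; the paper's approach is more pedestrian but has the minor advantage of producing explicit dual bases for each connective, which can be convenient in later calculations. Either is perfectly acceptable here.
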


\begin{lemma}\label{lem:classic:proj-idempotent}
    The restriction of the double-negation monad \( ({-})^{\bot\bot} \) to \( \DBMod[\srig] \) is idempotent.
\end{lemma}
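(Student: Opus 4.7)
The plan is to reduce the claim to the corresponding statement for $\SVec[\srig]$ (Lemma~\ref{lem:classic:vector-idempotent}) via the preceding Proposition, which characterises $\DBMod[\srig]$ as the Karoubi envelope of $\SVec[\srig]$. Write $N \defe ({-})^{\bot\bot}$ for the double-negation functor on $\SMod[\srig]$ and $\eta$ for its unit, so that for $V \in \SVec[\srig]$ we know $\eta_{NV}$ is an isomorphism.

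First I would verify closure of $\DBMod[\srig]$ under $N$: given $\smod \in \DBMod[\srig]$, the Proposition supplies $V \in \SVec[\srig]$ and $\srig$-linear maps $s \colon \smod \longrightarrow V$, $r \colon V \longrightarrow \smod$ with $r \circ s = \mathrm{id}_{\smod}$. Applying the functor $N$ gives $Nr \circ Ns = \mathrm{id}_{N\smod}$, that is, a retraction $N\smod \lhd NV$. By Lemma~\ref{lem:classic:vector-closure}, $\SVec[\srig]$ is closed under $\otimes$ and $\multimap$, so $NV = (V \multimap \srig) \multimap \srig$ again lies in $\SVec[\srig]$; consequently $N\smod \in \DBMod[\srig]$.

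Next I would show that $\eta_{N\smod} \colon N\smod \longrightarrow N^2\smod$ is an isomorphism (which is the meaning of idempotence). Naturality of $\eta$ applied to $Nr$ and $Ns$ gives the two squares $\eta_{N\smod} \circ Nr = N^2 r \circ \eta_{NV}$ and $N^2 s \circ \eta_{N\smod} = \eta_{NV} \circ Ns$. I would then set $\phi \defe Nr \circ \eta_{NV}^{-1} \circ N^2 s \colon N^2 \smod \longrightarrow N\smod$, which is well-defined because $\eta_{NV}$ is invertible, and combine these two equations with $Nr \circ Ns = \mathrm{id}_{N\smod}$ and $N^2 r \circ N^2 s = \mathrm{id}_{N^2\smod}$ to obtain $\phi \circ \eta_{N\smod} = \mathrm{id}_{N\smod}$ and $\eta_{N\smod} \circ \phi = \mathrm{id}_{N^2\smod}$.

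The argument is a routine diagram chase once the Karoubi-envelope characterisation is in hand: all the genuinely non-trivial content has been absorbed into Lemma~\ref{lem:classic:vector-idempotent}. I therefore do not anticipate any real obstacle; the only mild subtlety is to transfer idempotence via $\eta$ rather than via the monad multiplication, since $\eta$ behaves functorially along the retraction while the multiplication would force us to chase at one categorical level higher.
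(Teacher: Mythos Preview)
Your proposal is correct and follows essentially the same route as the paper: reduce to Lemma~\ref{lem:classic:vector-idempotent} via the retraction $\smod \lhd V$ with $V \in \SVec[\srig]$, then transfer the isomorphism along the retraction by naturality. The only cosmetic difference is that the paper phrases idempotence through the multiplication $\mu_{\smod} \colon \smod^{\bot\bot\bot\bot} \to \smod^{\bot\bot}$ (exhibited as a retract, in the arrow category, of the isomorphism $\mu_V$), whereas you work with the unit $\eta_{N\smod}$; these are interchangeable since $\mu_{\smod} \circ \eta_{N\smod} = \mathrm{id}$.
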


\( \DBMod[\srig]^{\bot\bot} \) is the full subcategory of \( \DBMod[\srig] \) consisting of \( \srig \)-modules \( \smod \in \DBMod[\srig] \) such that the canonical morphism \( \smod \longrightarrow \smod^{\bot\bot} \) is an isomorphism.
\begin{theorem}
    \( \DBMod[\srig]^{\bot\bot} \) is a model of classical MALL, where the tensor product is given by \( (\smod \otimes \smodb)^{\bot\bot} \).
    If\/ \( \DBMod[\srig] \) is closed under the cofree cocommutative comonoid \( ! \) on \( \SMod[\srig] \), then \( \DBMod[\srig]^{\bot\bot} \) has an linear exponential comonad whose underlying functor is \( (!\smod)^{\bot\bot} \). 
\end{theorem}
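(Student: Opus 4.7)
The plan is a direct imitation of the proof of Theorem~\ref{thm:classic:vector}, with \( \DBMod[\srig] \) in place of \( \SVec[\srig] \). The two enabling lemmas are already in place: Lemma~\ref{lem:classic:proj-closure} provides closure of \( \DBMod[\srig] \) under \( \otimes \), \( \multimap \), \( \srig \), and countable (co)products; Lemma~\ref{lem:classic:proj-idempotent} gives idempotence of the restricted double-negation monad, so that \( \DBMod[\srig]^{\bot\bot} \) is a reflective full subcategory of \( \DBMod[\srig] \).

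For the MALL structure, I would first check that the internal hom \( \smod \multimap \smodb \) remains in \( \DBMod[\srig]^{\bot\bot} \) whenever \( \smodb \) does: rewriting \( \smod \multimap \smodb \cong (\smod \otimes (\smodb \multimap \srig)) \multimap \srig \) exhibits it in the form \( X \multimap \srig \), and any such object is reflexive by the standard triangle-identity argument for the double-negation monad. Countable products of reflexive objects are reflexive since reflective subcategories are closed under limits. Defining the tensor to be \( (\smod \otimes \smodb)^{\bot\bot} \), which by construction represents \( \srig \)-bilinear maps whose codomain is taken up to reflexive completion, then yields the \( * \)-autonomous structure with \( \srig \) as dualising object.

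For the exponential, under the closure hypothesis the composition of adjunctions
\[
\xymatrix{
  \Comon(\DBMod[\srig])
  \ar@{}[r]|(0.6){\bot}
  \ar@<5pt>[r]
  & \DBMod[\srig] \ar@<5pt>[l] \ar@<5pt>[r] \ar@{}[r]|(0.47){\bot} & \DBMod[\srig]^{\bot\bot} \ar@<5pt>[l]
}
\]
is well-defined exactly as in Theorem~\ref{thm:classic:vector}: the left-hand adjunction is the cofree-cocommutative-comonoid adjunction of \( \SMod[\srig] \) restricted to \( \DBMod[\srig] \) by the closure assumption, and the right-hand one is the reflection. Since both left adjoints are strong monoidal, the composite is a symmetric monoidal adjunction, and composing right adjoints yields a cofree cocommutative comonoid functor on \( \DBMod[\srig]^{\bot\bot} \) whose underlying endofunctor is \( \smod \mapsto (!\smod)^{\bot\bot} \); the induced linear exponential comonad structure follows in the standard way.

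The main obstacle, as in Theorem~\ref{thm:classic:vector}, is verifying that the reflection \( ({-})^{\bot\bot} \colon \DBMod[\srig] \longrightarrow \DBMod[\srig]^{\bot\bot} \) is strong monoidal with respect to the reflected tensor \( \otimes^{\bot\bot} \), so that the composed adjunction is genuinely monoidal and its right adjoint really lands in comonoids for \( \otimes^{\bot\bot} \). This is the standard fact that an idempotent monoidal monad on a closed symmetric monoidal category yields a monoidal reflection, and the argument transports verbatim since both the closed structure of \( \DBMod[\srig] \) and the negation \( ({-})^{\bot} = ({-}) \multimap \srig \) are inherited from \( \SMod[\srig] \).
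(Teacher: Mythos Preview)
Your proposal is correct and follows essentially the same route as the paper: it reduces the statement to the proof of Theorem~\ref{thm:classic:vector} with \( \DBMod[\srig] \) in place of \( \SVec[\srig] \), invoking Lemmas~\ref{lem:classic:proj-closure} and \ref{lem:classic:proj-idempotent} in place of Lemmas~\ref{lem:classic:vector-closure} and \ref{lem:classic:vector-idempotent}, and uses the same adjunction diagram for the exponential. Your extra paragraph on strong monoidality of the reflection is a reasonable elaboration of a point the paper leaves implicit in the phrase ``left adjoints are strong monoidal''.
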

\begin{proof}
    Similar to the proof of Theorem~\ref{thm:classic:vector}, but use Lemmas~\ref{lem:classic:proj-closure} and \ref{lem:classic:proj-idempotent} instead of Lemmas~\ref{lem:classic:vector-closure} and \ref{lem:classic:vector-idempotent}.
\end{proof}

\begin{theorem}\label{thm:classical:proj:exponential}
    Suppose that every downward-closed sum-reflecting ideal of \( \srig \) is in \( \DBMod[\srig] \).
    Then \( \DBMod[\srig] \) is closed under \( ! \) of \( \SMod[\srig] \).
\end{theorem}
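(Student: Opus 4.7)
The plan is to reduce to Theorem~\ref{thm:classical:vector:exponential} by combining functoriality of $!$ with the Karoubi-envelope description of $\DBMod[\srig]$. Given $V \in \DBMod[\srig]$, the preceding proposition identifies $\DBMod[\srig]$ with the idempotent completion of $\SVec[\srig]$ inside $\SMod[\srig]$, so there exists $W \in \SVec[\srig]$ together with a retraction $V \lhd W$ in $\SMod[\srig]$. The cofree cocommutative comonoid construction yields a functor $! \colon \SMod[\srig] \longrightarrow \SMod[\srig]$ (the forgetful applied to the right adjoint into $\Comon(\SMod[\srig])$), and any functor preserves split idempotents, so $!V \lhd !W$. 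It therefore suffices to show $!W \in \DBMod[\srig]$.

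For this I would replay the construction from the proof of Theorem~\ref{thm:classical:vector:exponential}, realising $!W$ as a submodule $\S W \subseteq \prod_n \SymTensor{W}{n}$. By Lemma~\ref{lem:classic:proj-closure}, $\DBMod[\srig]$ is closed under $\otimes$ and countable products, and the symmetric tensor powers $\SymTensor{W}{n}$ are cut out as equalisers of the $n!$ permutations on $W^{\otimes n}$. The submodule step defining $\S W$, together with these equalisers, is precisely where the $\SVec$-version of the theorem invoked its hypothesis on ideals of $\srig$. Replacing each occurrence of ``the ideal lies in $\SVec[\srig]$'' with ``the ideal lies in $\DBMod[\srig]$''---available to us by hypothesis---should yield $\S W \in \DBMod[\srig]$, hence $!W \in \DBMod[\srig]$. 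Since $\DBMod[\srig]$ is closed under retracts in $\SMod[\srig]$ (being an idempotent completion), $!V \in \DBMod[\srig]$ follows.

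The main obstacle is the very step that is only sketched (and deferred to an appendix) in Theorem~\ref{thm:classical:vector:exponential}: the definedness of the comultiplication on $\S W$ must be tracked carefully, and the resulting submodule must be shown to inherit a countable dual basis from $\prod_n \SymTensor{W}{n}$. Because $\DBMod[\srig]$ is not closed under arbitrary submodules or equalisers, this verification cannot be abstract; one must exhibit an explicit splitting that realises $\S W$ as a retract of an ambient $\SVec$-like object assembled from the dual bases of the tensor powers and the countable dual bases of the downward-closed sum-reflecting ideals of $\srig$ provided by the hypothesis. Once that splitting is in hand, the argument is purely bookkeeping, and the retract transfer $!V \lhd !W$ concludes the proof.
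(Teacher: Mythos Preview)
Your approach is correct but takes an unnecessary detour compared to the paper. The paper's proof is literally ``by the same argument as the proof of Theorem~\ref{thm:classical:vector:exponential}'': the appendix construction of a dual basis for $\S V$ (Lemmas~\ref{lem:appx:basis-of-symmetric-tensor-power} and \ref{lem:appx:base-of-section}) is written from the start for an arbitrary module $V$ with a countable \emph{dual} basis, using the ideal hypothesis only to supply dual bases of the ideals $R_\xi$ and $R_{n,i}$; the orthogonality clauses are handled as a separate add-on. So one runs that argument directly on $V \in \DBMod[\srig]$ and obtains $\S V \cong {!}V \in \DBMod[\srig]$ with no retract step.

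Your Karoubi-envelope reduction $!V \lhd !W$ is valid (functors preserve split idempotents, and $\DBMod[\srig]$ is closed under retracts), but it buys nothing: once you acknowledge that the appendix construction must be replayed with the weaker ideal hypothesis, that replay applies to $V$ just as well as to $W$, since the only property of the starting module used is the existence of a countable dual basis. The passage to $W \in \SVec[\srig]$ would only help if the construction genuinely required an \emph{orthogonal} basis on the starting module, which it does not.
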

\begin{proof}
    By the same argument as the proof of Theorem~\ref{thm:classical:vector:exponential}.
\end{proof}

\begin{example}\label{eg:classical:proj}
    Here we just list some results.
    The equivalence below means the equivalence as \( * \)-autonomous categories.
    \begin{itemize}
        \item \( \DBMod[\CohRing]^{\bot\bot} \) is equivalent to \( \CCoh \), the category of countable coherence spaces.  \( \DBMod[\CohRing] \) is closed under \( ! \) and the induced exponential comonad on \( \DBMod[\CohRing]^{\bot\bot} \) is the mulitset exponential.
        \item \( \DBMod[{[0,1]}]^{\bot\bot} \) is a model of classical logic.
        The condition of Theorem~\ref{thm:classical:proj:exponential} can be checked as follows.
        Let \( A \) be a downward-closed subset of \( [0,1] \).
        Then either \( A = [0,\gamma] \) or \( [0,\gamma) \) for some \( \gamma \in [0,1] \).
        The former has the singleton family \( (\gamma, (x \mapsto x/\gamma)) \) as a basis.
        For the latter, let \( r \in [0,1] \) be a real number such that \( r < \gamma < 2r \).
        Then \( (e_i, \varphi_i)_{i = 1,2} \) given by \( e_1 = e_2 = r \) and \( \varphi_1(x) = \varphi_2(x) = x/2r \) is a dual basis, since \( \varphi_1(x) \cdot e_1 + \varphi_2(x) \cdot e_2 = (x/2r) \cdot r + (x/2r) \cdot r = x \). 
        Hence every downward-closed sum-reflecting ideal of \( [0,1] \) has a dual basis.
        One can prove that \( \DBMod[{[0,1]}] \) is equivalent to \( \CPCoh \),
        \ifwithappendix
        although the proof is not trivial (see Appendix~\ref{sec:appx:pcoh}).
        \else
        although the proof is not trivial.
        \fi
        The induced exponenital coincides with the standard one, which has been proved to be free~\cite{Abou-Saleh2013}.
        \item \( \DBMod[\BoolRing]^{\bot\bot} \) contains the Scott model of linear logic~\cite{Ehrhard2012a,Winskel2004} as a submodel.
        The category \( \DBMod[\BoolRing]^{\bot\bot} \) can be characterised as the category of \emph{prime-continuous complete lattices with a countable basis}~(cf.~\cite[Definition~4.3]{Negri2002})
and join-preserving maps.
    \end{itemize}
\end{example}

\section{Gluing and Orthogonality}\label{sec:gluing}
This section briefly discusses another general approach for constructing models of linear logic, namely the approach based on \emph{gluing} and \emph{orthogonality} by Hyland and Schalk~\cite{Hyland2003}.
After briefly reviewing the framework, we shall see that an appropriate choice of the coefficient ring is beneficial in their approach as well.

\paragraph{Gluing and orthogonality in Hyland and Schalk}
We briefly review their method, focussing on a way to produce the coherence space model.

Let \( \cat \) be a categorical model of classical linear logic, i.e.~a \( * \)-autonomous category with products and linear exponential comonad.
The category \( G(\cat) \) obtained by \emph{double gluing along hom-functors} is described as follows: An object \( A = (R, U, X) \) of \( G(\cat) \) is given by an object \( R \in \cat \) and subsets \( U \subseteq \cat(I, R) \) and \( X \subseteq \cat(R, \neg I) \), where \( I \) is the tensor unit;
A morphism from \( A = (R, U, X) \) to \( B = (S, V, Y) \) is a morphism \( f \colon R \longrightarrow S \) in \( \cat \) such that \( I \stackrel{u}{\longrightarrow} R \stackrel{f}{\longrightarrow} S \) is in \( V \) for every \( u \in U \) and \( R \stackrel{f}{\longrightarrow} S \stackrel{v}{\longrightarrow} \neg I \) is in \( X \) for every \( v \in V \).
\( G(\cat) \) is a \( * \)-autonomous category with products.
Furthermore, if the exponential is cofree, \( G(\cat) \) has a linear exponential comonad as well.

An \emph{orthogonality} is a family of relations between morphisms in \( \cat \), used to carve out a full subcategory of \( G(\cat) \).
It is a family \( \bot = (\bot_R)_{R \in \cat} \) of relations, where \( ({\bot_R}) \subseteq \cat(I, R) \times \cat(R, \neg I) \) for each \( R \in \cat \), that satisfies certain conditions.
Given an orthogonality \( \bot \) and \( U \subseteq \cat(I, R) \), its \emph{orthogonal} \( U^\circ \) is defined as \( \{ x \in \cat(R, \neg I) \mid \forall u \in U. u \bot_R x \} \).
Similarly the orthogonal \( X^\circ \) of \( X \subseteq \cat(R, \neg I) \) is \( \{ u \in \cat(I, R) \mid \forall x \in X. u \bot_R x \} \).
The \emph{tight orthogonality subcategory} \( T(\cat) \) is the full subcategory of \( G(\cat) \) consisting of objects \( (R, U, X) \) such that \( U = X^\circ \) and \( X = U^\circ \).

A reader familiar with models of linear logic such as coherence spaces~\cite{Girard1987}, totality spaces~\cite{Loader1994}, finiteness spaces~\cite{Ehrhard2005a} and probabilistic coherence spaces~\cite{Girard2004,Danos2011} should easily recognise that the construction of tight orthogonality subcategory captures the core idea of these models.
Hyland and Schalk proved that, if the orthogonality \( \bot \) satisfies a number of conditions and the exponential of \( \cat \) is cofree, \( T(\cat) \) is a model of classical linear logic~\cite[Proposition~64]{Hyland2003}.

Hyland and Schalk discuss a way to carve out the category \( \CCoh \) of coherence spaces from \( G(\CRel) \)~\cite[Example~66(3)]{Hyland2003}.
Since the tensor unit \( I \) of \( \CRel \) is a singleton set and \( \neg I = I \), both \( u \in \CRel(I, R) \) and \( x \in \CRel(R, \neg I) \) can be identified with subsets of \( R \).
The orthogonality is defined as expected: \( u \bot_R x \) if and only if \( \#(u \cap x) \le 1 \).
However here is a subtlety: this orthogonality does not satisfy the condition of \cite[Proposition~64]{Hyland2003} and hence their general construction cannot be applied to this case.
A modification to deal with this situation is discussed in \cite[Example~66(3)]{Hyland2003}.

\paragraph{Significance of the coefficient $\Sigma$-semiring}
Let us analyse the issue from the viewpoint of the coefficient \( \Sigma \)-semiring.

Recall that \( \CRel \) of sets and relations is exactly the category of weighted relations over the boolean \( \Sigma \)-semiring \( \rbool \) and that \( \CCoh \) of coherence spaces is a subcategory of the category of \( \rcoh \)-modules.
Although the underlying sets of \( \rbool \) and \( \rcoh \) coincide, their algebraic structures differ significantly.
This mismatch causes the subtlety mentioned above.

Let \( \CWRel_{\Nat_\infty} \) be the category of weighted relations over \( \Nat_\infty \), where \( \Nat_\infty = \Nat \cup \{ \infty \} \) is the completion of \( \rcoh \).
In this category, the tensor unit and its dual are a singleton set and \( u \in \CWRel_{\Nat_\infty}(\{\ast\}, R) \) and \( x \in \CWRel_{\Nat_\infty}(R, \{\ast\}) \) can be identified with functions \( u, x \colon R \longrightarrow \Nat_\infty \).
Consider the orthogonality relation \( \bot \) defined by \( u \bot_R x \) if and only if \( \sum_{e \in R} u(e) \cdot x(e) \le 1 \), which coincide with the condition \( (x \circ u) \le 1 \) where \( x \circ u \) is the composition in \( \CWRel_{\Nat_\infty} \) and the \( (1\times 1) \)-matrix \( (x \circ u) \colon \{\ast\} \longrightarrow \{\ast\} \) is identified with \( \Nat_\infty \). 
This orthogonality is an example of particularly well-behaved orthogonality, called \emph{focused orthogonality}, which automatically satisfies the conditions of \cite[Proposition~64]{Hyland2003}.
Hence \( T(\CWRel_{\Nat_\infty}) \) is a model of classical linear logic with exponential, from which one can extract a category equivalent to \( \CCoh \).\footnote{This last step, extracting an appropriate reflective subcategory, is actually required in the reconstruction of \( \CCoh \) by Hyland and Schalk as well.  See a comment on the difference between the totality space model \( \mathbf{Tot} \) and \( T(\CRel) \) in \cite[Example~66(2)]{Hyland2003}; although it is not mentioned, the same difference can be found in the reconstruction of \( \CCoh \) in \cite[Example~66(3)]{Hyland2003}.}

 \section{Conclusion and Future Work}
This paper develops a general framework, parameterised by \( \Sigma \)-semiring \( \srig \), for constructing models of intuitionistic and classical linear logic.
In particular, we saw how the choice of the coefficient \( \Sigma \)-semiring \( \srig \) affects the structure of models, by using \( \srig = \CohRing, \BoolRing, \FinRing \) and \( [0,1] \) as examples.

One may be interested in other kinds of partial algebras.
For example, an axiom saying that \( \sum_{i \in I} x_i \) is defined if \( \sum_{i \in J} x_i \) is defined for every finite subset \( J \subseteq I \) is studied in \cite{Hines2007,Manes1986}.
For another example, one may be interested in dropping an axiom of \( \Sigma \)-module
in order to reason about K\"othe sequence spaces~\cite{Ehrhard2002a} (see Example~\ref{ex:monoid:Kothe}).
Since these axioms are partial Horn theories, once an SMCC structure is established, they become a Lafont model of intuitionistic linear logic.

Another interesting topic is the categorification of the module theory of this paper.
The category \( \mathbf{CPM}_{\mathbf{s}} \)~\cite{Selinger2004} is a model of quantum linear programs, which has an additive structure in each hom-set and a multiplicative structure given by the composition of morphisms.
This additive structure has been used to construct a model of quantum programs with recursions and infinite data~\cite{Pagani2014,Tsukada2018}.
What does the category of modules look like if we choose the \( \SMon \)-enriched SMCC \( \mathbf{CPM}_{\mathbf{s}} \) as the categorified \( \Sigma \)-semiring?

\bibliographystyle{ACM-Reference-Format}
\bibliography{library.bib}

\clearpage
\appendix
\section{Supplementary materials for Section~\ref{sec:partial}}

\subsection{Detailed Proof of Theorem~\ref{thm:pre:horn-presentable}}
Let us first recall essentially algebraic theories (see, e.g., \cite{Adamek1994}).
\begin{definition}[Essentially algebraic theory]
    Let \( S \) be a set of sorts and \( \Sigma \) be an \( S \)-sorted signature of \( \kappa \)-ary algebra.
    An \emph{\( \kappa \)-ary essentially algebraic theory} is a subclass of \( \kappa \)-ary partial Horn theory with additional information:
    \begin{itemize}
        \item Operations in \( \Sigma \) is divided into two classes: total operations \( \Sigma_t \) and partial operations \( \Sigma_p \).
        \item Each total operation \( (\sigma \colon \prod_{i < \alpha} s_i \longrightarrow s) \in \Sigma_t \) must be total, i.e.,
            \begin{equation*}
                \left(
                    \forall (x_i \colon s_i)_{i < \alpha}.~~~~\IsDef{\sigma(x_i)_i}
                \right)
                \qquad\in \Th.
            \end{equation*}
        \item Each partial operation \( (\sigma \colon \prod_{i < \alpha} s_i \longrightarrow s) \in \Sigma_p \) is associated with a formula of the form
            \begin{equation*}
                \left(
                    \forall (x_k \colon s'_k)_{k < \gamma}.~\left(\bigwedge_{j < \beta} t_j = t'_j \right) \Longleftrightarrow \IsDef{\sigma(y_i)_i}
                \right)
                \qquad\in \Th,
            \end{equation*}
            where \( t_j \) and \( t'_j \) contain only total operations and \( \beta,\gamma < \kappa \).
            This formula characterises the domain of \( \sigma \).
        \item All other formulas in the theory are of the form
            \begin{equation*}
                \left(
                    \forall (x_i \colon s_i)_{i < \alpha}.~ \IsDef{t} \wedge \IsDef{t'} \Longrightarrow t = t'
                \right)
                \qquad\in \Th,
            \end{equation*}
            where \( \alpha < \kappa \) and \( t \) and \( t' \) terms possibly containing partial operations.
    \end{itemize}
    The last condition can be relaxed: one can use arbitrary (\( \kappa \)-ary) implication~\cite[3.35(3)]{Adamek1994}.
\end{definition}

Let \( S \) be a set of sort, \( \Sigma \) be an \( S \)-sorted signature of \( \kappa \)-ary algebra and \( \Th \) be a \( \kappa \)-ary partial Horn theory.
We first transform \( \Th \) into a certain form.

A \emph{basic equation} is of the form \( x = \sigma(y_i)_i \).
A Horn formula is \emph{basic} if it is of the form \( \forall (x_i \colon s_i)_{i < \alpha}. \bigwedge_{j < \beta} \psi_j \Longrightarrow \psi \) where each \( \psi_j \) is a basic equation and \( \psi \) is an equation \( y = z \) over variables or \( \IsDef{\sigma(y_i)_i} \).
Every Horn formula can be translated into a set of basic formulas without changing its meaning as follows.
\begin{itemize}
    \item A formula
    \begin{equation*}
        \forall \overrightarrow{x \colon s}. \bigwedge_{j} \psi_j \Longrightarrow t = t'
    \end{equation*}
    is transformed into the set of
    \begin{align*}
        \forall \overrightarrow{x \colon s}. \bigwedge_{j} \psi_j &\Longrightarrow \IsDef{t} \\
        \forall \overrightarrow{x \colon s}. \bigwedge_{j} \psi_j &\Longrightarrow \IsDef{t'} \\
        \forall \overrightarrow{x \colon s}. \forall (y \colon s). \forall (z \colon s). \bigwedge_{j} \psi_j &\wedge y = t \wedge z = t' \Longrightarrow y = z
    \end{align*}
    where \( s \) is the sort of \( t \) and \( t' \).
    \item A formula
    \begin{equation*}
        \forall \overrightarrow{x \colon s}. \bigwedge_{j} \psi_j \Longrightarrow \IsDef{\sigma(t_i)_i}
    \end{equation*}
    (\( t_i \) is not a variable for some \( i \)) is transformed into the set of formulas consisting of
    \begin{align*}
        \forall \overrightarrow{x \colon s}. \bigwedge_{j} \psi_j &\Longrightarrow \IsDef{t_i}
    \end{align*}
    for each \( i \) and
    \begin{align*}
        \forall \overrightarrow{x \colon s}. \forall (y_i \colon s_i)_i. \bigwedge_{j} \psi_j &\wedge \bigwedge_i y_i = t_i \Longrightarrow \IsDef{\sigma(y_i)_i}
    \end{align*}
    where \( \sigma \colon \prod_{i < \alpha} s_i \longrightarrow s \).
    \item A formula
    \begin{equation*}
        \forall \overrightarrow{x \colon s}. t = t' \wedge \bigwedge_{j} \psi_j \Longrightarrow \psi
    \end{equation*}
    (\( t \) is not a variable) is transformed into
    \begin{align*}
        \forall \overrightarrow{x \colon s}. \forall (y \colon s). y = t \wedge y = t' \wedge \bigwedge_{j} \psi_j &\Longrightarrow \psi
    \end{align*}
    where \( s \) is the sort of \( t \) and \( t' \).
    \item A formula
    \begin{equation*}
        \forall \overrightarrow{x \colon s}. z = \sigma(t_i)_i \wedge \bigwedge_{j} \psi_j \Longrightarrow \psi
    \end{equation*}
    (\( t_i \) is not a variable for some \( i \)) is transformed into
    \begin{align*}
        \forall \overrightarrow{x \colon s}. \forall (y_i \colon s_i)_i. \bigwedge_i y_i = t_i \wedge z = \sigma(y_i)_i \wedge \bigwedge_{j} \psi_j &\Longrightarrow \psi
    \end{align*}
    where \( s \) is the sort of \( t \) and \( t' \).
\end{itemize}
For each Horn formula, this rewriting process terminates at most \( \kappa \) steps because \( \kappa \) is a regular cardinal and \( \Sigma \) and \( \Th \) are \( \kappa \)-ary.
The result of the rewriting is again a \( \kappa \)-ary Horn theory (by the regularity of \( \kappa \)).

We assume that \( \Th \) is basic.
We construct a set \( S' \) of sorts, an \( S' \)-sorted signature \( \Sigma' \) and an essentially algebraic theory \( \Th' \).
\begin{itemize}
    \item The set \( S' \) of sorts is defined by
    \begin{equation*}
        S' \;\;\defe\;\; S \uplus \{ \mathrm{dom}_\sigma \mid \sigma \in \Sigma \}.
    \end{equation*}
    \item The set \( \Sigma' \) of operations consists of
    \begin{itemize}
        \item for each \( (\sigma \colon \prod_{i < \alpha} s_i \longrightarrow s) \in \Sigma \), the operation
        \begin{equation*}
            \hat{\sigma} \colon \mathrm{dom}_\sigma \longrightarrow s
        \end{equation*}
        and projections
        \begin{equation*}
            \pi^{\sigma}_i \colon \mathrm{dom}_\sigma \longrightarrow s_i
        \end{equation*}
        for each \( i < \alpha \), and
        \item for each formula \( \psi \in \Th \) whose conclusion is \( \IsDef{\sigma(x_i)_i} \) and \( (\sigma \colon \prod_{i < \alpha} s_i \longrightarrow s) \in \Sigma \), the tupling
        \begin{equation*}
            \tau_\psi \colon \prod_{i < \alpha} s_i \longrightarrow \mathrm{dom}_\sigma.
        \end{equation*}
    \end{itemize}
    \item \( \hat{\sigma} \) and \( \pi^\sigma_i \) are total and \( \tau_\psi \) is partial.
    \item If
    \begin{equation*}
        \psi \;\;=\;\;
        (\forall \overrightarrow{x \colon s}. \bigwedge_i y_i = \sigma_i(y_{i,j})_{j < \alpha_i} \Longrightarrow \IsDef{\sigma(z_k)_k}),
    \end{equation*}
    then the formula characterising the domain of \( \tau_\psi \) is
    \begin{align*}
        &
        \forall \overrightarrow{x \colon s}. \forall (d_i \colon \mathrm{dom}_{\sigma_i})_i.
        \\ & \qquad
        \bigwedge_i \Big( y_i = \hat{\sigma}_i(d_i) \wedge \bigwedge_{j < \alpha_i} y_{i,j} = \pi^{\sigma_i}_j(d_i) \Big)
        \\ & \qquad
        \Longleftrightarrow \tau_\psi(z_k)_k.
    \end{align*}
\end{itemize}
The theory \( \Th' \) contains the following formulas, in addition to the above-defined formulas defining the domains of \( \tau_\psi \).
\begin{itemize}
    \item For each \( (\sigma \colon \prod_{i < \alpha} s_i \longrightarrow s) \in \Sigma \), the formula
    \begin{equation*}
        \forall (d, d' \colon \mathrm{dom}_\sigma). \bigwedge_{i < \alpha} \pi^{\sigma}_i(d) = \pi^{\sigma}_i(d') \Longrightarrow d = d',
    \end{equation*}
    meaning that \( \mathrm{dom}_\sigma \) must be the subset of \( \prod_{i < \alpha} s_i \).
    \item For each \( \psi \in \Th \) concluding \( \IsDef{\sigma(x_i)_i} \) (the sort of \( \sigma \in \Sigma \) is \( \prod_{i < \alpha} s_i \longrightarrow s \)), the formula
    \begin{equation*}
        \forall (x_i \colon s_i)_{i < \alpha}. \IsDef{\tau_\psi(x_i)_i} \Longrightarrow \pi^{\sigma}_j(\tau_\psi(x_i)_i) = x_j
    \end{equation*}
    for each \( j < \alpha \).
    This means that \( \tau_\psi(x_i)_i \) is actually the tuple \( (x_i)_i \) provided that it is defined.
    \item For each \( \psi \in \Th \) concluding \( z = z' \), i.e.,
    \begin{equation*}
        \psi
        \;\;=\;\;
        \left(
            \forall \overrightarrow{x \colon s}. \bigwedge_i y_i = \sigma_i(y_{i,j})_{j < \alpha_i} \Longrightarrow z = z',
        \right)
    \end{equation*}
    the formula
    \begin{align*}
        &
        \forall \overrightarrow{x \colon s}. \forall (d_i \colon \mathrm{dom}_{\sigma_i})_i.
        \\ & \qquad
        \bigwedge_i \Big( y_i = \hat{\sigma}_i(d_i) \wedge \bigwedge_{j < \alpha_i} y_{i,j} = \pi^{\sigma_i}_j(d_i) \Big)
        \\ & \qquad
        \Longrightarrow z = z'.
    \end{align*}
\end{itemize}

It is not difficult to see that the category of models of \( \Th \) is equivalent to the category of models of \( \Th' \).
A model \( ((A_s)_{s \in S}, (\sigma_A)_{\sigma \in \Sigma}) \) of \( \Th \) induces a model of \( \Th' \) such that \( A_{\mathrm{dom}_\sigma} = \{ (a_i)_i \in \prod_i s_i \mid \IsDef{\sigma_A(a_i)_i} \} \), \( \pi^\sigma_j(a_i)_i = a_i \) and \( \tau_\psi \) is the identity (on the domain specified by the theory).
A homomorphism \( f \colon A \longrightarrow B \) of \( \Th \)-algebras canonically defines a homomorphism \( A' \longrightarrow B' \) of the induced \( \Th' \)-algebras, since the action of the homomorphism on \( \mathrm{dom}_\sigma \) is uniquely determined by the interpretations of \( \pi^\sigma_i \) in \( A' \) and \( B' \).
This gives us a fully faithful functor from models of \( \Th \) to models of \( \Th' \).
This is essentially surjective on objects since, for every model \( A \) of \( \Th' \), the interpretation \( A_{\mathrm{dom}_\sigma} \) of \( \mathrm{dom}_\sigma \) can be identified with a subset of \( \prod_i A_{s_i} \) (where \( \sigma \colon \prod_i s_i \longrightarrow s \) in \( \Sigma \)).

\subsection{Proof of Lemma~\ref{lem:partial:presentation}}
Consider the signature \( \Sigma' \defe \Sigma \uplus \biguplus_{s \in S} G_s \), where \( x \in G_s \) is seen as an nullary operation of sort \( s \), and the theory \( \Th' \defe \Th \cup \mathcal{R} \cup \{ \IsDefined{g} \mid g \in \bigcup_{s \in S} G_s \} \), where \( (t,t') \in \bigcup_{s \in S} R_s \) is regarded as the formula \( t = t' \).
Then a partial algebra \( B' \in \PAlg(\Sigma', \Th') \) consists of a partial algebra \( B \in \PAlg(\Sigma,\Th) \) together with assignments \( \vartheta_s \colon G_s \longrightarrow A_s \) given by \( g \mapsto \sem{g}_{B'} \); note that \( \sem{g}_{B'} \) is defined since \( \IsDefined{g} \) is in \( \Th' \).
Since \( (\Sigma', \Th') \) is a Horn theory, the category \( \PAlg(\Sigma', \Th') \) is locally presentable.
So it has the initial model \( A' = (A, \varrho) \), which is presented by \( (G,R) \).

\subsection{Proof of Lemma~\ref{lem:partial:generated}}
Consider a family of subsets \( B_s \subseteq A_s \) defined by
\begin{equation*}
    B_s \defe \{ t \mid t \in \Term^s_\Sigma(G), \IsDefined{t} \}.
\end{equation*}
The restriction of \( \sigma_A \) to \( (B_s)_{s \in S} \) defines a partial algebra \( B \) in \( \PAlg(\Sigma, \Th) \).
Furthermore \( g \in B_s \) for every \( g \in G_s \) and hence \( B \) is equipped with an assignment satisfying \( R \).
Hence there is a unique homomorphism \( h \colon A \longrightarrow B \).
Since \( A \longrightarrow B \hookrightarrow A \) is identity on \( G \), by the universal property of \( A = \langle G, R \rangle \), the composite is the identity.
So \( B \hookrightarrow A \) is actually an isomorphism, which must be a bijection on each sort, since a homomorphism is a family of functions on underlying sets and the identity consists of identity functions. \section{Supplementary materials for Section~\ref{sec:module}}

\subsection{Proof of Lemma~\ref{lem:module:tensor-exists}}
    We give a presentation of a representing object.
    The set \( \mathcal{G} \) of generators is \( \{ x \otimes y \mid (x,y) \in |\smod| \times |\smodb| \} \).
    The relation \( \mathcal{R} \) consists of \( x \otimes y = \sum_i r_i \cdot (x_i \otimes y) \) (where \( x = \sum_i r_i \cdot x_i \) in \( \smod \)) and \( x \otimes y = \sum_j r_j \cdot (x \otimes y_j) \) (where \( y = \sum_j r_j \cdot y_j \) in \( \smodb \)).

    We give a bijection \( \varrho_{\smodc} \colon \SMod[\srig](\langle \mathcal{G}, \mathcal{R} \rangle, \smodc) \cong \Bilin_{\srig}(\smod,\smodb; \smodc) \).
    Given an \( \srig \)-linear map \( f \colon \langle \mathcal{G}, \mathcal{R} \rangle \longrightarrow \smodc \), let \( \varrho_{\smodc}(f)(x,y) \defe f(x \otimes y) \).
    Then \( \varrho_{\smodc}(f) \) is \( \srig \)-bilinear: if \( \sum_i r_i \cdot x_i \) is defined, then \( \varrho(f)(\sum_i r_i \cdot x_i, y) = f((\sum_i r_i \cdot x_i) \otimes y) = f(\sum_i r_i \cdot (x_i \otimes y)) = \sum_i r_i \cdot f(x_i \otimes y) = \sum_i r_i \cdot \varrho(f)(x_i, y) \). 
    Conversely, given a \( \srig \)-bilinear map \( g \in \Bilin_{\srig}(\smod,\smodb; \smodc) \), the mapping \( \mathcal{G} \longrightarrow |\smodc|, x \otimes y \mapsto g(x,y) \) satisfies all conditions in \( \mathcal{R} \).
    By the universal property of \( \langle \mathcal{G}, \mathcal{R} \rangle \), there is a unique \( \srig \)-linear map \( f \colon \langle \mathcal{G}, \mathcal{R} \rangle \longrightarrow \smodc \) such that \( f(x \otimes y) = g(x,y) \).
    Then we define \( \varrho_{\smodc}^{-1}(g) \defe f \).

    It is easy to see that \( (\varrho_\smodc)_{\smodc} \) is a natural isomorphism.

\subsection{Proof of Lemma~\ref{lem:module:tensor-peak}}
Let \( \smodc \) be the full submodule of \( \smod \otimes \smodb \) consisting of \( \{ z \in |\smod \otimes \smodb| \mid \exists (x,y) \in |\smod|\times|\smodb|. z \le x \otimes y \} \).
The inclusion \( |\smodc| \hookrightarrow |\smod \otimes \smodb| \) is an \( \srig \)-linear map, which is a monomorphism.
Let \( \varrho \) be an assignment \( \{ x \otimes y \mid (x,y) \in |\smod|\times|\smodb| \} \longrightarrow |\smodc| \) defined by \( f(x \otimes y) \defe x \otimes y \).
It is easy to see that \( \varrho \) satisfies the relation in the presentation of \( \smod \otimes \smodb \) (see the proof of Lemma~\ref{lem:module:tensor-exists}).
Hence \( \rho \) can be extended to a \( \srig \)-linear map \( \smod \otimes \smodb \longrightarrow \smodc \).
Since the composite \( \smod \otimes \smodb \longrightarrow \smodc \hookrightarrow \smod \otimes \smodb \) is identity on generators, it is the identity.
So \( \smodc \hookrightarrow \smod \otimes \smodb \) is an isomorphism.

\subsection{Proof of Lemma~\ref{lem:module:injection}}
Since the composition is the function composition of underlying functions, every injection is a monomorphism.

Assume that \( f \colon \smod \longrightarrow \smodb \) is not an injection.
This means that \( f(x) = f(y) \) for some \( x,y \in |\smod| \) with \( x \neq y \).
Consider the map \( x^{\dagger} \colon \srig \longrightarrow \smod \) defined by \( x^{\dagger}(r) \defe r \cdot x \).
By the bilinearity of the action, \( x^{\dagger} \) is \( \srig \)-linear.
By the assumption,
\begin{equation*}
    (f \circ x^{\dagger})(r)
    =
    f(r \cdot x)
    =
    r \cdot f(x)
    =
    r \cdot f(y)
    =
    f(r \cdot y)
    =
    (f \circ y^{\dagger})(r)
\end{equation*}
for every \( r \).
Hence \( f \circ x^\dagger = f \circ y^{\dagger} \) but \( x^{\dagger}(1) = x \neq y = y^\dagger(1) \).
That means, \( f \) is not a monomorphism.

\subsection{Proof of Lemma~\ref{lem:module:equaliser}}
Let \( f,g \colon \smod \longrightarrow \smodb \) be \( \srig \)-linear maps.
Let \( \smod' \) be the full submodule consisting of \( \{ x \in |\smod| \mid f(x)=g(x) \} \).
Let \( \iota \) be the inclusion function.

We first prove that \( \smod' \) is a \( \srig \)-module.
Let \( x \in |\smod| \) and assume \( f(x) = g(x) \).
Then, for every \( r \in |\srig| \), we have \( f(r \cdot x) = r \cdot f(x) = r \cdot g(x) = g(r \cdot x) \).
Hence \( r \cdot x \in |\smod'| \).

A notable property of \( \smod' \) is as follows: for every family \( (x_i)_{i \in I} \) of \( |\smod'| \), if \( \sum_i x_i \) is defined in \( \smod \), then the sum \( \sum_i x_i \) is defined in \( \smod' \).
Assume that \( \sum_i x_i \) is defined in \( \smod \) for a family \( (x_i)_i \) of elements in \( |\smod'| \).
Then \( f(\sum_i x_i) \Kle \sum_i f(x_i) \) and \( g(\sum_i x_i) \Kle \sum_i g(x_i) \).
Since \( f(x_i) = g(x_i) \) for every \( i \), this means that \( f(\sum_i x_i) = g(\sum_i x_i) \).
Therefore \( \sum_i x_i \in |\smod'| \).
Since \( \smod' \) is a full submodule, the sum \( \sum_i x_i \) is also defined in \( \smod' \).

Suppose that \( h \colon \smodc \longrightarrow \smod \) is a \( \srig \)-linear map such that \( f \circ h = g \circ h \).
Let \( k \colon |\smodc| \longrightarrow |\smod'| \) be the unique function such that \( h = \iota \circ k \).
It suffices to show that \( k \) is \( \srig \)-linear, but it is easy to check this fact by using the above-discussed properties.

\subsection{Proof of Lemma~\ref{lem:module:completion}}
    Let \( \mathcal{G} = \{ \underline{x} \mid x \in |\smod| \} \) and \( \mathcal{R}_1 = \{ \sum_i \underline{x_i} = \underline{x} \mid \sum_i x_i = x \text{ in \( \smod \)} \} \cup \{ r \cdot \underline{x} = \underline{r \cdot x} \mid x \in |\smod| \} \).
    Let \( \mathcal{R}_2 \) be the relation consisting of \( \IsDefined{(\sum_i \underline{x_i})} \) for every countable family \( (x_i)_i \), for which the sum is not necessarily defined in \( \smod \).
    Then \( \CComp{\smod} = \langle \mathcal{G}, \mathcal{R}_1 \cup \mathcal{R}_2 \rangle \) with \( \smod \longrightarrow \CComp{\smod}, x \mapsto \underline{x} \) satisfies the requirement.

\subsection{Proof of Lemma~\ref{lem:module:completion-monic-epic}}
    Since \( |\smod| \) is a generator of \( \CComp{\smod} \), an \( \srig \)-linear map \( \CComp{\smod} \longrightarrow \smodb \) is completely determined by its action on \( |\smod| \); hence \( \smod \longrightarrow \CComp{\smod} \) is an epimorphism.
    In order to show that \( \smod \longrightarrow \CComp{\smod} \) is monic, it suffices to find a monomorphism \( \smod \hookrightarrow \smodb \) to a complete \( \srig \)-module \( \smodb \); then \( \smod \hookrightarrow \smodb \) factors through \( \smod \longrightarrow \CComp{\smod} \) by the universality of \( \CComp{\smod} \) and hence it is monic.
    We define a complete \( \Sigma \)-monoid \( \smod_{\infty} \): its underlying set is \( |\smod_\infty| \defe |\smod| \uplus \{\infty\} \); if \( \sum_i x_i = x \) in \( \smod \), the same equation holds in \( \smod_\infty \); and, if \( \sum_i x_i \) is not defined in \( \smod \), then \( \sum_i x_i = \infty \) in \( \smod_\infty \).
    Let \( \srig \multimap \smod_\infty \) be a set of linear maps, i.e.~homomorphisms between \( \Sigma \)-modules; then \( \srig \multimap \smod_\infty \) is a complete \( \Sigma \)-monoid.
    The action of \( \srig \) is defined by \( (r \cdot f)(r') \defe f(r r') \).
    Then \( (r \cdot (\sum_i f_i))(r') = (\sum_i f_i)(r r') = \sum_i f_i(r r') = \sum_i r \cdot f_i(r') \) and, if \( \sum_i r_i = r \), then \( ((\sum_i r_i) \cdot f)(r') = f((\sum_i r_i) r') = f(\sum_i r_i r') = \sum_i f(r_i r') = \sum_i (r_i \cdot f)(r') \).
    Given \( x \in |\smod| \), consider \( h(x)(r) \defe r \cdot x \).
    Then \( h \colon \smod \longrightarrow (\srig \multimap \smod_\infty) \).
    Actually \( h(r \cdot x)(r') = r' \cdot (r \cdot x) = (r r') \cdot x = (r \cdot h(x))(r') \) and, if \( \sum_i x_i = x \) in \( \smod \), then \( h(\sum_i x_i)(r') = r' \cdot (\sum_i x_i) = \sum_i r' \cdot x_i = \sum_i h(x_i)(r') = (\sum_i h(x_i))(r') \).
    This map \( h \) is injective since \( h(x) = h(y) \) implies \( x = h(x)(1) = h(y)(1) = y \).

\subsection{Proof of Lemma~\ref{lem:module:completion-dcfull}}
    We first show that \( |\smod| \subseteq |\CComp{\smod}| \) is full.
    Recall the \( \srig \)-module \( \srig \multimap \smod_{\infty} \) in the proof of Lemma~\ref{lem:module:completion-monic-epic} and the \( \srig \)-linear map \( h \colon \smod \longrightarrow (\srig \multimap \smod_{\infty}) \).
    Let \( (x_i)_i \) be a countable family of \( |\smod| \) and suppose that \( \sum_i x_i = y \) in \( \CComp{\smod} \) and \( y \in |\smod| \).
    Mapping this equation by the canonical map \( \CComp{\smod} \longrightarrow (\srig \multimap \smod_{\infty}) \), we have \( \sum_i h(x_i) = h(y) \).
    Hence \( \sum_i x_i = (\sum_i h(x_i))(1) = h(y)(1) = y \) holds in \( \smod_\infty \).
    Since \( y \neq \infty \), this means \( \sum_i x_i = y \) in \( \smod \).
    Consider \( x \in |\smod| \) and \( y \in |\CComp{\smod}| \) and suppose \( y \le x \) in \( \CComp{\smod} \), i.e.~\( y + z = x \) in \( \CComp{\smod} \).
    Since \( \CComp{\smod} \) is generated by \( |\smod| \), we have \( y = \sum_i r_i \cdot y_i \) and \( z = \sum_j r'_j \cdot z_j \) for some \( r_i,r'_j \in |\srig| \) and \( y_i,z_j \in |\smod| \).
    So \( \sum_i r_i \cdot y_i + \sum_j r'_j \cdot z_j = x \) in \( \CComp{\smod} \).
    Since \( r_i \cdot y_i, r'_j \cdot z_j \in |\smod| \), by the fullness of \( \smod \hookrightarrow \CComp{\smod} \), we have \( \sum_i r_i \cdot y_i + \sum_j r'_j \cdot z_j = x \) in \( \smod \) as well.

 \section{Supplementary materials for Section~\ref{sec:classic}}
In this section, we often identify an element \( e \in |\smod| \) of an \( \srig \)-module \( \smod \) with the \( \srig \)-linear map \( \srig \longrightarrow \smod, r \mapsto r \cdot e \).
For example, the condition
\begin{equation*}
    x
    \quad=\quad
    \sum_i \varphi_i(x) \cdot e_i
\end{equation*}
for a countable dual basis \( (e_i, \varphi_i)_{i \in I} \) is written as
\begin{equation*}
    x
    \quad=\quad
    \sum_i (e_i \circ \varphi_i)(x)
\end{equation*}
or equivalently
\begin{equation*}
    \ident_\smod
    \quad=\quad
    \sum_i e_i \circ \varphi_i.
\end{equation*}

\subsection{Technical lemmas}
The first lemma is an explicit description of the equaliser of a set of parallel morphisms.
\begin{lemma}
    For every collection \( F \subseteq \SMod[\srig](\smod,\smodb) \) of parallel morphisms, its equalizer is the sum-reflecting submodule \( \iota \colon X \hookrightarrow \smod \) of \( \smod \) consisting of \( \{ x \in |\smod| \mid \forall f,g \in F. f(x) = g(x) \} \).
\end{lemma}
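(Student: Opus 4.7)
The plan is to adapt the proof of Lemma~\ref{lem:module:equaliser} from the two-morphism case to an arbitrary set $F$ of parallel morphisms. The key observation is that the earlier proof only uses that $\srig$-linear maps preserve defined sums and commute with the $\srig$-action, so the argument goes through uniformly once one quantifies over all $f, g \in F$ rather than a single pair.

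First I would verify that $X \defe \{x \in |\smod| \mid \forall f, g \in F.\; f(x) = g(x)\}$ carries the structure of a submodule of $\smod$ and moreover that the inclusion is sum-reflecting. Closure under the $\srig$-action is immediate from $\srig$-linearity: $f(r \cdot x) = r \cdot f(x) = r \cdot g(x) = g(r \cdot x)$ whenever $x \in X$. For sum-reflection, suppose $(x_i)_{i \in I}$ is a countable family in $X$ whose sum is defined in $\smod$; then for each $f, g \in F$ we have $f(\sum_i x_i) \Kle \sum_i f(x_i) = \sum_i g(x_i) \Kge g(\sum_i x_i)$, so $\sum_i x_i \in X$ and the sum is again defined in $X$ once $X$ is taken as a full submodule.

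Next I would check the universal property. Given any $\srig$-linear $h \colon \smodc \longrightarrow \smod$ with $f \circ h = g \circ h$ for all $f, g \in F$, each value $h(c)$ lies in $X$ by definition, so $h$ factors set-theoretically and uniquely through $\iota$ as $h = \iota \circ k$. The $\srig$-linearity of $k$ is then automatic, since $X$ inherits its action and (partial) sum from $\smod$ and $h$ is already $\srig$-linear; the sum-reflection property of $\iota$ ensures no ``hidden'' sums are required in $X$ that are not already realised in $\smod$.

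I do not anticipate a substantive obstacle; the only delicate point is making sure that the sums witnessing $\srig$-linearity of $k$ are genuinely defined in $X$ rather than merely in $\smod$, which is precisely what sum-reflection gives. An even shorter alternative route is to identify the equalizer of $F$ with the intersection $\bigcap_{f,g \in F} E_{f,g}$ of pairwise equalizers $E_{f,g}$ provided by Lemma~\ref{lem:module:equaliser}: since each $E_{f,g}$ is a sum-reflecting submodule and sum-reflecting submodules are trivially closed under arbitrary intersection, this directly yields the claim with underlying set $X$.
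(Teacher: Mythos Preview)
Your proposal is correct and follows essentially the same line as the paper's proof: verify that $X$ is closed under the $\srig$-action, show that any sum of elements of $X$ defined in $\smod$ again lies in $X$ (hence the full submodule on $X$ is sum-reflecting), and then check the universal property by factoring $h$ through $\iota$ and using sum-reflection to see that the factorisation is $\srig$-linear. The only cosmetic point is that the symbol $\Kge$ is not introduced in the paper; you should write the two inequalities $f(\sum_i x_i)\Kle\sum_i f(x_i)$ and $g(\sum_i x_i)\Kle\sum_i g(x_i)$ separately. Your alternative route via the intersection $\bigcap_{f,g}E_{f,g}$ of pairwise equalisers is also valid and slightly slicker, though the paper does not take it.
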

\begin{proof}
    Let us first prove that \( X \) is actually a \( \srig \)-module, i.e.~it is closed under the action of \( \srig \) in \( \smod \).
    If \( x \in |X| \), then \( f(x) = g(x) \) for every \( f, g \in F \).
    So \( f(r \cdot x) = r \cdot f(x) = r \cdot g(x) = g(r \cdot x) \) for every \( f,g \in F \), which implies \( r \cdot x \in |X| \).

    It is easy to see that \( f \circ \iota = g \circ \iota \) for every \( f,g \in F \).

    Let \( h \colon Y \longrightarrow \smod \) and assume that \( f \circ h = g \circ h \) for every \( f,g \in F \).
    Then \( h(y) \in |X| \) for every \( y \in |Y| \) and thus we have a unique function \( k \colon |Y| \longrightarrow |X| \) such that \( \iota \circ k = h \) as a function on underlying sets.
    It suffices to show that \( k \) is \( \srig \)-linear.
    Let \( y \in |Y| \).
    We have
    \begin{equation*}
        \iota(k(r \cdot y)) = h(r \cdot y) = r \cdot h(y) = r \cdot \iota(k(y)) = \iota(r \cdot k(y)).
    \end{equation*}
    Since \( \iota \) is injective, \( k(r \cdot y) = r \cdot k(y) \).
    Suppose that \( \sum_i y_i \) converges in \( Y \).
    Then
    \begin{equation*}
        \iota(k(\sum_i y_i))
        \quad\Keq\quad
        \sum_i \iota(k(y_i)).
    \end{equation*}
    Since \( \iota \colon X \hookrightarrow \smod \) is a sum-reflecting submodule by definition, \( \sum_i k(y_i) \) converges in \( X \) and
    \begin{equation*}
        \iota(\sum_i k(y_i))
        \quad=\quad
        \sum_i \iota(k(y_i)).
    \end{equation*}
    Hence \( \iota(k(\sum_i y_i)) = \iota(\sum_i k(y_i)) \).
    Since \( \iota \) is injective, \( k(\sum_i y_i) = \sum_i k(y_i) \).
\end{proof}

The next three lemmas are about the tensor product \( ({-}) \otimes \smod \) with a module \( \smod \) having a countable dual basis.
A module \( \smod \) \emph{preserves monomorphisms} if the tensor product \( ({-}) \otimes \smod \) preserves monomorphisms, i.e.~\( f \otimes \smod \colon \smodb \otimes \smod \longrightarrow \smodc \otimes \smod \) is a monomorphism for every monomorphism \( f \colon \smodb \hookrightarrow \smodc \).
\begin{lemma}\label{lem:appx:submodule}
    Every module \( \smod \) with a countable dual basis preserves monomorphisms.
\end{lemma}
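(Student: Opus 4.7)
By Lemma~\ref{lem:module:injection}, a morphism is a monomorphism iff its underlying function is injective, so it suffices to show that the underlying function of $f \otimes \smod$ is injective whenever $f \colon \smodb \hookrightarrow \smodc$ is. Fix a countable dual basis $(e_i, \varphi_i)_{i \in I}$ of $\smod$. Following the convention at the start of Section~\ref{sec:classic}, I view $e_i$ as an $\srig$-linear map $\srig \longrightarrow \smod$ and read the dual-basis property as the equation $\ident_\smod \Keq \sum_{i \in I} e_i \circ \varphi_i$ in the $\Sigma$-monoid $\smod \multimap \smod$.

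The first step is to transport this equation across the tensor. Since $({-}) \otimes \smodb$ and $({-}) \otimes \smodc$ are left adjoints (to $\smodb \multimap ({-})$ and $\smodc \multimap ({-})$ respectively) and in particular act on the $\Sigma$-monoid of parallel morphisms by a linear map, tensoring the identity on the left with $\smodb$ (resp.\ $\smodc$) yields
\begin{equation*}
    \ident_{\smodb \otimes \smod} \;\Keq\; \sum_{i \in I} (\smodb \otimes e_i) \circ (\smodb \otimes \varphi_i),
    \qquad
    \ident_{\smodc \otimes \smod} \;\Keq\; \sum_{i \in I} (\smodc \otimes e_i) \circ (\smodc \otimes \varphi_i),
\end{equation*}
in the respective hom $\Sigma$-monoids. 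In particular, evaluating at any element shows that $z = \sum_i (\smodb \otimes e_i)((\smodb \otimes \varphi_i)(z))$ is a convergent decomposition for every $z \in |\smodb \otimes \smod|$, and similarly for $\smodc \otimes \smod$.

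Next, suppose $z_1, z_2 \in |\smodb \otimes \smod|$ satisfy $(f \otimes \smod)(z_1) = (f \otimes \smod)(z_2)$. For each $i \in I$, post-composing with $\smodc \otimes \varphi_i \colon \smodc \otimes \smod \longrightarrow \smodc \otimes \srig$ and invoking the naturality of $\otimes$ gives
\begin{equation*}
    (f \otimes \srig)\bigl((\smodb \otimes \varphi_i)(z_1)\bigr) \;=\; (f \otimes \srig)\bigl((\smodb \otimes \varphi_i)(z_2)\bigr).
\end{equation*}
Under the canonical isomorphism $({-}) \otimes \srig \cong ({-})$, the morphism $f \otimes \srig$ corresponds to $f$ itself, which is injective by assumption. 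Hence $(\smodb \otimes \varphi_i)(z_1) = (\smodb \otimes \varphi_i)(z_2)$ for every $i \in I$, and summing over $i$ using the decomposition from the previous paragraph yields $z_1 = z_2$.

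The main obstacle I foresee is justifying the transported identity in the hom $\Sigma$-monoid, i.e.\ that $\smodb \otimes ({-})$ preserves countable sums of parallel morphisms in $\smod \multimap \smod$. This is where the bilinearity of $\otimes$ as a functor (equivalently, the $\SMon$-enrichment of $\SMod[\srig]$, which is a direct consequence of the symmetric monoidal closed structure established in Section~\ref{sec:module:tensor}) is used, and one should check explicitly that the partial sum on $\smodb \otimes \smod \multimap \smodb \otimes \smod$ is indeed the pointwise one and that bifunctoriality of $\otimes$ therefore yields the displayed Kleene equality rather than only a one-sided inequality.
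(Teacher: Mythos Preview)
Your proposal is correct and follows essentially the same route as the paper: transport the dual-basis identity along the tensor, use naturality to reduce to $f \otimes \srig \cong f$, and then recover the original element (or morphism) from its coordinates. The only cosmetic differences are that the paper phrases the conclusion via joint monicity of the family $(\smodb \otimes \varphi_i)_i$ rather than elementwise, and that it only writes the transported identity with $\Kle$ (which is all that is needed, since the left-hand side $\smodb \otimes \ident_{\smod}$ is defined); so your worry about obtaining a two-sided $\Keq$ is unnecessary.
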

\begin{proof}
    Suppose that \( \smod \) has a basis \( (e_i, \varphi_i)_{i \in I} \) and let \( f \colon \smodb \hookrightarrow \smodb' \) be a monomorphism.
    Since \( \smodb \otimes ({-}) \) preserves sums, we have
    \begin{equation*}
        \smodb \otimes \ident_{\smod} = \smodb \otimes (\sum_{i \in I} e_i \circ \varphi_i) \Kle \sum_{i \in I} (\smodb \otimes e_i) \circ (\smodb \otimes \varphi_i).
    \end{equation*}
    This implies that the family \( (\smodb \otimes \varphi_i \colon \smodb \otimes \smod \longrightarrow \smodb \otimes \srig)_{i \in I} \) of morphisms is jointly monic.

    Suppose that \( (f \otimes \smod) \circ g = (f \otimes \smod) \circ h \) for some \( g,h \colon X \longrightarrow \smod \).
    Then we have
    \begin{align*}
       (f \otimes \srig) \circ (\smodb \otimes \varphi_i) \circ g
       &=
       (\smodb' \otimes \varphi_i) \circ  (f \otimes \smod) \circ g
       \\
       &=
       (\smodb' \otimes \varphi_i) \circ  (f \otimes \smod) \circ h
       \\
       &=
       (f \otimes \srig) \circ (\smodb \otimes \varphi_i) \circ h
    \end{align*}
    for every \( i \in I \).
    Since \( f \otimes \srig \) as the composite \( \smodb \otimes \srig \cong \smodb \stackrel{f}{\longrightarrow} \smodb' \cong \smodb' \otimes \srig \) of monomorphisms is monic, the above equation leads to
    \begin{equation*}
        (\smodb \otimes \varphi_i) \circ g
        \;\;=\;\;
        (\smodb \otimes \varphi_i) \circ h
    \end{equation*}
    for every \( i \in I \).
    Since \( (\smodb \otimes \varphi_i)_{i \in I} \) is jointly monic, we conclude \( g = h \).
\end{proof}

\begin{lemma}\label{lem:appx:tensor-sr}
    Let \( \smod \) be a module with a basis.
    If \( f \colon \smodb \hookrightarrow \smodb' \) is a sum-reflecting submodule, then so is \( f \otimes \smod \).
\end{lemma}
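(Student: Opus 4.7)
My plan is to reduce the claim to the assumed sum-reflection of $f$ itself by projecting $\smodb \otimes \smod$ onto its ``coordinates'' via a countable dual basis $(e_i, \varphi_i)_{i \in I}$ of $\smod$. Write $\pi_j := \smodb \otimes \varphi_j \colon \smodb \otimes \smod \longrightarrow \smodb \otimes \srig$ and $\iota_j := \smodb \otimes e_j \colon \smodb \otimes \srig \longrightarrow \smodb \otimes \smod$, and analogously $\pi'_j, \iota'_j$ for $\smodb'$. A crucial preliminary observation, already used implicitly in the proof of Lemma~\ref{lem:appx:submodule}, is that because $\smodb \otimes ({-})$ preserves sums of parallel morphisms, the equation $\ident_\smod = \sum_i e_i \circ \varphi_i$ tensors to a genuinely defined equality $\ident_{\smodb \otimes \smod} = \sum_j \iota_j \circ \pi_j$ in the hom $\Sigma$-monoid. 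Consequently every $u \in |\smodb \otimes \smod|$ admits the decomposition $u = \sum_j \iota_j(\pi_j(u))$ as a defined sum; this is the device that lets us move information between the tensor product and its coordinates.

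With this in hand, I would reason as follows. Suppose $(f \otimes \smod)(w) = \sum_k (f \otimes \smod)(w_k)$ in $\smodb' \otimes \smod$. Applying $\pi'_j$ and using functoriality of $\otimes$ in the form $\pi'_j \circ (f \otimes \smod) = (f \otimes \srig) \circ \pi_j$ gives $(f \otimes \srig)(\pi_j(w)) = \sum_k (f \otimes \srig)(\pi_j(w_k))$ for each $j \in I$. Under the canonical isomorphism $\smodb \otimes \srig \cong \smodb$, the map $f \otimes \srig$ is just $f$, which is sum-reflecting by hypothesis; therefore $\sum_k \pi_j(w_k)$ is defined in $\smodb \otimes \srig$ and equals $\pi_j(w)$. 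Now substitute into the decomposition of $w$:
\[
w \;=\; \sum_j \iota_j(\pi_j(w)) \;=\; \sum_j \iota_j\Bigl(\sum_k \pi_j(w_k)\Bigr) \;=\; \sum_j \sum_k \iota_j(\pi_j(w_k)),
\]
where the last equality is by $\srig$-linearity of each $\iota_j$. Because the leftmost expression is defined (it equals $w$), axioms~(2) and~(3) of $\Sigma$-monoid allow us to reindex the double sum by any bijection $\omega \times \omega \to \omega$, hence to swap the order of summation, obtaining $w = \sum_k \sum_j \iota_j(\pi_j(w_k))$. Finally, the inner sum collapses by applying the identity decomposition to $w_k$ itself, yielding $w = \sum_k w_k$, which is the required sum-reflection.

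The delicate point, and what I expect to be the main obstacle when writing this out carefully, is tracking definedness: because sums are partial, one cannot freely manipulate expressions symbolically, and each sum must be shown to be defined before it is used. The argument is arranged so that definedness flows in the correct direction: sum-reflection of $f$ provides that $\sum_k \pi_j(w_k)$ is defined for each $j$; combined with the automatically defined identity decomposition this produces a defined double sum $\sum_j \sum_k \iota_j(\pi_j(w_k)) = w$; and only then is the summation order swapped, which is purely an application of the $\Sigma$-monoid axioms and preserves definedness by the $\Keq$ clause of axiom~(3). One minor check to include is that $(f \otimes \smod)$ is monic to begin with (so that the submodule interpretation is meaningful), which follows from Lemma~\ref{lem:appx:submodule}.
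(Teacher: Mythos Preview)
Your proposal is correct and follows essentially the same approach as the paper: project along $\smodb \otimes \varphi_i$, use functoriality to factor through $f \otimes \srig$, apply sum-reflection of $f$ coordinatewise, and then reassemble via the identity decomposition $\ident_{\smodb \otimes \smod} = \sum_i (\smodb \otimes e_i) \circ (\smodb \otimes \varphi_i)$, finishing by swapping the double sum. Your write-up is in fact more explicit than the paper's about the definedness bookkeeping in the last step (the paper leaves the sum-swap implicit), and your remark that monicity of $f \otimes \smod$ follows from Lemma~\ref{lem:appx:submodule} is a useful sanity check the paper omits.
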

\begin{proof}
    Suppose that \( \smod \) has a basis \( (e_i, \varphi_i)_{i \in I} \) and let \( f \colon \smodb \hookrightarrow \smodb' \) be a sum-reflecting submodule.
Suppose that \( \sum_j (f \otimes \smod)(x_j) \) is defined in the module \( \smodb' \otimes \smod \) and \( (f \otimes \smod)(y) = \sum_j (f \otimes \smod)(x_j) \) for some \( y \in |\smodb \otimes \smod| \).
    Then, for every \( i \in I \),
    \begin{equation*}
        ((\smodb' \otimes \varphi_i) \circ (f \otimes \smod))(y) = \sum_j ((\smodb' \otimes \varphi_i) \circ (f \otimes \smod))(x_j).
    \end{equation*}
    Hence
    \begin{equation*}
        ((f \otimes \srig) \circ (\smodb \otimes \varphi_i))(y) = \sum_j ((f \otimes \srig) \circ (\smodb \otimes \varphi_i))(x_j)
    \end{equation*}
    for every \( i \in I \).
    Since \( f \) is a sum-reflecting submodule, so is \( f \otimes \srig \).
    Hence, for every \( i \in I \),
    \begin{equation*}
        (\smodb \otimes \varphi_i)(y) = \sum_j (\smodb \otimes \varphi_i)(x_j)
    \end{equation*}
    So
    \begin{equation*}
        y = \sum_i (\smodb \otimes e_i)((\smodb \otimes \varphi_i)(y)) = \sum_i \sum_j (\smodb \otimes e_i)((\smodb \otimes \varphi_i)(x_j))
    \end{equation*}
    and
    \begin{equation*}
        \sum_j x_j = \sum_j \sum_i (\smodb \otimes e_i)((\smodb \otimes \varphi_i)(x_j)).
    \end{equation*}
    Hence \( y = \sum_j x_j \).
\end{proof}

\begin{lemma}\label{lem:appx:tensor-dcsr}
    Let \( \smod \) be a module with a basis.
    If \( f \colon \smodb \hookrightarrow \smodb' \) is a downward-closed sum-reflecting submodule, then so is \( f \otimes \smod \).
\end{lemma}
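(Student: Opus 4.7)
The plan is to combine the already-proved injectivity and sum-reflection of $f\otimes\smod$ (Lemmas~\ref{lem:appx:submodule} and \ref{lem:appx:tensor-sr}) with a direct verification of downward-closure modelled on the basis-expansion argument used for sum-reflection. Fix a countable dual basis $(e_i,\varphi_i)_{i\in I}$ of $\smod$ and suppose $(f\otimes\smod)(y)=w+z'$ holds in $\smodb'\otimes\smod$ with $y\in|\smodb\otimes\smod|$ and $w,z'\in|\smodb'\otimes\smod|$; the goal is to produce $z\in|\smodb\otimes\smod|$ with $(f\otimes\smod)(z)=z'$.

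First, I would apply each coordinate map $\smodb'\otimes\varphi_i$ to the given equation; using the natural isomorphism $({-})\otimes\srig\cong({-})$, this gives, in $\smodb'$,
\[
  (f\otimes\srig)\bigl((\smodb\otimes\varphi_i)(y)\bigr) \;=\; (\smodb'\otimes\varphi_i)(w) + (\smodb'\otimes\varphi_i)(z').
\]
Since $f$ is downward-closed and sum-reflecting, each summand on the right is the image under $f\otimes\srig$ of unique $b_i, a_i \in |\smodb\otimes\srig|$ satisfying $b_i+a_i=(\smodb\otimes\varphi_i)(y)$. I would then set $z\defe\sum_{i\in I}(\smodb\otimes e_i)(a_i)$ and justify its convergence from the identity
\[
  y \;=\; \sum_i(\smodb\otimes e_i)\bigl((\smodb\otimes\varphi_i)(y)\bigr) \;=\; \sum_i\bigl[(\smodb\otimes e_i)(b_i)+(\smodb\otimes e_i)(a_i)\bigr],
\]
which converges in $\smodb\otimes\smod$ by the dual-basis axiom and $\srig$-linearity of $\smodb\otimes e_i$. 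Once $z$ is defined, applying $f\otimes\smod$ and invoking the dual-basis identity on $z'$ yields $(f\otimes\smod)(z)=\sum_i(\smodb'\otimes e_i)\bigl((\smodb'\otimes\varphi_i)(z')\bigr)=z'$.

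The main obstacle is this definedness step: we must extract convergence of $\sum_i(\smodb\otimes e_i)(a_i)$ from convergence of $\sum_i[(\smodb\otimes e_i)(b_i)+(\smodb\otimes e_i)(a_i)]$, which is not automatic for a general partial sum. For $\Sigma$-monoids, however, the associativity axiom (Definition~\ref{def:sigma-monoid}(3)) together with the partial-sum principle recorded immediately after that definition suffices: one enumerates the pairs as a single $\omega$-indexed family refining the combined sum, then restricts to the subfamily indexing the $a_i$-summands. All other steps are routine manipulations via the dual basis and $\srig$-linearity, so once this definedness point is dispatched the argument terminates.
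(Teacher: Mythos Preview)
Your proposal is correct and follows essentially the same approach as the paper: project along the coordinate maps $\smodb'\otimes\varphi_i$, use downward-closure and sum-reflection of $f$ (via $f\otimes\srig\cong f$) to pull back coordinatewise, reassemble via $\smodb\otimes e_i$, and justify convergence by the sub-sum principle for $\Sigma$-monoids. The only cosmetic difference is that the paper first records a characterisation $x\le y \Leftrightarrow \forall i.\,(\smodb\otimes\varphi_i)(x)\le(\smodb\otimes\varphi_i)(y)$ before running the argument, whereas you carry the explicit witness $w+z'$ throughout; the content is the same.
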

\begin{proof}
    Suppose that \( \smod \) has a basis \( (e_i, \varphi_i)_{i \in I} \) and let \( f \colon \smodb \hookrightarrow \smodb' \) be a downward-closed sum-reflecting submodule.
    Then \( f \otimes \smod \) is sum-reflecting by Lemma~\ref{lem:appx:tensor-sr}.
    We prove that it is downward-closed.

    We first show that, for every \( x,y \in \smodb \otimes \smod \),
    \begin{equation*}
        x \le y
        \quad\Leftrightarrow\quad
        \forall i \in I.\quad (\smodb \otimes \varphi_i)(x) \le (\smodb \otimes \varphi_i)(y).
    \end{equation*}
    The \((\Rightarrow)\) direction follows from the monotonicity of \( \srig \)-linear maps.
    The other direction comes from the fact that
    \begin{equation*}
        \smodb \otimes \ident_{\smod} = \sum_{i \in I} (\smodb \otimes e_i) \circ (\smodb \otimes \varphi_i).
    \end{equation*}
    Actually, if \( (\smodb \otimes \varphi_i)(x) \le (\smodb \otimes \varphi_i)(y) \) for every \( i \in I \), then \( (\smodb \otimes e_i) \circ (\smodb \otimes \varphi_i)(x) \le (\smodb \otimes e_i) \circ (\smodb \otimes \varphi_i)(y) \) for every \( i \in I \).
    Hence
    \begin{equation*}
        x = \sum_i (\smodb \otimes e_i) \circ (\smodb \otimes \varphi_i)(x)
        \le \sum_i (\smodb \otimes e_i) \circ (\smodb \otimes \varphi_i)(y) = y.
    \end{equation*}
    By the same argument, for every \( x',y' \in \smodb' \otimes \smod \),
    \begin{equation*}
        x' \le y'
        \quad\Leftrightarrow\quad
        \forall i \in I.\quad (\smodb' \otimes \varphi_i)(x) \le (\smodb' \otimes \varphi_i)(y).
    \end{equation*}

    Suppose that \( (f \otimes \smod)(y) = y' \) and \( x' \le y' \) in \( \smodb' \otimes \smod \).
    Then, for every \( i \in I \), \( (\smodb' \otimes \varphi_i)((f \otimes \smod)(y)) = (\smodb' \otimes \varphi_i)(y') \ge (\smodb' \otimes \varphi_i)(x') \).
    Since \( (\smodb' \otimes \varphi_i) \circ (f \otimes \smod) = (f \otimes \srig) \circ (\smod \otimes \varphi_i) \), we have
    \begin{equation*}
        (f \otimes \srig)(y_i) \ge (\smodb' \otimes \varphi_i)(x'),
    \end{equation*}
    where \( y_i = (\smod \otimes \varphi_i)(y) \in (\smod \otimes \srig) \).
    Since \( f \) is a downward-closed sum-reflecting submodule, so is \( f \otimes \srig \).
    By the downward-closedness, there exists \( x_i \in \smodb \otimes \srig \) such that
    \begin{equation*}
        (f \otimes \srig)(x_i) = (\smodb' \otimes \varphi_i)(x')
    \end{equation*}    
    for each \( i \in I \); by the sum-reflection, \( x_i \le y_i \) holds in \( \smodb \otimes \srig \).
    Then
    \begin{equation*}
        x'
        = \sum_i (\smodb' \otimes e_i) \circ (\smodb' \otimes \varphi_i)(x')
        = \sum_i (\smodb' \otimes e_i)((f \otimes \srig)(x_i)).
    \end{equation*}
    Hence
    \begin{equation*}
        x'
        = \sum_i (f \otimes \smod) ((\smodb \otimes e_i)(x_i)).
    \end{equation*}
    Since \( y = \sum_i (\smodb \otimes e_i)(\smodb \otimes \varphi_i)(y) = \sum_i (\smodb \otimes e_i)(y_i) \) converges and \( x_i \le y_i \) for every \( i \), we know that \( \sum_i (\smodb \otimes e_i)(x_i) \) converges.
    Let \( x = \sum_i (\smodb \otimes e_i)(x_i) \).
    Then \( x' = (f \otimes \smod)(x) \) as desired.
    Hence \( f \otimes \smod \) is downward-closed.
\end{proof}

\subsection{Proof of Lemma~\ref{lem:classic:vector-idempotent}}
It suffices to show that the multiplication \( \mu_V \colon V^{\bot\bot\bot\bot} \longrightarrow V^{\bot\bot} \) of the monad is monic for every \( V \in \SVec[\srig] \).
As in any other SMCC, \( \mu_V(k) = \lambda f. k\,(\lambda h. h(f)) \).
Let \( \eta_{V^{\bot\bot}} \colon V^{\bot\bot} \longrightarrow V^{\bot\bot\bot\bot} \) by \( \eta_{V^{\bot\bot}}(g)(h) \defe h(g) \).
We prove \( \eta_{V^{\bot\bot}} \circ \mu_V = \ident_{V^{\bot\bot\bot\bot}} \).

Let \( (e_i, \varphi_i)_{i \in I} \) be a basis of \( V \).
Then \( (\varphi_i, \hat{e}_i)_{i \in I} \) is a basis of \( V^{\bot} \), where \( \hat{e}_i(f) = f(e_i) \).
Similarly, \( (\hat{e}_i, \hat{\varphi}_i)_{i \in I} \), \( (\hat{\varphi}_i, \hat{\hat{e}}_i)_{i \in I} \) and \( (\hat{\hat{e}}_i, \hat{\hat{\varphi}}_i)_{i \in I} \) are bases of \( V^{\bot\bot} \), \( V^{\bot\bot\bot} \) and \( V^{\bot\bot\bot\bot} \), respectively.

The goal is to prove that \( \hat{\hat{\varphi}}_i(\eta_{V^{\bot\bot}}(\mu_V(\hat{\hat{e}}_j)) = \hat{\hat{\varphi}}_i(\hat{\hat{e}}_j) = \delta_{i,j} \) (where \( \delta_{i,i} = 1 \) and \( \delta_{i,j} = 0 \) if \( i \neq j \)) since \( (\hat{\hat{e}}_i, \hat{\hat{\varphi}}_i)_{i \in I} \) is a basis of \( V^{\bot\bot\bot\bot} \).
Expressing \( \eta_{V^{\bot\bot}}(g) \) by using the basis,
\begin{align*}
    &
    \textstyle
    \eta_{V^{\bot\bot}}(g)
    = \sum_{k} \hat{\hat{\varphi}}_k(\eta_{V^{\bot\bot}}(g)) \cdot \hat{\hat{e}}_k
    \\
    &\quad\textstyle
    = \sum_k \eta_{V^{\bot\bot}}(g)(\hat{\varphi}_k) \cdot \hat{\hat{e}}_k
    = \sum_k \hat{\varphi}_k(g) \cdot \hat{\hat{e}}_k.
\end{align*}
Hence
\begin{align*}
    \textstyle
    \hat{\hat{\varphi}}_i(\eta_{V^{\bot\bot}}(\mu_V(\hat{\hat{e}}_j))
    &= \textstyle
    \hat{\hat{\varphi}}_i(\sum_k \hat{\varphi}_k(\mu_V(\hat{\hat{e}}_j)) \cdot \hat{\hat{e}}_k)
    \\
    &\Kle \textstyle
    \sum_k \hat{\varphi}_k(\mu_V(\hat{\hat{e}}_j)) \cdot \hat{\hat{\varphi}}_i(\hat{\hat{e}}_k)
    \\
    &= \textstyle
    \hat{\varphi_i}(\mu_V(\hat{\hat{e}}_j))
\end{align*}
and
\begin{equation*}
    \textstyle
    \hat{\varphi_i}(\mu_V(\hat{\hat{e}}_j))
    =
    \mu_V(\hat{\hat{e}}_j)(\varphi_i)
    =
    \hat{\hat{e}}_j(\lambda h. h(\varphi_i))
    =
    \hat{e}_j(\varphi_i)
    = \delta_{i,j}
\end{equation*}
as required.

\subsection{Proof of Theorem~\ref{thm:classic:vector}}
    If \( V,W \in \SVec[\srig]^{\bot\bot} \), then \( (W \multimap V) \in \SVec[\srig]^{\bot\bot} \) since \( (W \multimap V) \cong (W \multimap V^{\bot\bot}) \cong (W \otimes V^{\bot})^\bot \) and \( (W \otimes V^{\bot}) \in \SVec[\srig] \).
    For every \( V,W \in \SVec[\srig] \) and \( U \in \SVec[\srig]^{\bot\bot} \), we have
    \begin{align*}
        \Bilin(V,W; U)
        &\cong \SVec[\srig](V \otimes W, U) \\
        &\cong \SVec[\srig]((V \otimes W)^{\bot\bot}, U)
    \end{align*}
    natural in \( U \in \SVec[\srig]^{\bot\bot} \).
    Hence \( (V \otimes W)^{\bot\bot} \) is a representing object of bilinear maps in \( \SVec[\srig]^{\bot\bot} \).
    Since
    \begin{equation*}
        \SVec[\srig](V, W \multimap U) \cong \Bilin(V,W; U)
    \end{equation*}
    natural in \( U \), we have \( (({-})\otimes W)^{\bot\bot} \dashv W \multimap ({-}) \).
    So we have an SMCC with \( U^{\bot\bot} \cong U \), i.e.~a model of classical MLL.
    As a left adjoint, \( ({-})^{\bot\bot} \) preserves coproducts, so \( \SVec[\srig]^{\bot\bot} \) has additives.

    Suppose that \( !V \in \SVec[\srig] \) for every \( V \in \SVec[\srig] \).
    Since \( \SVec[\srig] \) is a full subcategory of \( \SMod[\srig] \), this means that \( !V \) is (the carrier object of) the cofree cocommutative comonid in \( \SVec[\srig] \).
    Hence
    \begin{equation*}
        \xymatrix{
            \Comon(\SVec[\srig])
            \ar@{}[r]|(0.6){\bot}
            \ar@<5pt>[r]
            & \SVec[\srig] \ar@<5pt>[l] \ar@<5pt>[r] \ar@{}[r]|(0.47){\bot} & \SVec[\srig]^{\bot\bot} \ar@<5pt>[l]
        }
    \end{equation*}
    and left adjoints are strong monoidal.
    Hence \( \SVec[\srig]^{\bot\bot} \) is a linear-non-linear model of classical linear logic.

\subsection{Proofs of Theorems~\ref{thm:classical:vector:exponential} and \ref{thm:classical:proj:exponential}}
\label{sec:appx:classic:exponential-proof}

Let \( \srig \) be a \( \Sigma \)-semiring and assume that every downward-closed sum-reflecting ideal of \( \srig \) has a countable dual basis.
Sometimes we further assume that every downward-closed sum-reflecting ideal of \( \srig \) has a countable orthogonal dual basis; this situation is called the \emph{orthogonality assumption}.

\begin{lemma}\label{lem:appx:basis-of-symmetric-tensor-power}
    For every module \( V \) with a countable dual basis, the symmetric tensor power \( \SymTensor{V}{n} \) also has a countable  dual basis.
    Furthermore, under the orthogonality assumption, if \( V \) has a countable orthogonal dual basis, then so does \( \SymTensor{V}{n} \).
\end{lemma}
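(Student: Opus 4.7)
The plan is to decompose \( \SymTensor{V}{n} \) according to the \( S_n \)-orbits on the canonical basis of \( V^{\otimes n} \), and to assemble a countable dual basis by picking bases for the downward-closed sum-reflecting ideals of \( \srig \) that describe each orbit component.

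From the countable dual basis \( (e_i, \varphi_i)_{i \in I} \) of \( V \), iterating the tensor-basis construction in the proof of Lemma~\ref{lem:classic:vector-closure} yields a countable dual basis \( (d_k, \psi_k)_{k \in K} \) of \( V^{\otimes n} \), indexed by \( K := I^n \), with \( d_k := e_{k_1} \otimes \cdots \otimes e_{k_n} \) and \( \psi_k := \varphi_{k_1} \otimes \cdots \otimes \varphi_{k_n} \). The symmetric group \( S_n \) acts on \( K \) with orbits parameterised by multisets in the countable set \( I^{(n)} \); write \( T_\mu \subseteq K \) for the orbit of \( \mu \). By (the obvious extension to a family of parallel maps of) Lemma~\ref{lem:module:equaliser}, the inclusion \( \iota \colon \SymTensor{V}{n} \hookrightarrow V^{\otimes n} \) is a sum-reflecting submodule.

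For \( x \in \SymTensor{V}{n} \), regrouping the expansion \( \iota(x) = \sum_{k \in K} \psi_k(\iota(x)) \cdot d_k \) by \( S_n \)-orbit (justified by the infinite associativity axiom of \( \Sigma \)-monoid) produces defined partial sums \( Y_\mu(x) := \sum_{k \in T_\mu} \psi_k(\iota(x)) \cdot d_k \in V^{\otimes n} \) with \( \iota(x) = \sum_\mu Y_\mu(x) \). Symmetry of \( x \) forces \( \psi_k(\iota(x)) \) to depend only on the orbit \( [k] = \mu \), giving a well-defined \( \srig \)-linear map \( \tilde{\psi}_\mu \colon \SymTensor{V}{n} \to \srig \); each \( Y_\mu(x) \) is \( S_n \)-invariant and therefore lies in \( \iota(\SymTensor{V}{n}) \), and by sum-reflection \( x = \sum_\mu Y_\mu(x) \) already in \( \SymTensor{V}{n} \).

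The crux is that the naive choice \( \sum_{k \in T_\mu} d_k \) need not be defined in \( V^{\otimes n} \). Instead I consider \( J_\mu := \{\, \alpha \in |\srig| \mid \sum_{k \in T_\mu} \alpha \cdot d_k \text{ is defined in } V^{\otimes n} \,\} \) and verify that it is a downward-closed sum-reflecting ideal of \( \srig \); the assignment \( \alpha \mapsto \sum_{k \in T_\mu} \alpha \cdot d_k \) is then an \( \srig \)-linear isomorphism of \( J_\mu \) onto the ``\( \mu \)-component'' of \( \SymTensor{V}{n} \), and \( \tilde{\psi}_\mu \) corestricts to a map \( \SymTensor{V}{n} \to J_\mu \). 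The standing assumption of the appendix now supplies a countable dual basis \( (\gamma_{\mu, j}, \chi_{\mu, j})_{j \in I_\mu} \) of \( J_\mu \); I set \( \tilde{e}_{\mu, j} := \sum_{k \in T_\mu} \gamma_{\mu, j} \cdot d_k \in \SymTensor{V}{n} \) and \( \tilde{\psi}_{\mu, j} := \chi_{\mu, j} \circ \tilde{\psi}_\mu \), indexed by the countable set \( \bigsqcup_\mu I_\mu \). The identity \( x = \sum_{\mu, j} \tilde{\psi}_{\mu, j}(x) \cdot \tilde{e}_{\mu, j} \) then follows by bilinearity, a sum-swap via the infinite associativity axiom, and the \( J_\mu \)-basis identity \( \sum_j \chi_{\mu, j}(\alpha) \gamma_{\mu, j} = \alpha \) applied to \( \alpha = \tilde{\psi}_\mu(x) \in J_\mu \).

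Under the orthogonality assumption, each \( J_\mu \) further admits an orthogonal dual basis, and orthogonality \( \tilde{\psi}_{\mu, j}(\tilde{e}_{\mu', j'}) = \delta_{(\mu, j), (\mu', j')} \) follows from a direct calculation combining \( \psi_k(d_{k'}) = \delta_{k, k'} \) (a consequence of \( \varphi_i(e_j) = \delta_{i, j} \) for \( V \)) with \( \chi_{\mu, j}(\gamma_{\mu, j'}) = \delta_{j, j'} \). The main obstacle throughout is the partiality of infinite sums---concretely, that \( \sum_{k \in T_\mu} \alpha \cdot d_k \) can be defined for a nonzero \( \alpha \) without \( \sum_{k \in T_\mu} d_k \) itself being defined (as happens already over \( [0, 1] \))---which is precisely the phenomenon the \( \srig \)-assumption is tailored to resolve, by providing dual bases for the ``partial-definedness'' ideals \( J_\mu \).
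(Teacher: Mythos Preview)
Your proposal is correct and follows essentially the same approach as the paper's proof: both build the tensor-power basis \((d_k,\psi_k)_{k\in I^n}\), regroup by \(S_n\)-orbits (equivalently, by multisets \(\xi\in\mathcal{M}_n(I)\)), observe that \(\psi_k(x)\) is constant on orbits for symmetric \(x\), introduce the downward-closed sum-reflecting ideal of coefficients \(\alpha\) for which \(\sum_{k\in T_\mu}\alpha\cdot d_k\) converges, invoke the standing hypothesis to obtain a dual basis of that ideal, and splice everything together via sum-reflection of \(\SymTensor{V}{n}\hookrightarrow V^{\otimes n}\). The orthogonal case is handled identically in both by the obvious Kronecker-delta computation.
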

\begin{proof}
    Suppose that \( V \) has a countable dual basis \( (e_i, \varphi_i)_{i \in I} \).
    For a sequence \( s = (s_1,\dots,s_n) \in I^n \), we define \( e_s \defe (e_{s_1} \otimes \dots \otimes e_{s_n}) \) and \( \varphi_s(x_1 \otimes \dots \otimes x_n) \defe \varphi_{s_1}(x_1) \dots \varphi_{s_n}(x_n) \).
    Then \( (e_s, \varphi_s)_{s \in I^n} \) is a basis of \( V^{\otimes n} \).
    Hence
    \begin{equation*}
        \ident_{V^{\otimes n}} \quad=\quad \sum_{s \in I^n} e_s \circ \varphi_s,
    \end{equation*}
    where \( e_s \in |V^{\otimes n}| \) is seen as \( e_s \colon \srig \longrightarrow V^{\otimes n} \) by \( e_s(r) \defe r \cdot e_s \).

    For a permutation \( \sigma \in \mathfrak{S}_n \) and \( s \in I^n \), we define \( \sigma(s) = (s_{\sigma^{-1}(1)}, \dots, s_{\sigma^{-1}(n)}) \).
    Then \( \sigma(e_{s}) = e_{\sigma(s)} \) and \( \varphi_s \circ \sigma = \varphi_{\sigma^{-1}(s)} \).

    Let \( \mathcal{M}_n(I) \) be the set of finite multisets over \( I \) of size \( n \).
    We write \( s \lhd \xi \) if \( s \) is a sequence obtained by arranging elements of \( m \).
    For every \( x \in \SymTensor{V}{n} \) and permutation \( \sigma \in \mathfrak{S}_n \), \( \sigma(x) = x \) and thus \( \varphi_s(x) = \varphi_s(\sigma(x)) = \varphi_{\sigma^{-1}(s)}(x) \).
    The above observation says that \( \varphi_\xi(x) \) is well-defined for every \( \xi \in \mathcal{M}_n(I) \) and \( x \in |\SymTensor{V}{n}| \).
    For every \( x \in |\SymTensor{V}{n}| \), we have
    \begin{align*}
        x
        &=
        \sum_{s \in I^{n}} \varphi_s(x) \cdot e_s
        \\
        &=
        \sum_{\xi \in \mathcal{M}_n(I)} \sum_{s \lhd \xi} \varphi_s(x) \cdot e_s
        \\
        &=
        \sum_{\xi \in \mathcal{M}_n(I)} \sum_{s \lhd \xi} \varphi_\xi(x) \cdot e_s
    \end{align*}
    in \( V^{\otimes n} \) and, in particular, \( (\sum_{s \lhd \xi} \varphi_\xi(x) \cdot e_s) \in |\SymTensor{V}{n}| \) where the sum is taken in \( V^{\otimes n} \).

    For \( \xi \in \mathcal{M}_n(I) \), let \( R_\xi \) be a subset of \( |\srig| \) defined by
    \begin{equation*}
        R_\xi
        \quad\defe\quad
        \{ r \in |\srig| \mid \IsDef{(\sum_{s \lhd \xi} r \cdot e_s)} \mbox{ in \( V^{\otimes n} \)} \}.
    \end{equation*}
    \( R_\xi \) is obviously downward-closed.
    Hence \( R_\xi \) can be seen as a downward-closed sum-reflecting ideal of \( \srig \).
    By the above argument, \( \varphi_\xi(x) \in R_\xi \) for every \( x \in |\SymTensor{V}{n}| \).
    By definition, the mapping
    \begin{equation*}
        e_\xi \colon R_\xi \longrightarrow \SymTensor{V}{n},
        \qquad
        r \mapsto \sum_{s \lhd \xi} r \cdot e_s
    \end{equation*}
    is well-defined.  It is easy to see that \( e_\xi \) is \( \srig \)-linear.

    By the assumption, \( R_\xi \) has a countable dual basis \( (e_{\xi,j}, \varphi_{\xi,j})_{j \in J_\xi} \) and \( r = \sum_{j \in J_\xi} \varphi_{\xi,j}(r) \cdot e_{\xi,j} \) holds in \( R_\xi \) for every \( r \in R_\xi \).
    Then, for every \( x \in |\SymTensor{V}{n}| \),
    \begin{align*}
        x
        &=
        \sum_{\xi \in \mathcal{M}_n(I)} \sum_{s \lhd \xi} \varphi_\xi(x) \cdot e_s
        \\
        &=
        \sum_{\xi \in \mathcal{M}_n(I)} (e_\xi \circ \varphi_\xi)(x)
        \\
        &=
        \sum_{\xi \in \mathcal{M}_n(I)} (e_\xi \circ (\sum_{j \in J_\xi} e_{\xi,j} \circ \varphi_{\xi,j}) \circ \varphi_\xi)(x)
        \\
        &\Kle
        \sum_{\xi \in \mathcal{M}_n(I)} \sum_{j \in J_\xi} (e_\xi \circ e_{\xi,j} \circ \varphi_{\xi,j} \circ \varphi_\xi)(x)
    \end{align*}
    where the sum is taken in \( V^{\otimes n} \).
    The equation
    \begin{equation*}
        x = \sum_{\xi \in \mathcal{M}_n(I)} \sum_{j \in J_\xi} (e_\xi \circ e_{\xi,j} \circ \varphi_{\xi,j} \circ \varphi_\xi)(x)
    \end{equation*}
    contains only values in \( \SymTensor{V}{n} \) and \( \SymTensor{V}{n} \) is sum-reflecting, this equation holds in \( \SymTensor{V}{n} \) as well.
    This means that \( \SymTensor{V}{n} \) has a countable dual basis \( (e_\xi \circ e_{\xi,j}, \varphi_{\xi,j} \circ \varphi_\xi)_{\xi \in \mathcal{M}_n, j \in J_\xi} \), where \( e_\xi \circ e_{\xi,j} \colon \srig \longrightarrow \SymTensor{V}{n} \) is identified with an element \( (e_\xi \circ e_{\xi,j})(1) \in |\SymTensor{V}{n}| \).

    Let us consider the orthogonal case.
    Then the bases \( (e_s, \varphi_s)_{s \in I^n} \) and \( (e_{\xi,j}, \varphi_{\xi,j})_{j \in J_\xi} \) can be assume to be orthogonal.
    Then
    \begin{equation*}
        \varphi_{\xi,j} \circ \varphi_\xi \circ e_\xi \circ e_{\xi,j}
        \quad=\quad
        \varphi_{\xi,j} \circ e_{\xi,j}
        \quad=\quad
        \ident.
    \end{equation*}
    If \( \xi \neq \xi' \), then
    \begin{equation*}
        \varphi_{\xi,j} \circ \varphi_\xi \circ e_{\xi'} \circ e_{\xi',j'}
        \quad=\quad
        \varphi_{\xi,j} \circ 0 \circ e_{\xi',j'}
        \quad=\quad
        0.
    \end{equation*}
    If \( j \neq j' \), then
    \begin{equation*}
        \varphi_{\xi,j} \circ \varphi_\xi \circ e_{\xi} \circ e_{\xi,j'}
        \quad=\quad
        \varphi_{\xi,j} \circ e_{\xi,j'}
        \quad=\quad
        0.
    \end{equation*}
    Hence \( (e_\xi \circ e_{\xi,j}, \varphi_{\xi,j} \circ \varphi_\xi)_{\xi \in \mathcal{M}_n, j \in J_\xi} \) is orthogonal.
\end{proof}

Let \( !V \) be the cofree cocommutative comonoid and
\[
    d^{(n)} \colon !V \longrightarrow (!V)^{\otimes n} \longrightarrow V^{\otimes n}
\]
for each \( n \in \mathbb{N} \).
By the cocommutativity, \( d^{(n)} \) factors as \( !V \stackrel{\hat{d}^{(n)}}{\longrightarrow}\SymTensor{V}{n} \hookrightarrow V^{\otimes n} \).
We define \( \hat{d} \colon {!}V \longrightarrow \prod_{n < \omega} \SymTensor{V}{n} \) by \( \hat{d}(x) = \sum_{n < \omega} \hat{d}^{(n)}(x) \).
Let \( \S V \) and \( \S^{(n)}V \) be sum-reflecting submodules of \( \prod_{n < \omega} \SymTensor{V}{n} \) and \( \SymTensor{V}{n} \) consisting of
\begin{align*}
    |\S V| &:= \{ x \in |\prod_{n < \omega} \SymTensor{V}{n}| \mid \exists y \in |{!}V|. x \le \hat{d}(y) \} \\
    |\S^{(n)} V| &:= \{ x \in |\SymTensor{V}{n}| \mid \exists y \in |{!}V|. x \le \hat{d}^{(n)}(y) \}.
\end{align*}
\begin{lemma}\label{lem:appx:base-of-section}
    \( \S V \) and \( \S^{(n)} V \) have countable dual bases.
    Furthermore, under the orthogonal assumption, if \( V \) has a countable orthogonal dual basis, then so do  \( \S V \) and \( \S^{(n)} V \).
\end{lemma}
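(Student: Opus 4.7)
The plan is to adapt the argument of Lemma~\ref{lem:appx:basis-of-symmetric-tensor-power} to $\S^{(n)} V$ and $\S V$, exploiting that these are downward-closed sum-reflecting submodules of $\SymTensor{V}{n}$ and $\prod_{n<\omega}\SymTensor{V}{n}$, respectively. The recipe is the same as for the symmetric tensor power: start from a countable (orthogonal) dual basis of the ambient module, and for each basis pair restrict the admissible ``coefficients'' to a downward-closed sum-reflecting ideal of $\srig$, which in turn admits a countable (orthogonal) dual basis by the standing hypothesis of Theorem~\ref{thm:classical:vector:exponential} (resp.~Theorem~\ref{thm:classical:proj:exponential}).

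I would first treat $\S^{(n)}V$. Let $(E_a,\Phi_a)_{a\in A}$ be the countable (orthogonal) dual basis of $\SymTensor{V}{n}$ produced by Lemma~\ref{lem:appx:basis-of-symmetric-tensor-power}. For each $a\in A$ define
\[
    R_a^{\S} \;:=\; \{\, r \in |\srig| \mid r\cdot E_a \in |\S^{(n)}V| \,\}.
\]
This is a downward-closed sum-reflecting ideal of $\srig$: downward-closedness follows from the downward-closedness of $\S^{(n)}V \hookrightarrow \SymTensor{V}{n}$ together with the $\srig$-linearity of $r\mapsto r\cdot E_a$, and sum-reflection is immediate from the definition of its $\Sigma$-monoid structure. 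By hypothesis $R_a^{\S}$ admits a countable (orthogonal) dual basis $(\rho_{a,k},\psi_{a,k})_{k\in K_a}$.

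The key observation is that for every $x\in |\S^{(n)}V|$ one has $\Phi_a(x)\in R_a^{\S}$: the summand $\Phi_a(x)\cdot E_a$ in $x = \sum_a \Phi_a(x)\cdot E_a$ lies below $x$ in the associated preorder of $\SymTensor{V}{n}$, hence belongs to $|\S^{(n)}V|$ by downward-closedness. Consequently $\Phi_a|_{\S^{(n)}V}$ factors $\srig$-linearly through $R_a^{\S}\hookrightarrow\srig$, using sum-reflection of that inclusion. Set $E'_{a,k} := \rho_{a,k}\cdot E_a\in |\S^{(n)}V|$ and $\Phi'_{a,k} := \psi_{a,k}\circ \Phi_a|_{\S^{(n)}V}$. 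Substituting the $R_a^{\S}$-expansion of $\Phi_a(x)$ into the $\SymTensor{V}{n}$-expansion of $x$ and rearranging along the lines of the final display in the proof of Lemma~\ref{lem:appx:basis-of-symmetric-tensor-power} yields $x=\sum_{a,k}\Phi'_{a,k}(x)\cdot E'_{a,k}$ in $\SymTensor{V}{n}$; sum-reflection of $\S^{(n)}V\hookrightarrow\SymTensor{V}{n}$ then promotes the convergence into the submodule. Orthogonality transfers at once, since
\[
    \Phi'_{a,k}(E'_{b,l}) \;=\; \psi_{a,k}\bigl(\rho_{b,l}\cdot\Phi_a(E_b)\bigr) \;=\; \psi_{a,k}(\delta_{a,b}\,\rho_{b,l}) \;=\; \delta_{(a,k),(b,l)}.
\]

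The $\S V$ case proceeds by exactly the same recipe, now starting from the countable (orthogonal) dual basis of $\prod_{n<\omega}\SymTensor{V}{n}$ assembled from the component bases as in Lemma~\ref{lem:classic:vector-closure} (resp.~Lemma~\ref{lem:classic:proj-closure}), and using that $\S V$ is a downward-closed sum-reflecting submodule of the product. The main obstacle, as in the ambient-space lemma, is the convergence bookkeeping when passing from $\sum_a\bigl(\sum_k\psi_{a,k}(\Phi_a(x))\cdot\rho_{a,k}\bigr)\cdot E_a$ to $\sum_{a,k}\psi_{a,k}(\Phi_a(x))\cdot E'_{a,k}$: this rearrangement holds on the nose in the ambient module by bilinearity and the $\Sigma$-monoid associativity axiom, and is then pulled back into the submodule via sum-reflection, so that the resulting dual-basis identity is actually witnessed in $\S^{(n)}V$ (resp.~$\S V$) rather than only in the ambient module.
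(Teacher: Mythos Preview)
Your proposal is correct and follows essentially the same route as the paper: pull back the ambient dual basis of $\SymTensor{V}{n}$ (resp.\ $\prod_n\SymTensor{V}{n}$) along the downward-closed sum-reflecting inclusion, cut each coefficient down to a downward-closed sum-reflecting ideal of $\srig$, refine by a dual basis of that ideal, and then use sum-reflection to carry the resulting identity back into the submodule. The only cosmetic difference is in the definition of the ideal: you take $R_a^{\S}=\{\,r\mid r\cdot E_a\in|\S^{(n)}V|\,\}$, whereas the paper uses the downward closure of the image of $\varphi_{n,i}\circ\hat d^{(n)}$; the paper's ideal is contained in yours, and both satisfy the three properties that matter (downward-closed sum-reflecting, contains $\Phi_a(\S^{(n)}V)$, and $r\mapsto r\cdot E_a$ lands in $\S^{(n)}V$), so either choice works. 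Your treatment of $\S V$ via the product basis is likewise equivalent to the paper's two-step argument using the $(\iota_n,\pi_n)$ pairs.
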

\begin{proof}
    By the previous lemma, \( \SymTensor{V}{n} \) has a countable dual basis \( (e_{n,i}, \varphi_{n,i})_{i \in I_n} \).
    Let \( R_{n,i} \) be the downward-closure of the image of \( \varphi_{n,i} \circ d^{(n)} \colon {!}V \longrightarrow \srig \), i.e.~\( R_{n,i} \defe \{ r \in |\srig| \mid \exists x \in !V. r \le \varphi_{n,i}(\hat{d}^{(n)}(x)) \} \).
    Then \( R_{n,i} \) is closed under the multiplication of \( \srig \) (i.e.~\( R_{n,i} \) is an ideal) and
    \begin{equation*}
        e_{n,i} \colon R_{n,i} \longrightarrow \S^{(n)} V,
        \qquad
        r \mapsto r \cdot e_{n,i}
    \end{equation*}
    is an \( \srig \)-linear map.
    Since \( \S^{(n)} \) is a sum-reflecting submodule of \( \SymTensor{V}{n} \) and
    \begin{equation*}
        x
        \quad=\quad
        \sum_{i \in I_n} \varphi_{n,i}(x) \cdot e_{n,i}
    \end{equation*}
    holds in \( \SymTensor{V}{n} \), we have
    \begin{equation*}
        \ident_{\S^{(n)} V}
        \quad=\quad
        \sum_{i \in I_n} e_{n,i} \circ \varphi_{n,i}
    \end{equation*}
    holds in \( \SMod[\srig](\S^{(n)}V, \S^{(n)}V) \).
    By the assumption, \( R_{n,i} \) has a countable dual basis, say \( (e_{n,i,j}, \varphi_{n,i,j})_{j \in J_{n,i}} \).
    So
    \begin{equation*}
        \ident_{\S^{(n)} V}
        \quad=\quad
        \sum_{i \in I_n, j \in J_{n,i}} e_{n,i} \circ e_{n,i,j} \circ \varphi_{n,i,j} \circ \varphi_{n,i},
    \end{equation*}
    which means that \( (e_{n,i,j}, \varphi_{n,i,j})_{i \in I_n, j \in J_{n,i}} \) is a countable dual basis of \( \S^{(n)} V \).

    Under the orthogonality assumption, if \( V \) has a countable orthogonal dual basis, then \( \SymTensor{V}{n} \) has a countable orthogonal dual basis by the previous lemma.
    Then \( (e_{n,i}, \varphi_{n,i})_{i \in I_n} \) and \( (e_{n,i,j}, \varphi_{n,i,j})_{j \in J_{n,i}} \) can be assumed to be orthogonal.
    In this case, \( (e_{n,i,j}, \varphi_{n,i,j})_{i \in I_n, j \in J_{n,i}} \) is orthogonal.

    The claim for \( \S V \) follows from the fact that the collection of
    \begin{align*}
        \pi_n &\quad\colon\quad (\prod_{n < \omega} \SymTensor{V}{n}) \longrightarrow \SymTensor{V}{n} \\
        \iota_n &\quad\colon\quad \SymTensor{V}{n} \longrightarrow (\prod_{n < \omega} \SymTensor{V}{n})
    \end{align*}
    behaves like a basis, that means, \( \ident = \sum_{n < \omega} \iota_n \circ \pi_n \).
    These morphisms can be restricted to
    \begin{align*}
        \pi_n &\quad\colon\quad \S V \longrightarrow \S^{(n)} V \\
        \iota_n &\quad\colon\quad \S^{(n)} V \longrightarrow \S V,
    \end{align*}
    and the equation \( \ident = \sum_{n < \omega} \iota_n \circ \pi_n \) still holds for the restrictions.
    This equation together with the bases of \( \S^{(n)} \) constructed above induces a basis of \( \S V \).
\end{proof}

Let \( \SymTensor{V}{n,m} \) be the equaliser of \( \{ \sigma \otimes \sigma' \colon V^{\otimes n} \otimes V^{\otimes m} \longrightarrow V^{\otimes n} \otimes V^{\otimes m} \mid \sigma \in \mathfrak{S}_n, \sigma' \in \mathfrak{S}_m \} \) in \( \SMod[\srig] \).
Then there exists the canonical morphism \( \SymTensor{V}{n} \otimes \SymTensor{V}{m} \longrightarrow \SymTensor{V}{n,m} \).
\begin{lemma}\label{lem:appx:key-on-comultiplication}
    Let \( V \) be an \( \srig \)-module with a countable dual basis.
    The canonical morphism \( \SymTensor{V}{n} \otimes \SymTensor{V}{m} \longrightarrow \SymTensor{V}{n,m} \) is a monomorphism.
    Furthermore it is a downward-closed sum-reflecting submodule.
\end{lemma}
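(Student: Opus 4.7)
The plan is to factor the canonical map $j \colon \SymTensor{V}{n} \otimes \SymTensor{V}{m} \longrightarrow \SymTensor{V}{n,m}$ through the ambient $V^{\otimes(n+m)}$: writing $\iota_{n,m}$ for the equalizer inclusion $\SymTensor{V}{n,m} \hookrightarrow V^{\otimes(n+m)}$ and $\iota_n$ for the equalizer inclusion $\SymTensor{V}{n} \hookrightarrow V^{\otimes n}$, one has $\iota_{n,m} \circ j = \iota_n \otimes \iota_m$. I would then transfer properties of $\iota_n \otimes \iota_m$ and $\iota_{n,m}$ to $j$, one at a time.

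For the monomorphism claim: by Lemma~\ref{lem:appx:basis-of-symmetric-tensor-power}, each of $V^{\otimes n}, V^{\otimes m}, \SymTensor{V}{n}, \SymTensor{V}{m}$ carries a countable dual basis, so by Lemma~\ref{lem:appx:submodule} tensoring with any one of them preserves monos. Decomposing $\iota_n \otimes \iota_m$ as $(\iota_n \otimes V^{\otimes m}) \circ (\SymTensor{V}{n} \otimes \iota_m)$ exhibits it as a composite of monos; since $\iota_{n,m}$ is also mono, $j$ must be mono. For sum-reflection: the same basis observation combined with Lemma~\ref{lem:appx:tensor-sr} makes $\iota_n \otimes \iota_m$ sum-reflecting, and since $\iota_{n,m}$ is sum-reflecting by Lemma~\ref{lem:module:equaliser}, a short diagram chase (apply $\iota_{n,m}$ to any candidate sum equation in $\SymTensor{V}{n} \otimes \SymTensor{V}{m}$ and use sum-reflection of $\iota_n \otimes \iota_m$) yields sum-reflection of $j$.

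The main obstacle is downward-closedness. The obvious route via Lemma~\ref{lem:appx:tensor-dcsr} fails because the equalizer $\iota_n$ is not itself downward-closed in $V^{\otimes n}$: e.g.~$e_1 \otimes e_2 \le e_1 \otimes e_2 + e_2 \otimes e_1$, with the right-hand side symmetric but the left not. My plan is coordinate-based. Fix a dual basis $(e_i, \varphi_i)_{i \in I}$ of $V$; each orbit $(\xi, \eta) \in \mathcal{M}_n(I) \times \mathcal{M}_m(I)$ of the $\mathfrak{S}_n \times \mathfrak{S}_m$-action on $I^{n+m}$ gives rise to a downward-closed sum-reflecting ideal $R_{(\xi,\eta)} \subseteq \srig$ analogous to those appearing in the proof of Lemma~\ref{lem:appx:basis-of-symmetric-tensor-power}, and the coefficient of any symmetric tensor on that orbit lies in this ideal. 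Given $w \le j(z)$ in $\SymTensor{V}{n,m}$, I would use countable dual bases of the ideals $R_\xi, R_\eta$ (available by the standing hypothesis of the ambient theorem) to decompose the orbit-coefficients of $w$ as products of elements in $R_\xi$ and $R_\eta$, and then reassemble $w$ as $j(z')$ for an explicit $z' \in \SymTensor{V}{n} \otimes \SymTensor{V}{m}$. The chief technical hurdle is ensuring that the sums defining $z'$ actually converge in $\SymTensor{V}{n} \otimes \SymTensor{V}{m}$; convergence should follow by bounding the partial sums against the convergent basis expansion of $z$ itself, invoking sum-reflection of $j$ already established.
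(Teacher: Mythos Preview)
Your treatment of the monomorphism and sum-reflection parts is essentially the paper's: factor $\iota_{n,m}\circ j=\iota_n\otimes\iota_m$, use Lemmas~\ref{lem:appx:basis-of-symmetric-tensor-power}, \ref{lem:appx:submodule}, \ref{lem:appx:tensor-sr}, and cancel the sum-reflecting mono $\iota_{n,m}$.

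The gap is in the downward-closedness argument. You correctly identify that the orbit-coefficient $r^{(w)}_{\xi,\eta}:=(\varphi_\xi\otimes\varphi_\eta)(w)$ lies in the ideal $R_{\xi,\eta}$, and you propose to ``decompose it as products of elements in $R_\xi$ and $R_\eta$'' using dual bases of these ideals. But nothing in the data of $R_\xi$, $R_\eta$, or their bases lets you factor an arbitrary element of $R_{\xi,\eta}$ through $R_\xi\otimes R_\eta$: the multiplication map $R_\xi\otimes R_\eta\to R_{\xi,\eta}$ need not be surjective, and a dual basis of $R_\xi$ gives you decompositions \emph{inside} $R_\xi$, not decompositions of elements of $\srig$ into products. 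Bounding $r^{(w)}_{\xi,\eta}$ by $r^{(z)}_{\xi,\eta}$ does not help directly either, since for a general $z\in\SymTensor{V}{n}\otimes\SymTensor{V}{m}$ the coefficient $r^{(z)}_{\xi,\eta}$ is a \emph{sum} of products $\varphi_\xi(a_k)\varphi_\eta(b_k)$, which is not obviously in the image of $R_\xi\otimes R_\eta$.

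The paper closes this gap with two ingredients you do not invoke. First, Lemma~\ref{lem:module:tensor-peak}: every $z\in\SymTensor{V}{n}\otimes\SymTensor{V}{m}$ is bounded above by a \emph{pure} tensor $u\otimes v$, so $r^{(w)}_{\xi,\eta}\le r^{(z)}_{\xi,\eta}\le\varphi_\xi(u)\,\varphi_\eta(v)$, a single product manifestly in the image of $R_\xi\otimes R_\eta$. Second, Lemma~\ref{lem:appx:tensor-dcsr} applied to the ideal inclusions $R_\xi,R_\eta\hookrightarrow\srig$ shows that $R_\xi\otimes R_\eta\hookrightarrow\srig\otimes\srig\cong\srig$ is itself a downward-closed sum-reflecting ideal; hence $r^{(w)}_{\xi,\eta}\in R_\xi\otimes R_\eta$ by downward-closedness. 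Only then can one write $w=\sum_{\xi,\eta}(e_\xi\otimes e_\eta)(r^{(w)}_{\xi,\eta})$ with each summand in $\SymTensor{V}{n}\otimes\SymTensor{V}{m}$, and conclude via the already-established sum-reflection of $j$. Your convergence argument is the right final step, but it presupposes that each summand already lives in the domain of $j$, which is exactly the point you have not secured.
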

\begin{proof}
    Let us first describe the canonical map.
    Since \( \SymTensor{V}{n} \hookrightarrow V^{\otimes n} \) and \( \SymTensor{V}{m} \longrightarrow V^{\otimes m} \) are equalisers, they are sum-reflecting submodules.
    By the assumption and Lemmas~\ref{lem:appx:basis-of-symmetric-tensor-power} and \ref{lem:appx:tensor-sr}, their tensor product \( \SymTensor{V}{n} \otimes \SymTensor{V}{m} \longrightarrow V^{\otimes n} \otimes V^{\otimes m} \) is also a sum-reflecting submodule.
    Since \( (\sigma \otimes \sigma')(x \otimes x') = x \otimes x' \) for every \( x \in |\SymTensor{V}{n}| \) and \( x' \in |\SymTensor{V}{m}| \), this map canonically factors through \( \SymTensor{V}{n,m} \hookrightarrow V^{\otimes n} \otimes V^{\otimes m} \) as \( \SymTensor{V}{n} \otimes \SymTensor{V}{m} \longrightarrow \SymTensor{V}{n,m} \hookrightarrow V^{\otimes n} \otimes V^{\otimes m} \).
    Since the composite is a sum-reflecting submodule, so is \( \SymTensor{V}{n} \otimes \SymTensor{V}{m} \longrightarrow \SymTensor{V}{n,m} \).
    Hence it suffices to show that \( \SymTensor{V}{n} \otimes \SymTensor{V}{m} \longrightarrow \SymTensor{V}{n,m} \) is downward-closed.

    Let \( (e_i, \varphi_i)_{i \in I} \) be a countable dual basis of \( V \).
    We use the notions introduced in the proof of Lemma~\ref{lem:appx:basis-of-symmetric-tensor-power}.
    For \( \xi \in \mathcal{M}_n(I) \) and \( \zeta \in \mathcal{M}_m(I) \),
    \begin{align*}
        R_\xi &\defe \{ r \in |\srig| \mid \IsDef{(\sum_{s \lhd \xi} r \cdot e_s)} \mbox{ in \( V^{\otimes n} \)} \} \\
        R_\zeta &\defe \{ r \in |\srig| \mid \IsDef{(\sum_{t \lhd \zeta} r \cdot e_t)} \mbox{ in \( V^{\otimes m} \)} \} \\
        R_{\xi,\zeta} &\defe \{ r \in |\srig| \mid \IsDef{(\sum_{s \lhd \xi, t \lhd \zeta} r \cdot (e_s \otimes e_t))} \mbox{ in \( V^{\otimes n} \otimes V^{\otimes m} \)} \}.
    \end{align*}
    Then \( R_\xi \), \( R_\zeta \) and \( R_{\xi,\zeta} \) are downward-closed sum-reflecting ideals of \( \srig \).
    By the assumption, \( R_\xi \) and \( R_\zeta \) have countable dual bases.
    Hence \( R_\xi \otimes R_\zeta \longrightarrow \srig \otimes \srig \cong \srig \) is again a downward-closed sum-reflecting ideal by Lemma~\ref{lem:appx:tensor-dcsr}.
    Since \( r \in R_\xi \) and \( r' \in R_\zeta \) implies \( r r' \in R_{\xi,\zeta} \), by Lemma~\ref{lem:module:tensor-peak},
    we have \( R_\xi \otimes R_\zeta \hookrightarrow R_{\xi,\zeta} \) as downward-closed sum-reflecting ideals.

    We have mappings
    \begin{align*}
        e_\xi &\colon R_\xi \longrightarrow \SymTensor{V}{n}, & & r \mapsto \sum_{s \lhd \xi} r \cdot e_s \\
        e_\zeta &\colon R_\zeta \longrightarrow \SymTensor{V}{m}, & & r \mapsto \sum_{t \lhd \zeta} r \cdot e_t \\
        e_{\xi,\zeta} &\colon R_{\xi,\zeta} \longrightarrow \SymTensor{V}{n,m}, & & r \mapsto \sum_{s \lhd \xi, t \lhd \zeta} r \cdot (e_s \otimes e_t).
    \end{align*}
    Then \( R_{\xi,\zeta} \) contains the image of \( \varphi_\xi \otimes \varphi_\zeta \colon \SymTensor{V}{n,m} \longrightarrow \srig \) and
    \begin{align*}
\ident_{\SymTensor{V}{n,m,}} &= \sum_{\xi \in \mathcal{M}_n(I), \zeta \in \mathcal{M}_m(I)} e_{\xi,\zeta} \circ (\varphi_\xi \otimes \varphi_\zeta).
    \end{align*}
    We also have maps \( e_\xi \otimes e_\zeta \colon R_\xi \otimes R_\zeta \longrightarrow \SymTensor{V}{n} \otimes \SymTensor{V}{m} \), which satisfies
    \begin{align*}
        (e_\xi \otimes e_\zeta)(r \otimes r')
        &=
        e_{\xi}(r) \otimes e_{\zeta}(r')
        \\
        &=
        (\sum_{s \lhd \xi} r \cdot e_s) \otimes (\sum_{t \lhd \zeta} r' \cdot e_t)
        \\
        &\Kle
        \sum_{s \lhd \xi} \sum_{t \lhd \zeta} (r r') \cdot (e_s \otimes e_t)
        \\
        &=
        e_{\xi,\zeta}(r r')
    \end{align*}
    in \( V^{\otimes n} \otimes V^{\otimes m} \).
    Since \( \SymTensor{V}{n,m} \hookrightarrow V^{\otimes n} \otimes V^{\otimes m}\) is sum-reflecting, the equation
    \begin{equation*}
        (e_\xi \otimes e_\zeta)(r \otimes r') = e_{\xi,\zeta}(r r')
    \end{equation*}
    holds in \( \SymTensor{V}{n,m} \) as well.
    This means that \( e_\xi \otimes e_\zeta \colon R_\xi \otimes R_\zeta \longrightarrow \SymTensor{V}{n} \otimes \SymTensor{V}{m} \hookrightarrow \SymTensor{V}{n,m}\) coincides with \( e_{\xi,\zeta} \colon R_{\xi,\zeta} \longrightarrow \SymTensor{V}{n,m}\) on \( R_\xi \otimes R_\zeta \hookrightarrow R_{\xi,\zeta} \).

    Now suppose that \( x \in |\SymTensor{V}{n} \otimes \SymTensor{V}{m}| \) and that \( y \le x \) in \( \SymTensor{V}{n,m} \).
    Hence there is \( z \in |\SymTensor{V}{n,m}| \) such that \( y + z = x \) in \( \SymTensor{V}{n,m} \).
    We prove that \( y,z \in |\SymTensor{V}{n} \otimes \SymTensor{V}{m}| \).
    Since \( \SymTensor{V}{n} \otimes \SymTensor{V}{m} \hookrightarrow \SymTensor{V}{n,m} \) is sum-reflecting, it suffices to decompose \( y \) and \( z \) as sums (computed in \( \SymTensor{V}{n,m,} \)) of elements in \( \SymTensor{V}{n} \otimes \SymTensor{V}{m} \).
    By Lemma~\ref{lem:module:tensor-peak}, there exist \( u \in |\SymTensor{V}{n}| \) and \( v \in |\SymTensor{V}{m}| \) such that \( x \le u \otimes v \) in \( \SymTensor{V}{n} \otimes \SymTensor{V}{m} \).
    Then
    \begin{align*}
        x &= \sum_{\xi \in \mathcal{M}_n(I), \zeta \in \mathcal{M}_m(I)} e_{\xi,\zeta}(r^{(x)}_{\xi,\zeta}) \\
        y &= \sum_{\xi \in \mathcal{M}_n(I), \zeta \in \mathcal{M}_m(I)} e_{\xi,\zeta}(r^{(y)}_{\xi,\zeta}) \\
        z &= \sum_{\xi \in \mathcal{M}_n(I), \zeta \in \mathcal{M}_m(I)} e_{\xi,\zeta}(r^{(z)}_{\xi,\zeta}) \\
        u \otimes v &= \sum_{\xi \in \mathcal{M}_n(I), \zeta \in \mathcal{M}_m(I)} e_{\xi,\zeta}(r^{(u)}_{\xi} r^{(v)}_{\zeta})
    \end{align*}
    hold in \( \SymTensor{V}{n,m} \),
    where
    \begin{align*}
        r^{(x)}_{\xi,\zeta} &= (\varphi_\xi \otimes \varphi_\zeta)(x) \\
        r^{(y)}_{\xi,\zeta} &= (\varphi_\xi \otimes \varphi_\zeta)(y) \\
        r^{(z)}_{\xi,\zeta} &= (\varphi_\xi \otimes \varphi_\zeta)(z) \\
        r^{(u)}_{\xi} &= \varphi_\xi(u) \\
        r^{(v)}_\zeta &= \varphi_\zeta(v).
    \end{align*}
    Here we use the fact that \( (\varphi_\xi \otimes \varphi_\zeta)(u \otimes v) = \varphi_\xi(u) \varphi_\zeta(v) \).
    Since \( x = y + z \), we have
    \begin{align*}
        r^{(x)}_{\xi,\zeta} 
        &\;=\;
        (\varphi_\xi \otimes \varphi_\zeta)(x)
        \\ &
        \;=\;
        (\varphi_\xi \otimes \varphi_\zeta)(y+z)
        \\ &
        \;\Kle\;
        (\varphi_\xi \otimes \varphi_\zeta)(y) + 
        (\varphi_\xi \otimes \varphi_\zeta)(z)
        \\ &
        \;=\;
        r^{(y)}_{\xi,\zeta} + r^{(z)}_{\xi,\zeta}.
    \end{align*}
    Since \( r^{(u)}_\xi r^{(v)}_\zeta \in |R_\xi \otimes R_\zeta| \) and \( R_\xi \otimes R_\zeta \) is a downward-closed ideal, \( r^{(y)}_{\xi,\zeta} \le r^{(x)}_{\xi,\zeta} \le r^{(u)}_\xi r^{(v)}_\zeta \) implies \( r^{(y)}_{\xi,\zeta}\in |R_\xi \otimes R_\zeta| \) and similarly \( r^{(z)}_{\xi,\zeta}\in |R_\xi \otimes R_\zeta| \).
    Since \( e_\xi \otimes e_\zeta \colon R_\xi \otimes R_\zeta \longrightarrow \SymTensor{V}{n} \otimes \SymTensor{V}{m} \hookrightarrow \SymTensor{V}{n,m}\) coincides with \( e_{\xi,\zeta} \colon R_{\xi,\zeta} \longrightarrow \SymTensor{V}{n,m}\) on \( R_\xi \otimes R_\zeta \hookrightarrow R_{\xi,\zeta} \), we have
    \begin{align*}
        y &= \sum_{\xi \in \mathcal{M}_n(I), \zeta \in \mathcal{M}_m(I)} (e_{\xi} \otimes e_{\zeta})(r^{(y)}_{\xi,\zeta}) \\
        z &= \sum_{\xi \in \mathcal{M}_n(I), \zeta \in \mathcal{M}_m(I)} (e_{\xi} \otimes e_{\zeta})(r^{(z)}_{\xi,\zeta})
    \end{align*}
    where the sum is taken in \( \SymTensor{V}{n,m} \).
    Since \( (e_{\xi} \otimes e_{\zeta})(r^{(y)}_{\xi,\zeta}) \) and \( (e_{\xi} \otimes e_{\zeta})(r^{(y)}_{\xi,\zeta}) \) belong to \( \SymTensor{V}{n} \otimes \SymTensor{V}{m} \), this is a required decomposition.
\end{proof}

Suppose that \( V \) has a countable basis.
By Lemma~\ref{lem:appx:base-of-section}, \( \S^{(n)}V \) and \( \S V \) have orthogonal bases.
Under the orthogonality assumption, if \( V \) has a countable orthogonal dual basis, then so do \( \S^{(n)}V \) and \( \S V \).
To prove the theorem, it suffices to show that \( \S V \) is (isomorphic to) the underlying \( \srig \)-module of \( {!}V \).

By definition, \( \S^{(n)}V \) is a downward-closed sum-reflecting submodule of \( \SymTensor{V}{n} \).
By Lemmas~\ref{lem:appx:submodule} and \ref{lem:appx:tensor-dcsr},
\begin{equation*}
    \S^{(n)}V \otimes \S^{(m)}V \longrightarrow \S^{(n)}V \otimes \SymTensor{V}{m}
\end{equation*}
and
\begin{equation*}
    \S^{(n)}V \otimes \SymTensor{V}{m} \longrightarrow \SymTensor{V}{n} \otimes \SymTensor{V}{m}
\end{equation*}
are downward-closed sum-reflecting submodules.
Hence so is their composite
\begin{equation*}
    \S^{(n)}V \otimes \S^{(m)} V \longrightarrow \SymTensor{V}{n} \otimes \SymTensor{V}{m}.
\end{equation*}

Suppose that \( y \in \S^{(n + m)}V \).
Then there exists \( x \in {!}V \) such that \( y \le \hat{d}^{(n+m)}(x) \) in \( \SymTensor{V}{n+m} \).
Let \( \iota_{n,m} \colon \SymTensor{V}{n+m} \hookrightarrow \SymTensor{V}{n,m} \) be the canonical map coming from the universality of the equaliser \( \SymTensor{V}{n,m} \).
Let \( \chi_{n,m} \colon \SymTensor{V}{n} \otimes \SymTensor{V}{m} \hookrightarrow \SymTensor{V}{n,m} \) be the canonical map, which is downward-closed and sum-reflecting by Lemma~\ref{lem:appx:key-on-comultiplication}.
Then \( \iota_{m,n} \circ \hat{d}^{(n+m)} = \chi_{n,m} \circ (\hat{d}^{(n)} \otimes \hat{d}^{(m)}) \circ \mu \), where \( \mu \colon {!}V \longrightarrow {!}V \otimes {!}V \) is the comultiplication of the comonoid \( {!}V \).
We prove that \( \iota_{m,n}(y) \) belongs to \( \S^{(n)}V \otimes \S^{(m)}V \).
By Lemma~\ref{lem:module:tensor-peak}, there exist \( z, z' \in |{!}V| \) such that \( \mu(x) \le z \otimes z' \).
Hence
\begin{align*}
    \iota_{m,n}(y)
    &\le \iota_{m,n} \hat{d}^{(n+m)}(x) 
    \\
    &= \chi_{n,m} (\hat{d}^{(n)} \otimes \hat{d}^{(m)}) \mu(x)
    \\
    &\le \chi_{n,m}(\hat{d}^{(n)} \otimes \hat{d}^{(m)})(z \otimes z')
    \\
    &= \hat{d}^{(n)}(z) \otimes \hat{d}^{(m)}(z')
\end{align*}
holds in \( \SymTensor{V}{n,m} \).
Since \( \hat{d}^{(n)}(z) \otimes \hat{d}^{(m)}(z') \) belongs to \( \S^{(n)}V \otimes \S^{(m)}V \) and \( \S^{(n)}V \otimes \S^{(m)}V \hookrightarrow \SymTensor{V}{n} \otimes \SymTensor{V}{m} \hookrightarrow \SymTensor{V}{n,m}\) is downward-closed and sum-reflecting (as the composite of downward-closed sum-reflecting monomorphisms), \( \iota_{m,n}(y) \) also belongs to \( \S^{(n)}V \otimes \S^{(m)}V \).
Since \( \S^{(n)}V \otimes \S^{(m)}V \hookrightarrow \SymTensor{V}{n,m} \) is sum-reflecting, there exists a unique monomorphism \( \delta^{n,m} \colon \S^{(n+m)}V \longrightarrow \S^{(n)}V \otimes \S^{(m)}V \) that makes the diagram
\begin{equation*}
    \xymatrix{
        \S^{(n+m)}V \ar[d] \ar[rr]^{\delta^{n,m}} & & \S^{(n)}V \otimes \S^{(m)}V \ar[d] \\
        \SymTensor{V}{n+m} \ar[drr] & & \SymTensor{V}{n} \otimes \SymTensor{V}{m} \ar[d] \\
        & & \SymTensor{V}{n,m}
    }
\end{equation*}
commutative.
Since the outer heptagon and the lower pentagon in the diagram
\begin{equation*}
    \xymatrix{
        {!}V \ar[d]^{\hat{d}^{(n+m)}} \ar[rr]^{\mu} & & {!}V \otimes {!}V \ar[d]^{\hat{d}^{(n)} \otimes \hat{d}^{(m)}} \\ 
        \S^{(n+m)}V \ar[d] \ar[rr]^{\delta^{n,m}} & & \S^{(n)}V \otimes \S^{(m)}V \ar[d] \\
        \SymTensor{V}{n+m} \ar[drr] & & \SymTensor{V}{n} \otimes \SymTensor{V}{m} \ar[d] \\
        & & \SymTensor{V}{n,m}
    }
\end{equation*}
commutes and \( \S^{(n)} V \otimes \S^{(m)}V \hookrightarrow \SymTensor{V}{n} \otimes \SymTensor{V}{m} \hookrightarrow \SymTensor{V}{n,m} \) is a monomorphism, the upper square in the above diagram commutes.

Consider the map
\begin{equation*}
    \bar{\delta} \defe \sum_{n,m < \omega} \delta^{n,m} \circ \pi_{n+m}
    \quad \colon \S V \longrightarrow \CComp{\S V \otimes \S V},
\end{equation*}
where the overline means the completion.
Recall that \( \hat{d} = \sum_n \hat{d}^{(n)} \colon !V \longrightarrow \S V \hookrightarrow \prod_{n < \omega} \SymTensor{V}{n} \).
Since \( \hat{d}^{(n)} = \pi_n \circ \hat{d} \), the above commutative diagram shows that
\begin{equation*}
    \xymatrix{
        {!}V \ar[d]^{\hat{d}} \ar[rr]^{\mu} & & {!}V \otimes {!}V \ar[d]^{\hat{d} \otimes \hat{d}} \\ 
        \S V \ar[rr]^{\delta} & & \CComp{\S V \otimes \S V}
    }
\end{equation*}
commutes.
Since the image of \( \hat{d} \otimes \hat{d} \) is \( \S V \otimes \S V \), we have
\begin{equation*}
    \xymatrix{
        {!}V \ar[d]^{\hat{d}} \ar[rr]^{\mu} & & {!}V \otimes {!}V \ar[d]^{\hat{d} \otimes \hat{d}} \\ 
        \S V \ar[rr]^{\delta} & & \S V \otimes \S V.
    }
\end{equation*}

Then \( (\S V, \delta, \pi_0) \) is a comonoid and \( \hat{d} \) is a comonoid map.
The comonoid \( \S V \) is equipped with \( \pi_0 \colon \S V \longrightarrow V \), and \( \pi_1 \circ \hat{d} = d^{1} \colon !V \longrightarrow V \).
By the universality of \( !V \), we have a comonoid map in the other direction \( f \colon \S V \longrightarrow !V \) such that \( \pi_1 = f \circ d^{1} \).
Then the composite \( \hat{d} \circ f \colon \S V \longrightarrow \S V \) is the identity.
By the universality of \( !V \), \( f \circ \hat{d} \colon !V \longrightarrow !V \) is the identity.
Hence \( \hat{d}^{-1} = f \).

\subsection{Proof of Lemma~\ref{lem:classic:proj-closure}}
    By the same argument as the proof of Lemma~\ref{lem:classic:proj-closure}.

\subsection{Proof of Lemma~\ref{lem:classic:proj-idempotent}}
    Let \( \smod \in \DBMod[\srig] \) and \( \varphi \colon \smod \longrightarrow V \) and \( e \colon V \longrightarrow \smod \) be a retraction to \( V \in \SVec[\srig] \) (i.e.~\( e \circ \varphi = \ident_\smod \)).
    Consider the diagram
    \begin{equation*}
        \xymatrix{
            & V^{\bot\bot\bot\bot} \ar[dl]_{e^{\bot\bot\bot\bot}} \ar[r]^\cong & V^{\bot\bot} \\
            M^{\bot\bot\bot\bot} \ar@{=}[r]
            & M^{\bot\bot\bot\bot} \ar[u]_{\varphi^{\bot\bot\bot\bot}} \ar[r]
            & M^{\bot\bot} \ar[u]_{\varphi^{\bot\bot}}
        }
    \end{equation*}
    where the square commutes by the naturality and the triangle commutes by the functoriarity of \( ({-})^{\bot\bot\bot\bot} \) with \( e \circ \varphi = \ident \).
    Since \( V^{\bot\bot\bot\bot} \longrightarrow V^{\bot\bot} \) is an isomorphism by Lemma~\ref{lem:classic:proj-idempotent}, \( M^{\bot\bot\bot\bot} \longrightarrow M^{\bot\bot} \) is also an isomorphism.

 \section{On the relationship between $\DBMod[{[0,1]}]$ and $\CPCoh$}
\label{sec:appx:pcoh}
Let \( \smod \) be a \( [0,1] \)-module with a countable dual basis \( (e_i, \varphi_i)_{i \in I} \).
Assume \( e_i \neq 0 \) and \( \varphi_i \neq 0 \) for every \( i \).

\begin{lemma}
    Let \( \smod \) be a \( [0,1] \)-module with a dual basis \( (e_i, \varphi_i)_{i \in I} \).
    \( a + b = a + c \) in \( \smod \) implies \( b = c \).
\end{lemma}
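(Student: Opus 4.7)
The plan is to exploit the cancellativity of real number addition together with the reconstruction of elements from their coordinates.

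First I would apply each coordinate functional $\varphi_i \colon \smod \longrightarrow [0,1]$ to both sides of the equation $a + b = a + c$. Since $\varphi_i$ is $[0,1]$-linear and the sums $a + b$ and $a + c$ are defined in $\smod$, the values $\varphi_i(a+b)$ and $\varphi_i(a+c)$ are defined, and linearity forces the right-hand sides to be defined with
\begin{equation*}
    \varphi_i(a) + \varphi_i(b) \;=\; \varphi_i(a+b) \;=\; \varphi_i(a+c) \;=\; \varphi_i(a) + \varphi_i(c)
\end{equation*}
as real numbers in $[0,1]$. Since ordinary real-number addition is cancellative, this yields $\varphi_i(b) = \varphi_i(c)$ for every $i \in I$.

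Next I would invoke the dual basis identity for $b$ and $c$. By the definition of dual basis, $b = \sum_{i \in I} \varphi_i(b) \cdot e_i$ and $c = \sum_{i \in I} \varphi_i(c) \cdot e_i$, both sums being defined in $\smod$. Since $\varphi_i(b) = \varphi_i(c)$ for every $i$, the two sums are termwise equal, and hence $b = c$.

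The only subtlety I anticipate is the bookkeeping of definedness: one must verify that every sum in the argument actually converges in the appropriate partial algebra before equating it with something else. This is handled automatically here because $a+b$ and $a+c$ are assumed defined in $\smod$, linearity of $\varphi_i$ propagates definedness to $[0,1]$, and the dual basis expansions of $b$ and $c$ are defined by the very axiom characterising a dual basis. So no separate completion or sum-reflection argument is required, and the real-number cancellation step is the only genuinely non-formal ingredient.
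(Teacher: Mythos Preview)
Your proof is correct and essentially identical to the paper's: apply each $\varphi_i$, cancel in $[0,1]$ using cancellativity of real addition, and conclude $b=c$ from the dual basis reconstruction. The paper compresses the last step into ``Hence $b=c$'' without spelling out the dual basis expansion, but the content is the same.
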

\begin{proof}
    If \( a + b = a + c \), then \( \varphi_i(a) + \varphi_i(b) = \varphi_i(a) + \varphi_i(c) \) for every \( i \in I \).
    Because the sum of real numbers are cancellable, we have \( \varphi_i(b) = \varphi_i(c) \) for every \( i \in I \).
    Hence \( b = c \).
\end{proof}

A subset \( J \subseteq I \) of the indexes of a basis \( (e_i, \varphi_i)_{i \in I} \) of \( \smod \) is \emph{closed} if \( j \in J \) and \( \varphi_i(e_j) \neq 0 \) implies \( i \in J \).
That means the canonical representation \( e_j = \sum_i \varphi_i(e_j) \cdot e_i \) is actually a sum only using \( (e_j)_{j \in J} \).
\begin{lemma}
    Let \( \smod \) be a \( [0,1] \)-module with a dual basis \( (e_i, \varphi_i)_{i \in I} \).
    If \( J \subseteq I \) is closed, then so is \( I \setminus J \).
\end{lemma}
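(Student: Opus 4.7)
The plan is to introduce complementary partial-sum operators $p_J$ and $p_{I \setminus J}$, use closedness of $J$ to make $p_J$ idempotent, and then propagate the resulting ``block-zero'' structure to the other block by combining the preceding cancellation lemma with the positivity of sums in both $\Sigma$-monoids and $[0,1]$.

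For each $K \subseteq I$ I will define $p_K \colon \smod \to \smod$ by $p_K(x) \defe \sum_{i \in K} \varphi_i(x) \cdot e_i$. This is well-defined and $[0,1]$-linear, because the full sum $\sum_{i \in I} \varphi_i(x) \cdot e_i = x$ converges in $\smod$ and any sub-sum of a convergent sum converges by the $\Sigma$-monoid axioms. One then has $p_J + p_{I \setminus J} = \ident_\smod$, and closedness of $J$ forces $\varphi_k(e_i) = 0$ whenever $i \in J$ and $k \notin J$, which gives $p_J(e_i) = e_i$ for $i \in J$ and hence $p_J^2 = p_J$. Applying the cancellation lemma to $p_J = p_J \circ \ident_\smod = p_J^2 + p_J \circ p_{I \setminus J} = p_J + p_J \circ p_{I \setminus J}$ yields $p_J \circ p_{I \setminus J} = 0$. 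Writing $a_{i,j} \defe \varphi_i(e_j)$ and evaluating this identity at $e_k$ for a fixed $k \notin J$ produces
\begin{equation*}
  \sum_{\ell \in J} \Bigl( \sum_{i \notin J} a_{i,k}\,a_{\ell,i} \Bigr) \cdot e_\ell \;=\; 0 \quad\text{in } \smod.
\end{equation*}

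Next I will extract scalar identities via two layers of positivity. In any $\Sigma$-monoid, any element $x$ with $x \le 0$ equals $0$, so a convergent sum that equals $0$ has every summand equal to $0$; hence $\bigl(\sum_{i \notin J} a_{i,k}\,a_{\ell,i}\bigr) \cdot e_\ell = 0$ for each $\ell \in J$. Since $e_\ell \neq 0$ and every $\varphi_m$ is $[0,1]$-linear, $r \cdot e_\ell = 0$ implies $r \cdot a_{m,\ell} = 0$ in $[0,1]$ for every $m$; were $r \neq 0$, this would force $a_{m,\ell} = 0$ for all $m$ and hence $e_\ell = \sum_m a_{m,\ell}\, e_m = 0$, a contradiction. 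Thus $\sum_{i \notin J} a_{i,k}\,a_{\ell,i} = 0$ in $[0,1]$, and applying positivity in $[0,1]$ makes every product $a_{i,k}\,a_{\ell,i}$ vanish.

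The main obstacle will be the final dichotomy, which is where the nondegeneracy hypothesis $\varphi_i \neq 0$ enters essentially. Fix $\ell \in J$ and $i \notin J$; I want $a_{\ell,i} = 0$. If instead $a_{\ell,i} \neq 0$, the product identities force $a_{i,k} = 0$ for every $k \notin J$, and closedness of $J$ already gives $a_{i,k} = 0$ for every $k \in J$ (since $i \notin J$), so $\varphi_i(e_k) = 0$ for all $k \in I$. Then $\varphi_i(x) = \sum_k \varphi_k(x)\,a_{i,k} = 0$ for every $x \in \smod$, i.e.~$\varphi_i = 0$, contradicting the standing assumption. Hence $a_{\ell,i} = 0$, which is precisely the closedness of $I \setminus J$.
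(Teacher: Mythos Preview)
Your proof is correct and follows essentially the same strategy as the paper's: both use the cancellation lemma together with the nondegeneracy hypotheses $e_\ell \neq 0$ and $\varphi_i \neq 0$ to derive the block-zero identities $\varphi_i(e_k)\,\varphi_\ell(e_i) = 0$ (for $i,k \notin J$, $\ell \in J$) and then finish by the same dichotomy. Your packaging via the projection operators $p_J$ and $p_{I\setminus J}$ and the identity $p_J \circ p_{I\setminus J} = 0$ is a slightly more conceptual organisation of exactly the expansion the paper carries out by hand on $\sum_{k \in K}\varphi_k(e_i)\cdot e_k$.
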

\begin{proof}
    Let \( K = I \setminus J \) and \( \ell \in K \).
    Since \( \varphi_\ell \neq 0 \), there exists \( i \) such that \( \varphi_\ell(e_i) \neq 0 \).
    We have
    \begin{align*}
        e_i
        &=
        \sum_{k \in K} \varphi_k(e_{i_0}) \cdot e_k + \sum_{j \in J} \varphi_j(e_i) \cdot e_j.    
    \end{align*}
    Hence, for each \( k' \in K \),
    \begin{align*}
        \varphi_{k'}(e_i)
        &= \sum_{k \in K} \varphi_k(e_i) \varphi_{k'}(e_k) + \sum_{j \in J} \varphi_j(e_i) \varphi_{k'}(e_j)
        \\
        &= \sum_{k \in K} \varphi_k(e_i) \varphi_{k'}(e_k).
    \end{align*}
    Therefore
    \begin{align*}
        &
        \sum_{k \in K} \varphi_k(e_i) \cdot e_k
        \\ = &
        \sum_{k' \in K} \sum_{k \in K} \varphi_k(e_i) \varphi_{k'}(e_k) \cdot e_{k'} + \sum_{j' \in J} \sum_{k \in K} \varphi_k(e_i) \varphi_{j'}(e_k) \cdot e_{j'}
        \\ = &
        \sum_{k' \in K} \varphi_{k'}(e_i) \cdot e_{k'} + \sum_{j' \in J} \sum_{k \in K} \varphi_k(e_i) \varphi_{j'}(e_k) \cdot e_{j'}.
    \end{align*}
    By the previous lemma, \( \sum_{j' \in J} \sum_{k \in K} \varphi_k(e_i) \varphi_{j'}(e_k) \cdot e_{j'} = 0 \), i.e.~\( \varphi_k(e_i) \varphi_{j'}(e_k) = 0 \) for every \( j' \in J \) and \( k \in K \).
    Since \( \varphi_\ell(e_i) \neq 0 \) by the assumption, \( \varphi_{j'}(e_\ell) = 0 \) for every \( j' \in J \).
\end{proof}

\begin{lemma}
    Let \( \approx \) be a relation on \( I \) defined by \( i \approx j \) iff \( \varphi_j(e_i) \neq 0 \).
    Then \( \approx \) is an equivalence relation.
\end{lemma}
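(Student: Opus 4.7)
The plan is to prove the three defining properties of an equivalence relation in a specific order---transitivity first, then reflexivity, then symmetry---since each later step uses the earlier ones together with the previous closed-set lemma.

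Transitivity will come directly from the dual basis expansion. Applying $\varphi_\ell$ to the identity $e_i = \sum_k \varphi_k(e_i) \cdot e_k$ and using $[0,1]$-linearity will yield $\varphi_\ell(e_i) = \sum_k \varphi_k(e_i)\,\varphi_\ell(e_k)$ as a convergent sum of non-negative reals; such a sum is $0$ iff every summand is $0$. So if $i \approx j$ and $j \approx \ell$, the summand $\varphi_j(e_i)\,\varphi_\ell(e_j)$ (for $k=j$) is strictly positive, forcing $\varphi_\ell(e_i) \neq 0$, i.e., $i \approx \ell$.

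For reflexivity I will argue by contradiction: assuming $\varphi_i(e_i) = 0$, the same expansion gives $\sum_k \varphi_k(e_i)\,\varphi_i(e_k) = 0$, so $i \approx k$ always excludes $k \approx i$. Setting $J \defe \{i\} \cup \{k : i \approx k\}$, transitivity makes $J$ closed in the sense of the previous lemma, so that lemma renders its complement closed as well; this means every $\ell$ with $\ell \approx i$ must lie in $J$, hence equal $i$ or satisfy $i \approx \ell$. The former is excluded by $\varphi_i(e_i)=0$, and the latter by the dichotomy just derived. Therefore $\varphi_i(e_\ell) = 0$ for every $\ell$. But then, picking $x$ with $\varphi_i(x) \neq 0$ (which exists since $\varphi_i \neq 0$) and applying $\varphi_i$ linearly to $x = \sum_\ell \varphi_\ell(x) \cdot e_\ell$ would give $\varphi_i(x) = 0$, the desired contradiction.

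Symmetry will follow by the same closed-set technique applied to $J \defe \{j\} \cup \{k : j \approx k\}$: this set is closed by transitivity, the previous lemma makes predecessors of elements of $J$ lie in $J$, and since $i \approx j$ exhibits $i$ as a predecessor of $j \in J$, we get $i \in J$, so either $i = j$ (in which case $j \approx i$ reduces to the reflexivity just proved) or $j \approx i$ directly. The main thing to be careful about is the ordering and the re-use of the previous lemma as a ``two-sided closure'' tool: transitivity must come first so that the smallest successor-closed set containing a point collapses to one step, and only then can the previous lemma be invoked to convert successor-closure into the predecessor-closure used in the last two steps.
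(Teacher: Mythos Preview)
Your argument is correct. Both your proof and the paper's hinge on the same two tools---transitivity via the expansion \(\varphi_\ell(e_i)=\sum_k\varphi_k(e_i)\,\varphi_\ell(e_k)\), and the closed-complement lemma applied to the successor set of a point---but the ordering differs. The paper proves transitivity, then symmetry (arguing that \(\mathrm{supp}(i):=\{j:i\approx j\}\) is the minimal closed set through \(i\), so that if \(j\in\mathrm{supp}(i)\) but \(i\notin\mathrm{supp}(j)\) then \(\mathrm{supp}(i)\setminus\mathrm{supp}(j)\) would be a strictly smaller closed set through \(i\)), and only then derives reflexivity in one line: \(e_i\neq 0\) gives some \(j\) with \(i\approx j\), symmetry gives \(j\approx i\), transitivity gives \(i\approx i\). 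Your route---reflexivity by a separate contradiction first, then symmetry---is a bit longer but equally valid; note that in your symmetry step the case \(i=j\) actually needs no appeal to reflexivity, since the hypothesis \(i\approx j\) together with \(i=j\) already is the desired conclusion \(j\approx i\).
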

\begin{proof}
    It is easy to see that \( \approx \) is transitive.
    We prove that \( \approx \) is symmetric.
    Then \( \approx \) is a partial equivalence relation; it is an equivalence relation since \( e_i \neq 0 \), i.e.~\( \forall i. \exists j. i \approx j \).

    \newcommand{\support}{\mathord{\mathrm{supp}}}
    Let \( \support(i) := \{ j \in I \mid \varphi_j(e_i) \neq 0 \} \).
    Then \( i \approx j \) if and only if \( j \in \support(i) \).
    Note that \( \support(i) \) is closed for each \( i \in I \); actually it is the minimum closed subset containing \( i \).

    Assume that \( j \in \support(i) \).
    By the previous lemma, \( I \setminus \support(j) \) is also closed.
    Since the intersection of two closed set is closed, \( \support(i) \cap (I \setminus \support(j) = \support(i) \setminus \support(j) \) is closed.
    Since \( \emptyset \subsetneq \support(j) \subseteq \support(i) \), \( \support(i) \cap \support(j) \) is a proper subset of \( \support(i) \).
    If \( i \notin \support(j) \), then \( i \in (\support(i) \setminus \support(j) \), which contradicts the fact that \( \support(i) \) is the minimum closed set containing \(i\).
\end{proof}

Let us choose a set \( J \subseteq I \) of representatives of each equivalence class of \( \approx \).
\begin{lemma}
    Every element \( x \in |\smod| \) can be uniquely written as a sum \( x = \sum_{j \in J} r_j \cdot e_j \) of \( (e_j)_{j \in J} \) with positive real coefficients \( r_j \in [0,\infty) \).
\end{lemma}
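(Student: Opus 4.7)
The proof splits into uniqueness and existence.

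For uniqueness, suppose $\sum_{j \in J} r_j \cdot e_j = \sum_{j \in J} s_j \cdot e_j$ in $\smod$. For each $k \in J$ I apply $\varphi_k$. Writing $r_j = n_j + t_j$ with $n_j \in \mathbb{N}$ and $t_j \in [0,1)$, the formal expression $r_j \cdot e_j$ denotes $n_j$ copies of $e_j$ summed with $t_j \cdot e_j$ in $\smod$, and $\varphi_k$ (being a $\Sigma$-monoid homomorphism and $[0,1]$-linear) sends this to the real number $r_j\,\varphi_k(e_j)$. Because $J$ is a transversal of $\approx$, any distinct $j,k \in J$ satisfy $k \not\approx j$, whence $\varphi_k(e_j) = 0$; the equation collapses to $r_k \varphi_k(e_k) = s_k \varphi_k(e_k)$ in $\mathbb{R}_{\ge 0}$. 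Since $\varphi_k(e_k) > 0$ and addition of nonnegative reals is cancellative (the first preceding lemma), $r_k = s_k$.

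For existence, I define $r_j := \varphi_j(x)/\varphi_j(e_j) \in [0,\infty)$ for each $j \in J$. The plan is to verify the equation $x = \sum_{j \in J} r_j e_j$ inside the completion $\CComp{\smod}$, where all countable sums are defined, and then descend back to $\smod$ via Lemma~\ref{lem:module:completion-dcfull}. First, by the associativity axiom of $\Sigma$-monoids, partition the expansion $x = \sum_i \varphi_i(x) e_i$ along the equivalence classes of $\approx$, yielding $x = \sum_C x_C$ with $x_C := \sum_{i \in C} \varphi_i(x) e_i$. It then suffices to show $x_C = r_j e_j$ in $\CComp{\smod}$ for each class $C$ with representative $j$.

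The crux, and the main obstacle, is the proportionality $\varphi_j(e_j)\,\varphi_k(x) = \varphi_k(e_j)\,\varphi_j(x)$ for all $k \in C$ and $x \in \smod$, which on basis elements reduces to the rank-one identity $\varphi_j(e_j)\,\varphi_k(e_i) = \varphi_k(e_j)\,\varphi_j(e_i)$ for $k,i \in C$. The restricted matrix $(\varphi_k(e_i))_{k,i \in C}$ is idempotent (from $M = M^2$ together with the closedness of each class furnished by the preceding lemma on closed subsets) and has strictly positive entries (by definition of $\approx$). A Perron--Frobenius style argument then forces one-dimensional range---the largest eigenvalue of a positive irreducible operator is simple and, being forced to $1$ by idempotency, yields a one-dimensional $1$-eigenspace---or, in a perhaps cleaner module-theoretic guise, one argues directly that the submodules of $\CComp{\smod}$ generated under the canonical $[0,\infty)$-action by $e_i$ and $e_j$ coincide whenever $i \approx j$, yielding $e_i = (\varphi_j(e_i)/\varphi_j(e_j))\,e_j$ in $\CComp{\smod}$. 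Granted this identity, $\varphi_k(r_j e_j) = r_j \varphi_k(e_j) = \varphi_k(x) = \varphi_k(x_C)$ for every $k \in I$ (trivially for $k \notin C$); since the dual basis separates points of $\CComp{\smod}$ via the canonical expansion, $x_C = r_j e_j$. Summing over classes produces $x = \sum_j r_j e_j$ in $\CComp{\smod}$; downward-closedness of $\smod \hookrightarrow \CComp{\smod}$ places each $r_j e_j \le x$ inside $\smod$, and sum-reflection then transfers the equation back to $\smod$.
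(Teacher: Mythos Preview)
Your uniqueness argument is correct and essentially the same as the paper's.

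Your existence argument takes a genuinely different route and has a real gap. The crux, as you say, is the rank-one identity $\varphi_j(e_j)\varphi_k(e_i)=\varphi_k(e_j)\varphi_j(e_i)$ for $k,i$ in a common class $C$. Your justification is a Perron--Frobenius appeal: a strictly positive idempotent matrix has simple top eigenvalue $1$, hence rank one. This is fine when $C$ is finite, but nothing in the setup forces $C$ to be finite, and classical Perron--Frobenius does not apply to infinite positive matrices without additional hypotheses (compactness, irreducibility in a suitable operator-theoretic sense, etc.). The alternative you sketch---that the sub-$[0,\infty)$-modules of $\CComp{\smod}$ generated by $e_i$ and $e_j$ coincide---is asserted, not argued; it is essentially a restatement of the conclusion. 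A second, smaller issue: you invoke ``the dual basis separates points of $\CComp{\smod}$'', but the $\varphi_i$ land in $[0,1]$, which is not complete, so their extensions go to $\CComp{[0,1]}$ and separation there needs its own argument. (This particular step can in fact be done entirely inside $\smod$ once the rank-one identity is in hand, by expanding $x_C=\sum_k \varphi_k(x_C)e_k$ and regrouping, so it is not fatal---but as written it is an unjustified step.)

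The paper avoids the rank-one question altogether by an iterative scheme: split $\ident = f + g$ with $g(x)=\sum_{j\in J}\varphi_j(x)e_j$ and $f(x)=\sum_{i\notin J}\varphi_i(x)e_i$, so that $x=f^n(x)+\sum_{k<n} g(f^k(x))$, and then computes directly that $\varphi_j(f(x))=(1-\varphi_j(e_j))\varphi_j(x)$ for $j\in J$ (using only that distinct representatives $j,j'\in J$ satisfy $\varphi_j(e_{j'})=0$). Hence $\varphi_j(f^n(x))=(1-\varphi_j(e_j))^n\varphi_j(x)\to 0$, and a bounding inequality pushes this to all $\varphi_i$. The coefficient that emerges is the geometric series $\sum_n(1-\varphi_j(e_j))^n\varphi_j(x)=\varphi_j(x)/\varphi_j(e_j)$, matching your $r_j$. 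This argument never needs the rank-one structure of a single class, never leaves $\smod$, and never invokes spectral theory; it works uniformly whether $C$ is finite or infinite. If you want to rescue your approach, you would need an honest proof that a strictly positive countably-indexed idempotent matrix (with the convergence constraints forced by membership in $\smod$) has rank one---which, incidentally, \emph{follows} from the paper's lemma applied to $x=e_i$, so it is true, but your argument as written does not establish it independently.
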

\begin{proof}
    We first prove the uniqueness of the representation.
    Let \( (s_{j,k})_{k < \omega} \) be a family of \( [0,1] \) such that \( r_j = \sum_k s_{j,k} \).
    Then \( x = \sum_{j \in J} \sum_{k < \omega} s_{j,k} \cdot e_j \).
    Let \( i \in I \) and \( \ell \in J \) be the unique index such that \( \varphi_i(e_\ell) \neq 0 \).
    Then
    \begin{align*}
        \varphi_i(x)
        &= \sum_{j \in J} \sum_{k < \omega} s_{j,k} \cdot \varphi_i(e_j)
        \\
        &= \sum_{k < \omega} s_{\ell,k} \varphi_i(e_\ell)
        \\
        &= r_\ell \varphi_i(e_\ell).
    \end{align*}
    Since \( (\varphi_i)_{i \in I} \) is jointly monic, the representation is unique.
    
    We show the existence.
    Let \( f,g \colon \smod \longrightarrow \smod \) be a \( [0,1] \)-linear map defined by
    \( f(x) = \sum_{i \notin J} \varphi_i(x) \cdot e_i \)
    and \( g(x) = \sum_{j \in J} \varphi_j(x) \cdot e_j \).
    They are well-defined since they are partial sums of \( \sum_{i \notin I} \varphi_i(x) \cdot e_i = x \).
    Then \( x = f(x) + g(x) \) and thus \( x = f^n(x) + \sum_{k < n} g(f^n(x)) \) for each \( n \).
    We show that \( \lim_{n \to \infty} \varphi_i(f^n(x)) = 0 \) and thus \( x = \sum_{n < \omega} g(f^n(x)) \), i.e.~\( x = \sum_{j \in J} \sum_{n < \omega} \varphi_j(f^n(x)) \cdot e_j \).
    Then \( \sum_{n < \omega} \varphi_j(f^n(x)) \) converges in \( \mathbb{R} \); otherwise \( \varphi_j(x) = \sum_{n < \omega} \varphi_j(f^n(x)) \varphi_j(e_j) \) does not converge in \( [0,1] \). 

    Let \( j \in J \).
    Then \( \varphi_j(x) = \varphi_j(f(x)) + \varphi_j(g(x)) = \varphi_j(f(x)) + \sum_{j' \in J} \varphi_{j'}(x) \varphi_j(e_{j'}) = \varphi_j(f(x)) + \varphi_j(x) \varphi_j(e_j) \).
    Hence \( \varphi_j(f(x)) = (1 - \varphi_j(e_j)) \varphi_j(x) \) and thus \( \varphi_j(f^n(x)) = (1- \varphi_j(e_j))^n \varphi_j(x) \).
    Since \( j \approx j \), \( \varphi_j(e_j) \neq 0 \) and thus \( (1-\varphi_j(e_j))^n \longrightarrow 0 \) (\( n \longrightarrow \infty \)).

    For each \( i \in I \), let \( j \in J \) be the representative, i.e.~\( i \approx j \).
    Then \( \varphi_j(x) = \varphi_j(\sum_{i' \approx j} \varphi_{i'}(x) \cdot e_{i'}) = \sum_{i' \approx j} \varphi_{i'}(x) \varphi_{j}(e_{i'}) \ge \varphi_{i}(x) \varphi_j(e_{i}) \).
    Since \( \varphi_j(e_i) > 0 \) and \( \varphi_j(f^n(x)) \longrightarrow 0 \) (\( n \longrightarrow \infty \)), we conclude that \( \varphi_i(f^n(x)) \longrightarrow 0 \) (\( n \longrightarrow \infty \)).
\end{proof}

\begin{theorem}
    A \( [0,1] \)-module \( \smod \) with a countable dual basis is isomorphic to a downward-closed full-subalgebra \( X \hookrightarrow \prod_{i \in I} \mathbb{R}_{\ge 0} \) for some countable set \( I \) such that \( \pi_i(X) = \{ \pi_i(x) \mid x \in X \} \) is bounded for every \( i \in I \).
\end{theorem}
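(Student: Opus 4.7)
The plan is to convert the preceding lemma into a concrete embedding. Without loss of generality assume $e_i\neq 0$ and $\varphi_i\neq 0$ for every $i$, and let $J\subseteq I$ be a set of representatives of the equivalence classes of $\approx$. By the preceding lemma every $x\in|\smod|$ admits a unique decomposition $x=\sum_{j\in J}r_j\cdot e_j$ with $r_j\in[0,\infty)$, so setting $\Phi(x):=(r_j)_{j\in J}$ defines a function $\Phi\colon|\smod|\longrightarrow\prod_{j\in J}\mathbb{R}_{\ge 0}$. Injectivity is the uniqueness clause. Linearity under the $[0,1]$-action is immediate from $\alpha\cdot x=\sum_j(\alpha r_j)\cdot e_j$. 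Preservation of countable sums is obtained by applying each coordinate map $\varphi_{j_0}$ to a converging sum $x=\sum_\ell x^{(\ell)}$ in $\smod$ and invoking the identity $\varphi_{j_0}(w)=(\pi_{j_0}\Phi(w))\cdot\varphi_{j_0}(e_{j_0})$ extracted from the uniqueness calculation in the preceding lemma, then dividing by $\varphi_{j_0}(e_{j_0})>0$.

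Set $X:=\Phi(|\smod|)$. The same identity $\varphi_j(x)=r_j\,\varphi_j(e_j)$, combined with $\varphi_j(x)\in[0,1]$ and $\varphi_j(e_j)>0$ (since $j\approx j$), yields $\pi_j(X)\subseteq[0,\,1/\varphi_j(e_j)]$, so each coordinate of $X$ is bounded.

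The main work — and the only genuine obstacle — is showing that $X$ is downward-closed and that the inclusion $X\hookrightarrow\prod_{j\in J}\mathbb{R}_{\ge 0}$ is sum-reflecting (so that $\Phi$ is an isomorphism of $[0,1]$-modules onto its image, making $X$ a full subalgebra). For downward-closedness, suppose $(t_j)\le(r_j)=\Phi(x)$ coordinate-wise, and set $\alpha_j:=t_j/r_j\in[0,1]$ (with $\alpha_j:=0$ if $r_j=0$). Decompose $x=\sum_{j\in J} y_j$ in $\smod$ where $y_j:=\sum_k s_{j,k}\cdot e_j$ with $s_{j,k}\in[0,1]$ and $r_j=\sum_k s_{j,k}$; the choice of $J$ as representatives forces $\varphi_\ell(y_j)=r_j\,\varphi_j(e_j)$ if $j=\ell$ and $0$ otherwise. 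From $\alpha_j\cdot y_j+(1-\alpha_j)\cdot y_j=y_j$ and the reordering axiom for $\Sigma$-monoids, the sum $z:=\sum_{j\in J}\alpha_j\cdot y_j$ converges in the completion $\CComp{\smod}$ and satisfies $z+\sum_{j\in J}(1-\alpha_j)\cdot y_j=x$, so $z\le x$ in $\CComp{\smod}$. By Lemma~\ref{lem:module:completion-dcfull} the element $z$ already lies in $|\smod|$ and the sum converges there, and a direct coordinate calculation using the vanishing of $\varphi_\ell(y_j)$ for $j\neq\ell$ gives $\Phi(z)=(t_j)$.

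Sum-reflection follows by a parallel argument: if $\sum_\ell\Phi(x^{(\ell)})$ converges in $\prod_{j\in J}\mathbb{R}_{\ge 0}$ to an element of $X$, the matching sum converges in $\CComp{\smod}$ and its value is bounded above (in $\CComp{\smod}$) by a preimage in $|\smod|$, hence lies in $|\smod|$ by Lemma~\ref{lem:module:completion-dcfull}. The hard part throughout is the passage between numerical sums in $\mathbb{R}_{\ge 0}$ and convergence in the partial algebra $\smod$, which the completion lemma handles uniformly; once this bridging is in place, every remaining verification is a direct computation in the chosen basis.
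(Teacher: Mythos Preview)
The paper states this theorem without proof, as an immediate consequence of the preceding lemmas in the same appendix section (cancellation, the equivalence relation $\approx$ on $I$, and the unique-representation lemma). Your approach---define $\Phi(x)=(r_j)_{j\in J}$ via the unique representation, read off boundedness from $\varphi_j(x)=r_j\,\varphi_j(e_j)\le 1$, and use the completion $\CComp{\smod}$ together with Lemma~\ref{lem:module:completion-dcfull} to establish downward-closedness and sum-reflection---is exactly the intended route and is essentially correct.

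One step deserves a sharper argument than ``parallel to downward-closedness'': showing that $z:=\sum_\ell x^{(\ell)}$ in $\CComp{\smod}$ is bounded above by $y$. In the downward-closedness case you had the explicit splitting $y_j=\alpha_j y_j+(1-\alpha_j)y_j$; here no such splitting is given a priori. The fix is short: decompose each $x^{(\ell)}=\sum_j x_j^{(\ell)}$ with $\Phi(x_j^{(\ell)})=r_j^{(\ell)}\mathbf e_j$. By injectivity of $\Phi$ one has $x_j^{(\ell)}=(r_j^{(\ell)}/r_j)\cdot y_j$ whenever $r_j\neq 0$ (and $x_j^{(\ell)}=0=y_j$ when $r_j=0$). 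Since $\sum_\ell r_j^{(\ell)}/r_j=1$ converges in $[0,1]$, bilinearity of the action gives $\sum_\ell x_j^{(\ell)}=y_j$ already in $\smod$. Rearranging in $\CComp{\smod}$ then yields $z=\sum_j y_j=y$ exactly, and sum-reflection of $\smod\hookrightarrow\CComp{\smod}$ pulls the equality back to $\smod$. With this detail filled in, your proof is complete.
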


Let us briefly discuss the relationship between the exponential modality in \cite{Danos2011} and our modality, both of which are of the form \( (\S\smod)^{\bot\bot} \) for some \( \S \).
A difference can be found in the definition of \( \S\smod \).
In the former,
\begin{equation*}
    \S\smod = \{ !x \in |\prod_{n} \SymTensor{\smod}{n}| \mid x \in |\smod| \}, 
\end{equation*}
where \( !x = \sum_n \overbrace{x \otimes \dots \otimes x}^n \), whereas in the latter,
\begin{equation*}
    \S'\smod = \{ y \in |\prod_{n} \SymTensor{\smod}{n}| \mid x \in |!\smod| \}. 
\end{equation*}
Since \( !x \) can be seen as an element of \( !\smod \) for every \( x \in |\smod| \), the latter is lager.
Hence there is a natural \( (\S \smod)^{\bot\bot} \longrightarrow (\S' \smod)^{\bot\bot} \), but the other direction is unclear.

The point here is that the exponential modality \( (\S\smod)^{\bot\bot} \) of \cite{Danos2011} is actually cofree~\cite{Abou-Saleh2013} in \( \CPCoh \).
Since \( (\S' \smod)^{\bot\bot} \) has an induced comonoid structure, there exists a canonical comonoid map \( (\S' \smod)^{\bot\bot} \longrightarrow (\S \smod)^{\bot\bot} \).
Then \( (\S' \smod)^{\bot\bot} \longrightarrow (\S \smod)^{\bot\bot} \longrightarrow (\S \smod)^{\bot\bot} \longrightarrow (\S' \smod)^{\bot\bot} \) is identity, and hence \( (\S' \smod)^{\bot\bot} \) and \( (\S \smod)^{\bot\bot} \) are isomorphic by the cofreeness of \( (\S \smod)^{\bot\bot} \).

\end{document}